\newtheorem{thm}{Theorem}
\newtheorem*{thm*}{Theorem}
\newtheorem{lem}{Lemma}
\newtheorem{cor}{Corollary}
\newcommand{\EE}{ \mathbb{E} } 
\newcommand{\R}{ \mathbb{R} } 
\newcommand{\SSS}{ \mathbb{S} } 
\newcommand{\EEb}[1]{ \EE \left[ {#1} \right]}
\newcommand{\EEd}[2]{ \EEb{ \| {#1} - {#2} \|^2} }
\newcommand{\tr}[1]{ \mathrm{Tr}\left\{ {#1} \right\} }
\newcommand{\Ker}{\mathrm{Ker}}
\title{A Theory of the Distortion-Perception Tradeoff in Wasserstein Space}
\author{%
  Dror Freirich \\
  %Viterbi Faculty\\
  %Technion\\
  \texttt{drorfrc@gmail.com} \\
  % examples of more authors
  \And
  Tomer Michaeli \\
  %The Andrew and Erna Viterbi Faculty of Electrical \& Computer Engineering, Technion \\
  \texttt{tomer.m@ee.technion.ac.il} \\
  % \AND
  % Coauthor \\
  % Affiliation \\
  % Address \\
  % \texttt{email} \\
  \AND
  Ron Meir \\
  \texttt{rmeir@ee.technion.ac.il} \\
  The Andrew and Erna Viterbi Faculty of Electrical \& Computer Engineering, \\
  Technion \\
  % \And
  % Coauthor \\
  % Affiliation \\
  % Address \\
  % \texttt{email} \\
}
\begin{document}
\setcitestyle{numbers}

\maketitle

\begin{abstract}
The lower the distortion of an estimator, the more the distribution of its outputs generally deviates from the distribution of the signals it attempts to estimate. This phenomenon, known as the perception-distortion tradeoff, has captured significant attention in image restoration, where it implies that fidelity to ground truth images comes at the expense of perceptual quality (deviation from statistics of natural images). However, despite the increasing popularity of performing comparisons on the perception-distortion plane, there remains an important open question: \emph{what is the minimal distortion that can be achieved under a given perception constraint?} In this paper, we derive a closed form expression for this distortion-perception (DP) function for the mean squared-error (MSE) distortion and the Wasserstein-2 perception index. We prove that the DP function is always quadratic, regardless of the underlying distribution. This stems from the fact that estimators on the DP curve form a geodesic in Wasserstein space. In the Gaussian setting, we further provide a closed form expression for such estimators. For general distributions, we show how these estimators can be constructed from the estimators at the two extremes of the tradeoff: The global MSE minimizer, and a  minimizer of the MSE under a perfect perceptual quality constraint. The latter can be obtained as a stochastic transformation of the former.

\end{abstract}

\section{Introduction}
	
		Image restoration covers some fundamental settings in image processing such as denoising, deblurring and super-resolution. Over the past few years, image restoration methods have demonstrated impressive improvements in both visual quality and distortion measures such as peak signal-to-noise ratio (PSNR) and structural similarity index (SSIM) \citep{wang2004image}. It was noticed, however, that improvement in accuracy, as measured by distortion, does not necessarily lead to improvement in visual
		quality, referred to as perceptual quality. Furthermore, the lower the distortion of an estimator, the more the distribution of its outputs generally deviates from the distribution of the signals it attempts to estimate. 
		This phenomenon, known as the \emph{perception-distortion tradeoff} \citep{blau2018perception}, has captured significant attention, where it implies that faithfulness to ground truth images comes at the expense of perceptual quality, namely the deviation from statistics of natural images. Several works have extended the perception-distortion tradeoff to settings such as lossy compression \citep{blau2019rethinking} and classification \citep{liu2019classification}.
		
	Despite the increasing popularity of performing comparisons on the perception-distortion plane, the exact characterization of the minimal distortion that can be achieved under a given perception constraint remains an important open question.
	Although \citet{blau2018perception} investigated the basic properties of this \emph{distortion-perception function}, such as monotonicity and convexity, little is known about its precise nature. While a general answer to this question is unavailable, in this paper, we derive a closed form expression for the distortion-perception (DP) function for the mean squared-error (MSE) distortion and the Wasserstein-$2$ perception index. 
		
	Our main contributions are:  
		\emph{(i)} We prove that the DP function is \emph{always} quadratic in the perception constraint $P$,
		regardless of the underlying distribution (Theorem \ref{Thm:=00005Bthe-Distortion-Perception-funct}).
			\emph{(ii)} We show that it is possible to construct estimators on the DP curve from the estimators at the two extremes of the tradeoff (Theorem~\ref{Thm::extrapol}): The one that globally minimizes the MSE, and a minimizer of the MSE under a perfect perceptual quality constraint. The latter can be obtained as a stochastic transformation of the former.
			\emph{(iii)} In the Gaussian setting, we further provide a closed form expression for optimal estimators and for the corresponding DP curve (Theorems \ref{thm:Gaussian1} and \ref{thm:Gaussian_not_unique}).
			We show this Gaussian DP curve is a lower bound on the DP curve of any distribution having the same second order statistics. Finally, we illustrate our results, numerically and visually, in a super-resolution setting in Section \ref{sec::numerical}. The proofs of all the theorems in the main text are provided in Appendix \ref{APPsec::proofs}.

Our theoretical results shed light on several topics that are subject to much practical activity. Particularly, in the domain of image restoration, numerous works target perceptual quality rather than distortion (\emph{e.g.} \citep{wang2018esrgan,lim2017enhanced,ledig2017photo}). However, it has recently been recognized that generating a single reconstructed image often does not convey to the user the inherent ambiguity in the problem. Therefore, many recent works target \emph{diverse} perceptual image reconstruction, by employing randomization among possible restorations \citep{lugmayr2020srflow,bahat2020explorable,prakash2021removing,abid2021generative}. Commonly, such works perform sampling from the posterior distribution of natural images given the degraded input image. This is done \emph{e.g.}~using priors over image patches \citep{friedman2021posterior}, conditional generative models \citep{ohayon2021high,prakash2020divnoising}, or implicit priors induced by deep denoiser networks \citep{kawar2021stochastic}. Theoretically, posterior sampling leads to perfect perceptual quality (the restored outputs are distributed like the prior). However, a fundamental question is whether this is optimal in terms of distortion. As we show in Section~\ref{sec::The MSE--Wasserstein-2 tradeoff}, posterior sampling is often not an optimal strategy, in the sense that there often exist perfect perceptual quality estimators that achieve lower distortion.

Another topic of practical interest, is the ability to \emph{traverse the distortion-perception tradeoff} at test time, without having to train a different model for each working point. Recently, interpolation has been suggested for controlling several objectives at test-time. \citet{shoshan2019dynamic} propose using interpolation in some latent space in order to approximate intermediate objectives. \citet{wang2018esrgan} use per-pixel interpolation for balancing perceptual quality and fidelity. Studies of network parameter interpolation are presented by \citet{wang2018esrgan,wang2019deep}.  \citet{deng2018enhancing} produces a low distortion reconstruction and a high perceptual quality one, and then uses style transfer to combine them. An important question, therefore, is which strategy is optimal. In Section~\ref{sec::optimalestimators} we show that for the MSE--Wasserstein-2  tradeoff, linear interpolation leads to optimal estimators. We also discuss a geometric connection between interpolation and the fact that estimators on the DP curve form a geodesic in Wasserstein space.

	\section{Problem setting and preliminaries}
	\label{Sec::D-P::optimalTransport}

	\subsection{The distortion-perception tradeoff}
	Let $X,Y$ be random vectors taking values in $\R^{n_{x}}$ and $\R^{n_{y}}$, respectively. We consider the problem of constructing an estimator $\hat{X}$ of $X$ based on $Y$. Namely, we are interested in determining a conditional distribution $p_{\hat{X}|Y}$ such that $\hat{X}$ constitutes a good estimate of $X$.
	
	In many practical cases, the goodness of an estimator is associated with two factors: (i) the degree to which $\hat{X}$ is close to $X$ on average (low distortion), and (ii) the degree to which the distribution of $\hat{X}$ is close to that of $X$ (good perceptual quality). An important question, then, is \emph{what is the minimal distortion that can be achieved under a given level of perceptual quality?} and \emph{how can we construct estimators that achieve this lower bound?}
	In mathematical language, we are interested in analyzing the distortion-perception (DP) function (defined similarly to the perception-distortion function of \cite{blau2018perception})
	\begin{equation}
		D(P)=\min_{p_{\hat{X}|Y}}\left\{\EE[d(X,\hat{X})] \;:\; d_{p}(p_X,p_{\hat{X}})\leq P\right\}.\label{eq:D_P::General_definition}
	\end{equation}
	Here, $d:\R^{n_x}\times\R^{n_x}\rightarrow \R^{+}\cup\{0\}$ is some distortion criterion, $d_{p}(\cdot,\cdot)$ is some divergence between probability measures, and $p_{\hat{X}}$ is the probability measure on $\R^{n_x}$ induced by $p_{\hat{X}|Y}$ and $p_Y$. We assume that $\hat{X}$ is independent of $X$ given $Y$.
	
	As discussed in \citep{blau2018perception}, the function $D(P)$ is monotonically non-increasing and is convex whenever $d_{p}(\cdot,\cdot)$ is convex in its second argument (which is the case for most popular divergences). However, without further concrete assumptions on the distortion measure $d(\cdot,\cdot)$ and the perception index $d_p(\cdot,\cdot)$, little can be said about the precise nature of $D(P)$.

    Here, we focus our attention on the squared-error distortion $d(x,\hat{x})=\|x-\hat{x}\|^{2}$ and the Wasserstein-2 distance $d_p(p_X,p_{\hat{X}})=W_2(p_X,p_{\hat{X}})$, with which \eqref{eq:D_P::General_definition} reads
	\begin{equation}
		D(P)=\min_{p_{\hat{X}|Y}}\left\{\EE[\|X-\hat{X}\|^2] \;:\; W_2(p_X,p_{\hat{X}})\leq P\right\}.
		\label{eq:MSE::D_P::Definition}
	\end{equation}
    Throughout this paper we assume that all distributions have finite first and second moments. In addition, from Theorem \ref{Thm::extrapol} below it will follow that the minimum is indeed attained, so that \eqref{eq:MSE::D_P::Definition} is well defined. It is well known that the estimator minimizing the mean squared error (MSE) without any constraints, is given by $X^{*}=\EE[X|Y]$. This implies that $D(P)$ monotonically decreases until $P$ reaches 
    $P^*\triangleq W_2(p_X,p_{X^*})$,
    beyond which point $D(P)$ takes the constant value $D^*\triangleq \EE[\|X-X^{*}\|^{2}]$.
     This is illustrated in Fig.~\ref{fig:DP_func}. 
     It is also known that in this case $D(0)\leq 2D^*$ since the posterior sampling estimator $p_{\hat{X}|Y}=p_{X|Y}$ achieves $W_2(p_X,p_{\hat{X}})=0$ and $\EE[\|X-\hat{X}\|^2]=2D^*$ \citep{blau2018perception}. However, apart for these rather general properties, the precise shape of the DP curve has not been determined to date, and neither have the estimators that achieve the optimum in \eqref{eq:MSE::D_P::Definition}. This is our goal in this paper.

\begin{figure}
		\centering
		\includegraphics[width=0.44\linewidth]{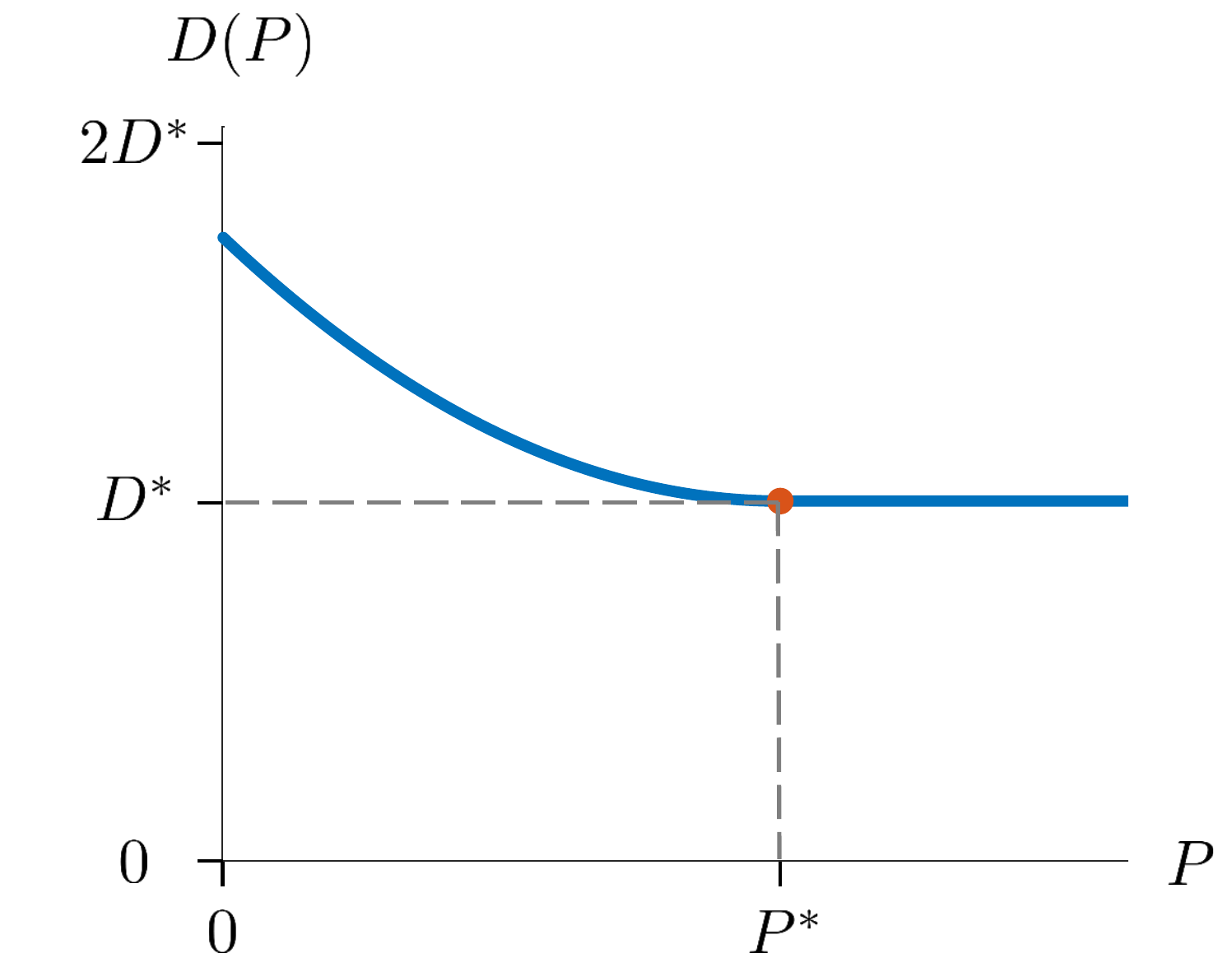} \includegraphics[bb=39bp 62bp 420bp 300bp,clip,scale=0.75,width=0.55\linewidth]{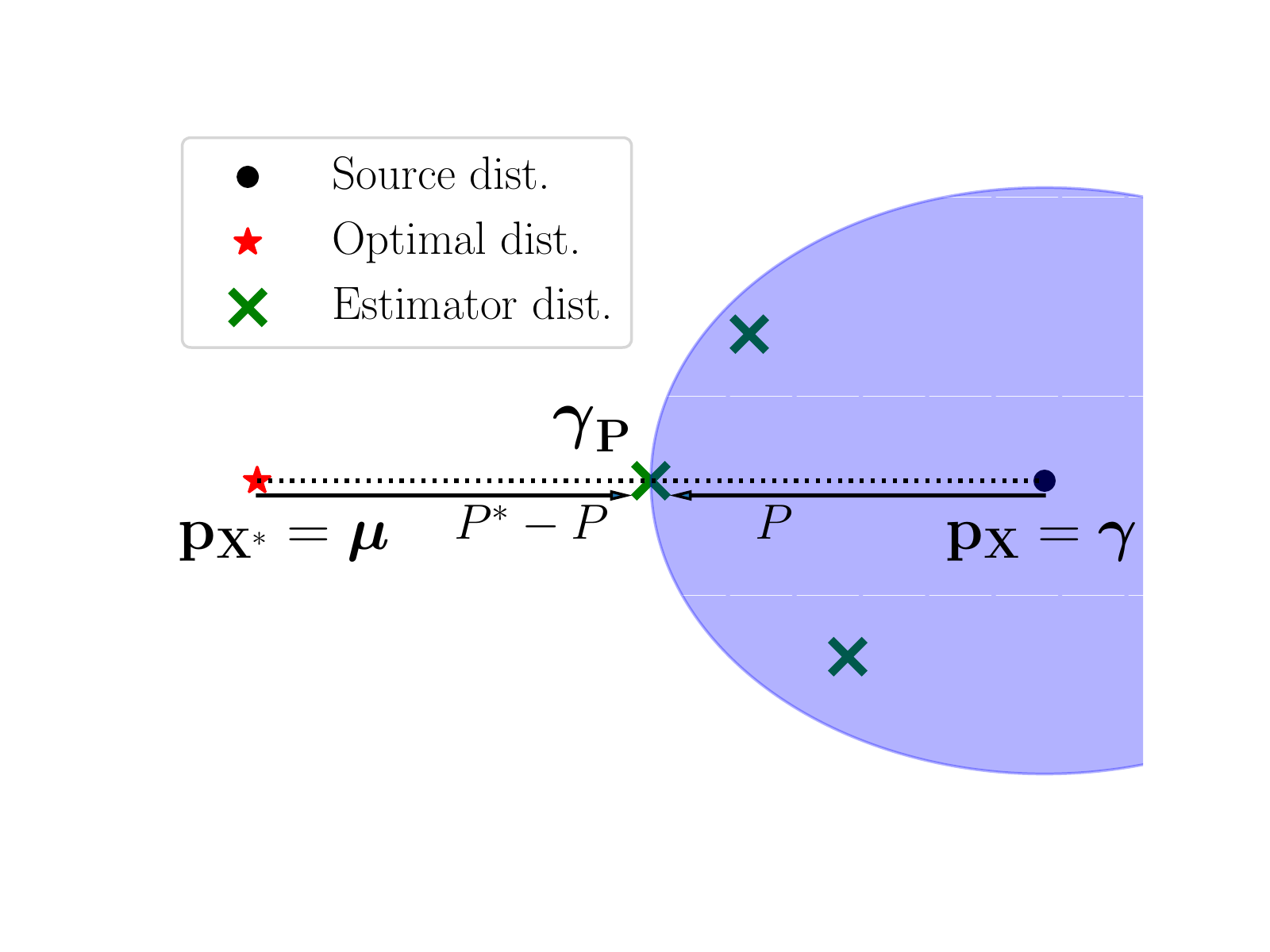}
		
		\caption{\textbf{Left: The distortion-perception function.} When using the MSE distortion and the Wasserstein-2 perception index, the minimal possible distortion, $D^*$, is achieved by the estimator $X^{*}=E[X|Y]$. The perception index attained by this estimator is $P^*$. At the other extreme of the tradeoff, we know that the distortion at $P=0$ is bounded from above by $2D^*$.\label{fig:DP_func} \textbf{Right:} \label{fig:geodesic} The minimal distortion $D(P)$ for a given perception
	index $P<P^*$ can be achieved by an estimator with a distribution $\gamma_{P}$
	lying on a straight line (or geodesic) defined by the geometry of
	the probabilities space.
	Given $P$, $\gamma_{P}$ achieves $W_{2}(p_X,\gamma_{P})=P$ and $W_{2}(p_{X^*},\gamma_{P})=P^*-P$,
	hence $D(P)=D^*+W_2^2(p_{X^*},\gamma_{P})=D^*+(P^*-P)^{2}.$}
	\end{figure}
	
	\subsection{The Wasserstein and Gelbrich Distances}
	
	Before we present our main results, we briefly survey a few properties of the Wasserstein distance, mostly taken from \citep{panaretos2020invitation}. The Wasserstein-$p$ ($p\geq 1$) distance between measures $\mu$ and $\gamma$ on a separable Banach space $\mathcal{X}$ with norm $\| \cdot \|$ is defined by
	\begin{equation}\label{eq:Wp_def}
	W_{p}^{p}(\mu,\gamma)\triangleq\inf\left\{ \EE_{(U,V)\sim\nu}[\|U-V\|^{p}]\;:\;\nu\in\Pi(\mu,\gamma)\right\},
	\end{equation}
	where $\Pi(\mu,\gamma)$ is the set of all probabilities on $\mathcal{X}\times\mathcal{X}$ with marginals $\mu$ and $\gamma$. A joint probability $\nu$ achieving the optimum in \eqref{eq:Wp_def} is often referred to as \emph{optimal plan}. The Wasserstein space of probability measures is defined as
	\[
	\mathcal{W}_{p}(\mathcal{X})\triangleq\left\{\gamma:\int_{\mathcal{X}}\|x\|^{p}d\gamma<\infty\right\},
	\]
	and $W_p$ constitutes a metric on $\mathcal{W}_{p}(\mathcal{X})$.

    For any $(m_{1},\Sigma_{1}),(m_{2},\Sigma_{2})\in \R^{d}\times \SSS_{+}^{d}$
	(where $\SSS_{+}^{d}$ is the set of symmetric positive semidefinite matrices in $\R^{d\times d})$, the Gelbrich distance is defined as
	\begin{equation}
	G^{2}((m_{1},\Sigma_{1}),(m_{2},\Sigma_{2}))\triangleq\Vert m_{1}-m_{2}\Vert_{2}^{2}+\mathrm{Tr}\left\{\Sigma_{1}+\Sigma_{2}-2\left(\Sigma_{1}^{\frac{1}{2}}\Sigma_{2}\Sigma_{1}^{\frac{1}{2}}\right)^{\frac{1}{2}}\right\}.\label{eq:Gelbrich_dist}
	\end{equation}
	The root of a PSD matrix is always taken to be PSD. For any two probability measures $\mu_{1},\mu_{2}$ on $\R^{d}$ with means and covariances $(m_{1},\Sigma_{1}),(m_{2},\Sigma_{2})$, from \citep[Thm. 2.1]{gelbrich1990formula} we have that 
	\begin{equation}\label{eq:W2_greater_G2}
	W_{2}^{2}(\mu_{1},\mu_{2})\geq G^{2}((m_{1},\Sigma_{1}),(m_{2},\Sigma_{2})).
	\end{equation}
	
	When $\mu_1=\mathcal{N}(m_{1},\Sigma_{1})$ and $\mu_2=\mathcal{N}(m_{2},\Sigma_{2})$ are Gaussian distributions on $\R^{d}$, we have that $W_{2}(\mu_{1},\mu_{2})= G((m_{1},\Sigma_{1}),(m_{2},\Sigma_{2}))$.
	This equality is obvious for non-singular measures but is true for any two Gaussian distributions \citep[p.~18]{panaretos2020invitation}.
	If $\Sigma_1$ and $\Sigma_2$ are non-singular, then the distribution attaining the optimum in \eqref{eq:Wp_def} corresponds to
	\begin{equation}\label{eq:optimal_dist_for_gauss}
	U\sim \mathcal{N}(m_1,\Sigma_1),\;\; V=m_{2}+T_{1\rightarrow 2}(U-m_{1}),
	\end{equation}
	where
	\begin{equation}\label{eq::T1->2:def}
	T_{1\rightarrow 2}=\Sigma_{1}^{-\frac{1}{2}}\left(\Sigma_{1}^{\frac{1}{2}}\Sigma_{2}\Sigma_{1}^{\frac{1}{2}}\right)^{\frac{1}{2}}\Sigma_{1}^{-\frac{1}{2}}
	\end{equation}
	is the optimal transformation pushing forward from $\mathcal{N}(0,\Sigma_{1})$ to $\mathcal{N}(0,\Sigma_{2})$ \citep{knott1984optimal}. This transformation satisfies $\Sigma_{2}=T_{1\rightarrow 2}\Sigma_{1}T_{1\rightarrow 2}.$ For a discussion on singular distributions, please see App.~\ref{Appsec::Supp}.
	
	%%%%%%%%%%%%%%%%%%%%%%%%%%%%%%%%%%%%%%%%%%%
	\section{Main results}
	
	\subsection{The MSE--Wasserstein-2 tradeoff}
	\label{sec::The MSE--Wasserstein-2 tradeoff} The DP function \eqref{eq:MSE::D_P::Definition} depends, of course, on the underlying joint probability $p_{XY}$ of the signal $X$ and measurements $Y$. Our first key result is that this dependence can be expressed solely in terms of $D^*$ and $P^*$. In other words, knowing the distortion and perception index attained by the minimum MSE estimator $X^*$, suffices for determining $D(P)$ for any $P$.
	
	\begin{thm}[The DP function]
		\label{Thm:=00005Bthe-Distortion-Perception-funct} The DP function \eqref{eq:MSE::D_P::Definition} is given by
		\begin{equation}
		D(P)=
		D^*+\left[(P^*-P)_{+}\right]^{2},
		\label{eq:Thm:DP_function::Main_result}
		\end{equation}
		where $(x)_+=\max(0,x)$. Furthermore, an estimator achieving perception index $P$ and distortion $D(P)$ can always be constructed by applying a (possibly stochastic) transformation to $X^*$.
	\end{thm}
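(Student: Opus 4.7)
My strategy is to prove matching lower and upper bounds on $D(P)$, both driven by the picture in Fig.~\ref{fig:geodesic}: optimal reconstructions lie on the $W_2$-geodesic between $p_{X^*}$ and $p_X$. For the lower bound, the key identity is an orthogonal decomposition of the MSE around the posterior mean. Writing any admissible $\hat X = f(Y, U)$ with $U$ independent of $X$ given $Y$ gives $\EE[X - X^* \mid Y, U] = 0$, so
$$\EE[\|X - \hat X\|^2] = \EE[\|X - X^*\|^2] + \EE[\|X^* - \hat X\|^2] = D^* + \EE[\|X^* - \hat X\|^2].$$
Since $(X^*, \hat X)$ is a coupling of $p_{X^*}$ and $p_{\hat X}$, the second term is at least $W_2^2(p_{X^*}, p_{\hat X})$, and the triangle inequality for $W_2$ applied to $p_X, p_{X^*}, p_{\hat X}$ yields $W_2(p_{X^*}, p_{\hat X}) \ge P^* - P$ whenever $P \le P^*$. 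Assembling these gives $D(P) \ge D^* + [(P^* - P)_+]^2$, with the plateau for $P \ge P^*$ attained trivially by $X^*$ itself.

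For the matching upper bound, I would construct $\hat X$ by McCann displacement interpolation from $p_{X^*}$ toward $p_X$. Let $\pi \in \Pi(p_{X^*}, p_X)$ be an optimal plan attaining $W_2(p_{X^*}, p_X) = P^*$. Conditional on $X^*$, draw $\tilde X \sim \pi(\cdot \mid X^*)$ using fresh randomness independent of $(X, Y)$, and define
$$\hat X = (1-\alpha) X^* + \alpha \tilde X, \qquad \alpha = 1 - P/P^*.$$
By construction $\hat X$ is a stochastic transformation of $X^*$, hence admissible. The geodesic property of $W_2$ gives $W_2(p_X, p_{\hat X}) = (1-\alpha) P^* = P$, while the same orthogonal decomposition, combined with $\EE[\|X^* - \tilde X\|^2] = (P^*)^2$, yields $\EE[\|X - \hat X\|^2] = D^* + (P^* - P)^2$. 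This matches the lower bound and simultaneously establishes the second assertion of the theorem.

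The main obstacle I anticipate is purely technical: ensuring the two ingredients work without regularity assumptions on $p_X$ or $p_{X^*}$. The orthogonality identity requires care in handling the internal randomness of $\hat X$, which I would formalise by enlarging the probability space and invoking the tower property conditioned on $(Y, U)$. The displacement-interpolation step requires the existence of an optimal plan $\pi$ (standard under finite second moments), together with the general fact that $(x, y) \mapsto (1-\alpha)x + \alpha y$ applied to any $W_2$-optimal $\pi$ produces a constant-speed geodesic in $\mathcal{W}_2$, with no absolute-continuity hypothesis needed. Equality in $W_2^2(p_{X^*}, p_{\hat X}) \le \EE[\|X^* - \hat X\|^2]$ for this specific $\hat X$ then follows from the saturated triangle inequality $W_2(p_{X^*}, p_{\hat X}) + W_2(p_{\hat X}, p_X) = P^*$, closing the loop.
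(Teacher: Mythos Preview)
Your proposal is correct and follows essentially the same route as the paper: the orthogonal decomposition $\EE\|X-\hat X\|^2 = D^* + \EE\|X^*-\hat X\|^2$ (Lemma~\ref{lem::orthogonality}) combined with the $W_2$ triangle inequality gives the lower bound (Lemma~\ref{lem::D-triangleinq-lowbound}), and the displacement-interpolation construction $\hat X_P = (1-P/P^*)\hat X_0 + (P/P^*)X^*$ along an optimal plan between $p_{X^*}$ and $p_X$ gives the matching upper bound (Theorem~\ref{Thm::extrapol}). The only cosmetic difference is that the paper packages the upper bound via Lemma~\ref{lem:eps-interpolation}, which also handles approximate endpoints $\hat X_\varepsilon$, but for the theorem itself your argument is the same.
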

	
	Theorem~\ref{Thm:=00005Bthe-Distortion-Perception-funct} is of practical importance because in many cases constructing an estimator that achieves a low MSE (\emph{i.e.} an approximation of $X^*$) is a rather simple task. This is the case, for example, in image restoration with deep neural networks. There, it is common practice to train a network by minimizing its average squared error on a training set. Now, measuring the MSE of such a network on a large test set allows approximating $D^*$. We can also obtain an approximation of at least a lower bound on $P^*$ by estimating the second order statistics of $X$ and $X^*$. Specifically, recall that $P^*$ is lower bounded by the Gelbrich distance between $(m_X,\Sigma_X)$ and $(m_{X^*},\Sigma_{X^*})$, which is given by $(G^*)^2\triangleq\mathrm{Tr}\{\Sigma_{X}+\Sigma_{X^*}-2(\Sigma_{X}^{1/2}\Sigma_{X^*}\Sigma_{X}^{1/2})^{1/2}\}$ (see \eqref{eq:W2_greater_G2}). Given approximations for $D^*$ and $G^*$, we can approximate a lower bound on the DP function for any $P$,
	\begin{equation}\label{eq:Thm2::LwrBound_G_Sx_S*}
	D(P)\geq D^*+[(G^*-P)_+]^{2}.
	\end{equation}
	The bound is attained when $X$ and $Y$ are jointly Gaussian.
	
	\paragraph{Uniqueness} A remark is in place regarding the uniqueness of an estimator achieving \eqref{eq:Thm:DP_function::Main_result}. As we discuss below, what defines an optimal estimator $\hat{X}$ is its joint distribution with $X^*$. This joint distribution may not be unique, in which case the optimal estimator is not unique. Moreover, even if $p_{\hat{X} X^{*}}$ is unique, the uniqueness of the estimator is not guaranteed because there may be different conditional distributions $p_{\hat{X}|Y}$ that lead to the same optimal $p_{\hat{X} X^{*}}$. In other words, given the optimal $p_{\hat{X} X^{*}}$, one can choose any joint probability $p_{\hat{X} Y X^* }$ that has marginals $p_{\hat{X} X^{*}}$ and $p_{YX^* }$. One option is to take the estimator $\hat{X}$ to be a (possibly stochastic) transformation of $X^*$, namely $p_{\hat{X}|Y}=p_{\hat{X}|X^*}p_{X^*|Y}$. But there may be other options.
	In cases where either $Y$ or $\hat{X}$ are a deterministic transformation of $X^*$ (\emph{e.g.} when $X^*$ has a density, or is an invertible function of $Y$), there is a unique joint distribution $p_{\hat{X} Y X^{*}}$ with the given marginals \cite[Lemma 5.3.2]{ambrosio2008gradient}. In this case, if $p_{\hat{X}X^{*}}$ is unique then so is the estimator $p_{\hat{X}|Y}$.
	
	\paragraph{Randomness} 	Under the settings of image restoration, many methods encourage diversity in their output by adding randomness \citep{lugmayr2020srflow,bahat2020explorable,prakash2021removing}. In our setting, we may ask under what conditions there exists an optimal estimator $\hat X$ which is a deterministic function of $Y$. For example, when $p_Y=\delta_0$ but $X$ has some non-atomic distribution, it is clear that no deterministic function of $Y$ can attain perfect perceptual quality. It turns out that a sufficient condition for the optimal $\hat X$ to be a deterministic function of $Y$ is that $X^*$  have a density. We discuss this in App.~\ref{APPsec::proofs} and explicitly illustrate it in the Gaussian case (see Sec.~\ref{Sec::GaussianSetting}), where if $X^*$ has a non-singular covariance matrix then $\hat X$ is a deterministic function of~$Y$. 

	\paragraph{When is posterior sampling optimal?} Many recent image restoration methods attempt to produce diverse high perceptual quality reconstructions by sampling from the posterior distribution $p_{X|Y}$ \citep{friedman2021posterior, ohayon2021high,kawar2021stochastic}. As discussed in \citep{blau2018perception}, the posterior sampling estimator attains a perception index of $0$ (namely $W_{2}(p_X,p_{\hat{X}})=0$) and distortion $2D^*$. But an interesting question is: when is this strategy optimal? In other words, in what cases do we have that the DP function at $P=0$ equals precisely $2D^*$ and is not strictly smaller? Note from the definition of the Wasserstein distance \eqref{eq:Wp_def}, that $(P^*)^2=W_{2}^{2}(p_X,p_{X^{*}})\leq \EE[\|X-X^*\|^2]=D^*$. Using this in \eqref{eq:Thm:DP_function::Main_result}
 shows that the DP function at $P=0$ is upper bounded by
	\begin{equation}\label{eq:Thm2::UprBound_2Dmin}
	D(0)=D^*+(P^*)^2\leq2D^*,
	\end{equation}
	and the upper bound is attained when $(P^*)^2=D^*$. To see when this happens, observe that
	\begin{equation}\label{eq:Sandwitch}
	    \mathrm{Tr}\left\{\Sigma_{X}+\Sigma_{X^*}-2(\Sigma_{X}^{\frac{1}{2}}\Sigma_{X^*}\Sigma_{X}^{\frac{1}{2}})^{\frac{1}{2}}\right\}=(G^*)^2\leq (P^*)^2\leq D^*=\mathrm{Tr}\{\Sigma_X-\Sigma_{X^*}\}.
	\end{equation}
	We can see that when $\mathrm{Tr}\{\Sigma_{X^*}\}=\mathrm{Tr}\{(\Sigma_{X}^{1/2}\Sigma_{X^*}\Sigma_{X}^{1/2})^{1/2}\}$, the leftmost and rightmost sides become equal, and thus $(P^*)^2=D^*$. To understand the meaning of this condition, let us focus on the case where $\Sigma_{X}$ and $\Sigma_{X^*}$ are jointly diagonalizable. This is a reasonable assumption for natural images, where shift-invariance induces diagonalization by the Fourier basis \citep{unser1984approximation}. In this case, the condition can be written in terms of the eigenvalues of the matrices, namely $\sum_{i}\lambda_i(\Sigma_{X^*})=\sum_{i} \sqrt{\lambda_i(\Sigma_{X^*})\lambda_i(\Sigma_{X})}$. This condition is satisfied when each  $\lambda_i(\Sigma_{X^*})$ equals either $\lambda_i(\Sigma_{X})$ or $0$. Namely, the $i$th eigenvalue of the error covariance of $X^*$, which is given by $\Sigma_X-\Sigma_{X^*}$, is either $\lambda_i(\Sigma_{X})$ or $0$. We conclude that posterior sampling is optimal when there exists a subspace $\mathcal{S}$ spanned by some of the eigenvectors of $\Sigma_X$, such that the projection of $X$ onto $\mathcal{S}$ can be recovered from $Y$ with zero error, but the projection of $X$ onto $\mathcal{S}^\perp$ cannot be recovered at all (the optimal estimator is trivial). This is likely not the case in most practical scenarios. Therefore, it seems that \emph{posterior sampling is often not optimal}. That is, posterior sampling can be improved upon in terms of MSE without any sacrifice in perceptual quality.
	
	\subsection{Optimal estimators}
	\label{sec::optimalestimators}
	While Theorem~\ref{Thm:=00005Bthe-Distortion-Perception-funct} reveals the shape of the DP function, it does not provide a recipe for constructing optimal estimators on the DP tradeoff. We now discuss the nature of such estimators.
	
	Our first observation is that since $\hat{X}$ is independent of $X$ given $Y$, its MSE can be decomposed as $\EE[\|X-\hat{X}\|^2]=\EE[\|X-X^*\|^2+\EE[\|X^*-\hat{X}\|^2]$ (see App.~\ref{APPsec::proofs}). Therefore, the DP function \eqref{eq:MSE::D_P::Definition} can be equivalently written as
	\begin{equation}\label{eq:MSE::D_P::Alternative}
    D(P)=D^*+\min_{p_{\hat{X}|Y}}\left\{\EE[\|\hat{X}-X^*\|^2] \;:\; W_2(p_X,p_{\hat{X}})\leq P\right\}.
	\end{equation}
	Note that the objective in \eqref{eq:MSE::D_P::Alternative} depends on the MSE between $\hat{X}$ and $X^*$, so that we can perform the minimization on $p_{\hat{X}|X^*}$ rather than on $p_{\hat{X}|Y}$ (once we determine the optimal $p_{\hat{X}|X^*}$ we can construct a consistent $p_{\hat{X}|Y}$ as discussed above).
	
	Now, let us start by examining the leftmost side of the curve $D(P)$, which corresponds to a perfect perceptual quality estimator (\emph{i.e.}~$P=0$). In this case, the constraint becomes $p_{\hat{X}}=p_X$. Therefore, 
	\begin{equation}
    D(0)=D^*+\min_{p_{\hat{X}X^*}}\left\{\EE[\|\hat{X}-X^*\|^2] \;:\; p_{\hat{X}X^*}\in\Pi(p_X,p_{X^*})\right\},
	\end{equation}
	where $\Pi(p_X,p_{X^*})$ is the set of all probabilities on $\R^{n_x}\times\R^{n_x}$ with marginals $p_X,p_{X^*}$. One may readily recognize this as the optimization problem underlying the Wasserstein-2 distance between $p_X$ and $p_{X^*}$. This leads us to the following conclusion.
	\begin{thm}[Optimal estimator for $P=0$]
	\label{thm:perfect_perception}Let $\hat{X}_0$ be an estimator achieving perception index $0$ and MSE $D(0)$. Then its joint distribution with $X^*$ attains the optimum in the definition of $W_2(p_X,p_{X^*})$. Namely, $p_{\hat{X}_0 X^*}$ is an optimal plan between $p_X$ and $p_{X^*}$.
	\end{thm}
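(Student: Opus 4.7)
The plan is to exploit the alternative formulation of the DP function that the paper has already derived, specialized to $P=0$. Because $\hat{X}$ is independent of $X$ given $Y$ and $X^*=\EE[X\mid Y]$ is $Y$-measurable, the orthogonality principle yields the decomposition $\EE[\|X-\hat{X}\|^2]=D^*+\EE[\|X^*-\hat{X}\|^2]$. At $P=0$ the Wasserstein constraint $W_2(p_X,p_{\hat{X}})\le 0$ collapses to $p_{\hat{X}}=p_X$, so the reformulated DP problem becomes
\begin{equation}
D(0)-D^* \;=\; \min_{p_{\hat{X}|Y}}\left\{\EE[\|\hat{X}-X^*\|^2]\;:\;p_{\hat{X}}=p_X\right\}.
\end{equation}
The objective and constraint now depend on the estimator only through the joint law $p_{\hat{X}X^*}$, which suggests reparametrizing the problem as an optimization over couplings of $p_X$ and $p_{X^*}$.

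The central step is to show this is equivalent to minimizing the transportation cost over $\Pi(p_X,p_{X^*})$. One direction is immediate: any admissible estimator induces a joint $p_{\hat{X}X^*}$ with marginals $p_X$ and $p_{X^*}$, hence an element of $\Pi(p_X,p_{X^*})$. For the reverse, given any coupling $\pi\in\Pi(p_X,p_{X^*})$, I would disintegrate it as $\pi(dx,dx^*)=p_{\hat{X}|X^*}(dx\mid x^*)\,p_{X^*}(dx^*)$ and define the estimator by composition, $p_{\hat{X}|Y}(\cdot\mid y) = p_{\hat{X}|X^*}(\cdot\mid X^*(y))$, exploiting the fact that $X^*$ is a measurable function of $Y$. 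A direct check against test functions shows the resulting joint of $(\hat{X},X^*)$ equals $\pi$ and that $\hat{X}\perp X\mid Y$ holds by construction. Combining the two directions,
\begin{equation}
D(0)-D^* \;=\; \min_{\pi\in\Pi(p_X,p_{X^*})}\EE_{(\hat{X},X^*)\sim\pi}\bigl[\|\hat{X}-X^*\|^2\bigr] \;=\; W_2^2(p_X,p_{X^*}).
\end{equation}
Consequently, any optimal estimator $\hat{X}_0$ must induce a coupling $p_{\hat{X}_0 X^*}$ that attains the infimum in the definition of $W_2(p_X,p_{X^*})$, which is precisely the statement that it is an optimal plan.

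The main obstacle is the rigor of the disintegration-and-gluing construction in the reverse direction, which requires the existence of regular conditional probabilities; this is guaranteed here because the underlying spaces are Euclidean (hence Polish) and the paper assumes only finite second moments. Existence of an optimal $\hat{X}_0$ itself follows from the tightness of minimizing sequences in $\Pi(p_X,p_{X^*})$ under those same assumptions, which also ensures attainment in the Wasserstein problem on the right-hand side.
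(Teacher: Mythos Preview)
Your proposal is correct and follows essentially the same route as the paper: the orthogonality decomposition reduces $D(0)-D^*$ to minimizing $\EE[\|\hat X-X^*\|^2]$ subject to $p_{\hat X}=p_X$, which is then identified with the Kantorovich problem defining $W_2^2(p_X,p_{X^*})$. You are simply more explicit than the paper about the reverse direction (realizing an arbitrary coupling in $\Pi(p_X,p_{X^*})$ as an estimator via disintegration $p_{\hat X|Y}=p_{\hat X|X^*}\circ X^*$), which the paper handles more tersely in its Uniqueness remark and in the proof of Lemma~\ref{lem::DP w2equivalent}.
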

	
	Having understood the estimator $\hat{X}_0$ at the leftmost end of the tradeoff, we now turn to study optimal estimators for arbitrary $P$. Interestingly, we can show that Problem \eqref{eq:MSE::D_P::Alternative} is equivalent to (see App.~\ref{APPsec::proofs})
	\begin{equation}\label{eq:ObjectiveArbitraryP}
	D(P)=D^*+\min_{p_{\hat{X}}}\left\{ W^2_2(p_{\hat{X}},p_{X^{*}})\;:\; W_{2}(p_{\hat{X}},p_X)\leq P\right\}.
	\end{equation}
	Namely, an optimal $p_{\hat{X}}$ is closest to $p_{X^*}$ among all distributions within a ball of radius~$P$ around $p_X$, as illustrated in Fig.~\ref{fig:geodesic}. 
	Moreover, $p_{\hat{X}X^*}$ is an optimal plan between $p_{\hat{X}}$ and $p_{X^*}$.
	As it turns out, this somewhat abstract viewpoint leads to a rather practical construction for $\hat{X}$ from the estimators $\hat{X}_0$ and $X^*$ at the two extremes of the tradeoff.  Specifically, we have the following result, proved in App.~\ref{APPsec::proofs}.
	\begin{thm}[Optimal estimators for arbitrary $P$]
		\label{Thm::extrapol}Let $\hat{X}_0$ be an estimator achieving perception index~$0$ and MSE $D(0)$. Then for any $P\in[0,P^*]$,
		the estimator
		\begin{equation}
		\hat{X}_P=\left(1-\frac{P}{P^*}\right)\hat{X}_{0}+\frac{P}{P^*}X^{*}\label{eq:hat_X::extrapolated}
		\end{equation}
		is optimal for perception index $P$. Namely, it achieves perception index $P$ and distortion $D(P)$.
	\end{thm}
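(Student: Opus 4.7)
The plan is to recognize $\hat X_P$ as McCann's displacement interpolation between $p_X$ and $p_{X^*}$, evaluated at time $t=P/P^*$ and driven by the optimal transport plan $p_{\hat X_0 X^*}$ supplied by Theorem~\ref{thm:perfect_perception}. Once this geometric picture is in place, both claims---that $\hat X_P$ attains perception index exactly $P$ and distortion exactly $D(P)=D^*+(P^*-P)^2$---reduce to short computations along a linear path.

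For the distortion, I would invoke the decomposition $\EEd{X}{\hat X_P}=D^*+\EEd{\hat X_P}{X^*}$, which holds whenever $\hat X_P$ is independent of $X$ given $Y$. I secure this by constructing $\hat X_0$ as a stochastic transformation of $X^*$, as permitted by Theorem~\ref{Thm:=00005Bthe-Distortion-Perception-funct}, with auxiliary randomness independent of $(X,Y)$ given $X^*$; then $\hat X_P$ inherits the required conditional independence. The identity $\hat X_P-X^*=(1-P/P^*)(\hat X_0-X^*)$ gives $\EEd{\hat X_P}{X^*}=(1-P/P^*)^2\,\EEd{\hat X_0}{X^*}$. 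By Theorem~\ref{thm:perfect_perception}, $(\hat X_0,X^*)$ forms an optimal plan between $p_X$ and $p_{X^*}$, so $\EEd{\hat X_0}{X^*}=(P^*)^2$, yielding $\EEd{\hat X_P}{X^*}=(P^*-P)^2$ and therefore $\EEd{X}{\hat X_P}=D^*+(P^*-P)^2=D(P)$.

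For the perception index, I would exhibit the explicit coupling $(\hat X_0,\hat X_P)$. Since $\hat X_0\sim p_X$ (because $W_2(p_X,p_{\hat X_0})=0$) and $\hat X_0-\hat X_P=(P/P^*)(\hat X_0-X^*)$, this coupling witnesses $W_2^2(p_X,p_{\hat X_P})\le (P/P^*)^2\,\EEd{\hat X_0}{X^*}=P^2$. Equality follows because the distortion already equals $D(P)$: by the strict monotonicity of $D$ on $[0,P^*]$ implied by Theorem~\ref{Thm:=00005Bthe-Distortion-Perception-funct}, $W_2(p_X,p_{\hat X_P})$ cannot be strictly less than $P$ without contradicting the minimality defining $D$ at that smaller perception value. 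Hence $W_2(p_X,p_{\hat X_P})=P$ and $\hat X_P$ is optimal.

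The main subtlety is the MSE decomposition, whose validity rests on the conditional independence of $\hat X_P$ from $X$ given $Y$; this is why one must commit at the outset to choosing $\hat X_0$ as a stochastic function of $X^*$ alone. The remaining ingredients---linear algebra along the interpolation and the one-line coupling argument for the Wasserstein bound---are then mechanical.
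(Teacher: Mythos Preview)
Your argument is essentially the paper's: decompose the MSE via orthogonality, use $\hat X_P - X^* = (1-P/P^*)(\hat X_0 - X^*)$ together with $\EEd{\hat X_0}{X^*} = (P^*)^2$ to get distortion $D^* + (P^*-P)^2$, and couple $(\hat X_0, \hat X_P)$ to bound the perception by $P$. Two points need adjusting.

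First, your appeals to Theorem~\ref{Thm:=00005Bthe-Distortion-Perception-funct} are circular: in the paper's logical order that theorem is proved \emph{from} Theorem~\ref{Thm::extrapol}. For the strict monotonicity step you only need the lower bound $D(P') \ge D^* + (P^*-P')^2$, which follows directly from the triangle inequality $P^* \le W_2(p_{\hat X}, p_{X^*}) + P'$ combined with the MSE decomposition; the paper isolates this as a standalone lemma proved before Theorem~\ref{Thm::extrapol}. With that bound, your contradiction argument for $W_2(p_X, p_{\hat X_P}) = P$ goes through cleanly. (The paper takes a slightly different route here, observing that since $p_{\hat X_0 X^*}$ is an optimal plan the interpolants form a constant-speed geodesic, which gives the equality directly; either argument works.)

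Second, you should not restrict $\hat X_0$ to be a stochastic transformation of $X^*$. The theorem is stated for \emph{any} optimal zero-perception estimator, and your restriction would prove only a special case. The restriction is also unnecessary: any estimator $\hat X_0$ already satisfies $X \to Y \to \hat X_0$ by definition, and since $X^*$ is a deterministic function of $Y$, the interpolant $\hat X_P$ is measurable with respect to $(\hat X_0, Y)$ and therefore inherits the Markov property automatically. So the MSE decomposition applies without further assumptions.
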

	
	Theorem~\ref{Thm::extrapol} has important implications for perceptual signal restoration. For example, in the task of image super-resolution, there exist many deep network based methods that achieve a low MSE \citep{lim2017enhanced,ulyanov2018deep,shocher2018zero}. These provide an approximation for $X^*$. Moreover, there is an abundance of methods that achieve good perceptual quality at the price of a reasonable degradation in MSE (often by incorporating a GAN-based loss) \citep{ledig2017photo,wang2018esrgan,shaham2019singan}. These constitute approximations for $\hat{X}_0$. However, achieving results that strike other prescribed balances between MSE and perceptual quality commonly require training a different model for each setting. \citet{shoshan2019dynamic} and \citet{navarrete2018multi} tried to address this difficulty by introducing new training techniques that allow traversing the distortion-perception tradeoff at test time. But, interestingly, Theorem~\ref{Thm::extrapol} shows that in our setting such specialized training methods are not required. Having a model that leads to low MSE and one that leads to good perceptual quality, it is possible to construct any other estimator on the DP tradeoff, by simply averaging the outputs of these two models with appropriate weights. We illustrate this in Sec.~\ref{sec::numerical}. 

	\subsection{The Gaussian setting}
	\label{Sec::GaussianSetting}
	When $X$ and $Y$ are jointly Gaussian, it is well known that the minimum MSE estimator $X^*$ is a linear function of the measurements $Y$. However, it is not \emph{a-priori} clear whether all estimators along the DP tradeoff are linear in this case, and what kind of randomness they possess. As we now show, equipped with Theorem~\ref{Thm::extrapol}, we can obtain closed form expressions for optimal estimators for any $P$. For simplicity, we assume here that $X$ and $Y$ have zero means and that $\Sigma_X,\Sigma_{Y} \succ0$.

	It is instructive to start by considering the simple case, where $\Sigma_{X^*}$ is non-singular (in Theorem~\ref{thm:Gaussian1} below we address the more general case of a possibly singular $\Sigma_{X^*}$). It is well known that 
	\begin{equation}\label{eq:SigmaXstar}
	    X^*=\Sigma_{XY}\Sigma_Y^{-1}Y,\qquad \Sigma_{X^*}=\Sigma_{XY}\Sigma_{Y}^{-1}\Sigma_{YX}.
	\end{equation}
	
	Now, since we assumed that $\Sigma_X,\Sigma_{X^*}\succ0$, we have from Theorem~\ref{thm:perfect_perception} and \eqref{eq:optimal_dist_for_gauss},\eqref{eq::T1->2:def} that
	\begin{equation}
	    \hat{X}_0 = \Sigma_{X^*}^{-\frac{1}{2}}\left(\Sigma_{X^*}^{\frac{1}{2}}\Sigma_{X}\Sigma_{X^*}^{\frac{1}{2}}\right)^{\frac{1}{2}}\Sigma_{X^*}^{-\frac{1}{2}} X^*.
	\end{equation} 
	Finally, we know that $P^*=G^*$, which is given by the left-hand side of \eqref{eq:Sandwitch}. Substituting these expressions into \eqref{eq:hat_X::extrapolated}, we obtain that an optimal estimator for perception $P\in[0,G^*]$ is given by
	\begin{equation}\label{eq:GaussXpInvertible}
	   \hat{X}_P=\left(\left(1-\frac{P}{G^*}\right)\Sigma_{X^*}^{-\frac{1}{2}}\left(\Sigma_{X^*}^{\frac{1}{2}}\Sigma_{X}\Sigma_{X^*}^{\frac{1}{2}}\right)^{\frac{1}{2}}\Sigma_{X^*}^{-\frac{1}{2}} +\frac{P}{G^*}I\right) \Sigma_{XY}\Sigma_Y^{-1}Y.
	\end{equation}
	As can be seen, this optimal estimator is a deterministic linear transformation of $Y$ for any $P$.
	
	The setting just described does not cover the case where $Y$ is of lower dimensionality than $X$ because in that case $\Sigma_{X^*}$ is necessarily singular (it is a $n_x\times n_x$ matrix of rank at most $n_y$; see \eqref{eq:SigmaXstar}). In this case, any deterministic linear function of $Y$ would result in an estimator $\hat{X}$ with a rank-$n_y$ covariance. Obviously, the distribution of such an estimator cannot be arbitrarily close to that of $X$, whose covariance has rank $n_x$. What is the optimal estimator in this more general setting, then? 
	
	%%%%%%%%%%%%%%%%%%%%%%%%%%%%%%%%%%%%%%%%%%%%%%%%%%%%
	\begin{thm}[Optimal estimators in the Gaussian case]\label{thm:Gaussian1}
		Assume $X$ and $Y$ are zero-mean jointly Gaussian random vectors with $\Sigma_X,\Sigma_{Y} \succ0$. Denote $T^{*}\triangleq T_{p_X\rightarrow p_{X^*}}=\Sigma_{X}^{-1/2}(\Sigma_{X}^{1/2}\Sigma_{X^*}\Sigma_{X}^{1/2})^{1/2}\Sigma_{X}^{-1/2}$. 
		Then for any $P\in[0,G^{*}]$, an estimator with perception index $P$ and MSE $D(P)$ can be constructed as
		\begin{equation} \label{eq::X_P_Gauss:interpolation}
		\hat{X}_P=\left(\left(1-\frac{P}{G^{*}}\right)\Sigma_{X}^{\frac{1}{2}}\left(\Sigma_{X}^{\frac{1}{2}}\Sigma_{X^*}\Sigma_{X}^{\frac{1}{2}}\right)^{\frac{1}{2}}\Sigma_{X}^{-\frac{1}{2}}\Sigma_{X^*}^{\dagger}+\frac{P}{G^{*}}I\right)\Sigma_{XY}\Sigma_{Y}^{-1}Y+\left(1-\frac{P}{G^{*}}\right)W,
		\end{equation}
		where $W$ is a zero-mean Gaussian noise with covariance $\Sigma_{W}=\Sigma_{X}^{1/2}(I-\Sigma_{X}^{1/2}T^{*}\Sigma_{X^*}^{\dagger}T^{*}\Sigma_{X}^{1/2})\Sigma_{X}^{1/2}$, which is independent of $Y,X$, and $\Sigma_{X^*}^{\dagger}$ is the pseudo-inverse of $\Sigma_{X^*}$.
	\end{thm}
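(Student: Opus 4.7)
The plan is to build on Theorems~\ref{thm:perfect_perception} and \ref{Thm::extrapol}: first construct an optimal perfect-perception estimator $\hat{X}_0$ in the (possibly singular) Gaussian setting, then invoke the linear interpolation of Theorem~\ref{Thm::extrapol}. Since $X,Y$ are jointly Gaussian with $\Sigma_Y\succ 0$, we have $X^*=\Sigma_{XY}\Sigma_Y^{-1}Y\sim\mathcal{N}(0,\Sigma_{X^*})$ with $\Sigma_{X^*}=\Sigma_{XY}\Sigma_Y^{-1}\Sigma_{YX}$, and since $p_X,p_{X^*}$ are Gaussian, $P^*=W_2(p_X,p_{X^*})=G^*$. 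Guided by the non-singular formula \eqref{eq:GaussXpInvertible}, I posit $\hat{X}_0\triangleq AX^*+W$ with $A\triangleq\Sigma_X T^*\Sigma_{X^*}^\dagger$ and $W\sim\mathcal{N}(0,\Sigma_W)$ independent of $Y$, where $\Sigma_W$ is the matrix stated in the theorem. So defined, $\hat{X}_0$ is jointly Gaussian with $X^*$ and, since $X^*$ is a function of $Y$ and $W\perp(X,Y)$, also independent of $X$ given $Y$, hence a valid estimator.

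The verification splits into three parts. First, \emph{marginal matching}: using $\Sigma_X T^*=\Sigma_X^{1/2}S\Sigma_X^{-1/2}$ where $S\triangleq(\Sigma_X^{1/2}\Sigma_{X^*}\Sigma_X^{1/2})^{1/2}$, together with $\Sigma_{X^*}^\dagger\Sigma_{X^*}\Sigma_{X^*}^\dagger=\Sigma_{X^*}^\dagger$, a short computation gives $A\Sigma_{X^*}A^T=\Sigma_X T^*\Sigma_{X^*}^\dagger T^*\Sigma_X$; the stated $\Sigma_W$ equals $\Sigma_X-A\Sigma_{X^*}A^T$ by inspection, so $\mathrm{Cov}(\hat{X}_0)=\Sigma_X$ and $\hat{X}_0\sim p_X$, giving perception index $0$. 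Second, \emph{positive semidefiniteness of $\Sigma_W$}: let $N\triangleq\Sigma_X^{1/2}T^*\Sigma_{X^*}^\dagger T^*\Sigma_X^{1/2}$; $N$ is manifestly symmetric, and using the Gaussian identity $\Sigma_{X^*}=T^*\Sigma_X T^*$ one computes $N^2=\Sigma_X^{1/2}T^*\Sigma_{X^*}^\dagger(T^*\Sigma_X T^*)\Sigma_{X^*}^\dagger T^*\Sigma_X^{1/2}=\Sigma_X^{1/2}T^*\Sigma_{X^*}^\dagger\Sigma_{X^*}\Sigma_{X^*}^\dagger T^*\Sigma_X^{1/2}=N$, so $N$ is an orthogonal projection and $N\preceq I$. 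Conjugating by $\Sigma_X^{1/2}$ yields $\Sigma_W=\Sigma_X^{1/2}(I-N)\Sigma_X^{1/2}\succeq 0$.

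Third, \emph{optimality of the coupling}: from $\EEd{\hat{X}_0}{X^*}=\tr{\Sigma_X+\Sigma_{X^*}-2A\Sigma_{X^*}}$, I use $\Sigma_{X^*}=T^*\Sigma_X T^*$ with $\Sigma_X\succ 0$: if $T^*\Sigma_X T^*v=0$ then $(T^*v)^T\Sigma_X(T^*v)=0$ forces $T^*v=0$, so $\Ker(\Sigma_{X^*})=\Ker(T^*)$ and hence $\mathrm{range}(\Sigma_{X^*})=\mathrm{range}(T^*)$; consequently $\Sigma_{X^*}^\dagger\Sigma_{X^*}T^*=T^*$, which gives $A\Sigma_{X^*}=\Sigma_X T^*$. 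By cyclicity of the trace, $\tr{A\Sigma_{X^*}}=\tr{\Sigma_X T^*}=\tr{(\Sigma_X^{1/2}\Sigma_{X^*}\Sigma_X^{1/2})^{1/2}}$, so $\EEd{\hat{X}_0}{X^*}=G^2=W_2^2(p_X,p_{X^*})$. Combined with $\hat{X}_0\sim p_X$, equation \eqref{eq:MSE::D_P::Alternative} at $P=0$ then shows that $\hat{X}_0$ attains perception $0$ and distortion $D(0)$.

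With $P^*=G^*$ and $\hat{X}_0$ as above, Theorem~\ref{Thm::extrapol} immediately yields $\hat{X}_P=(1-P/G^*)\hat{X}_0+(P/G^*)X^*$. Substituting $\hat{X}_0=AX^*+W$ and $X^*=\Sigma_{XY}\Sigma_Y^{-1}Y$, and rewriting $\Sigma_X T^*\Sigma_{X^*}^\dagger=\Sigma_X^{1/2}(\Sigma_X^{1/2}\Sigma_{X^*}\Sigma_X^{1/2})^{1/2}\Sigma_X^{-1/2}\Sigma_{X^*}^\dagger$, reproduces \eqref{eq::X_P_Gauss:interpolation} verbatim. The main obstacle is the positive-semidefiniteness of $\Sigma_W$ in the singular setting; the idempotency calculation $N^2=N$ driven by the identity $\Sigma_{X^*}=T^*\Sigma_X T^*$ resolves it cleanly, after which the rest of the proof is a direct algebraic consequence of Theorem~\ref{Thm::extrapol} and the Gelbrich formula for Gaussian Wasserstein distance.
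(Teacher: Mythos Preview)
Your proof is correct and follows the same overall architecture as the paper's: construct an optimal perfect-perception estimator $\hat X_0=AX^*+W$ with $A=\Sigma_X T^*\Sigma_{X^*}^\dagger$, verify $\hat X_0\sim p_X$ and $\EEd{\hat X_0}{X^*}=(G^*)^2$, then invoke Theorem~\ref{Thm::extrapol}. Your kernel argument ($\Ker\Sigma_{X^*}=\Ker T^*$, hence $\Sigma_{X^*}^\dagger\Sigma_{X^*}T^*=T^*$ and $A\Sigma_{X^*}=\Sigma_X T^*$) is essentially the paper's Lemma on $\Ker\Sigma_{X^*}\subseteq\Ker(\Sigma_X T^*)$.

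The one genuinely different step is the proof that $\Sigma_W\succeq 0$. The paper routes this through Theorem~\ref{thm:Gaussian_not_unique}: it writes $\Sigma_W=\Sigma_X-\Sigma_{\hat X_0 Y}\Sigma_Y^{-1}\Sigma_{\hat X_0 Y}^T$, computes the \emph{other} Schur complement $\Sigma_Y-\Sigma_{Y\hat X_0}\Sigma_X^{-1}\Sigma_{\hat X_0 Y}=\Sigma_{Y|X^*}\succeq 0$, and then flips back via the Schur-complement characterization of PSD block matrices. Your argument is more direct and self-contained: from $\Sigma_{X^*}=T^*\Sigma_X T^*$ you deduce that $N=\Sigma_X^{1/2}T^*\Sigma_{X^*}^\dagger T^*\Sigma_X^{1/2}$ satisfies $N^2=N$, hence $N$ is an orthogonal projection and $I-N\succeq 0$, giving $\Sigma_W=\Sigma_X^{1/2}(I-N)\Sigma_X^{1/2}\succeq 0$ immediately. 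This bypasses both the Schur machinery and the detour through $\Sigma_Y$, at the cost of not yielding the more general parametrized family of Theorem~\ref{thm:Gaussian_not_unique} along the way.
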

	Note that in this case, we indeed have a random noise component that shapes the covariance of $\hat{X}_P$ to become closer to $\Sigma_X$ as $P$ gets closer to $0$. It can be shown (see App.~\ref{APPsec::proofs}) that when $\Sigma_{X^*}$ is invertible, $\Sigma_W=0$ and \eqref{eq::X_P_Gauss:interpolation} reduces to \eqref{eq:GaussXpInvertible}. Also note that, as in \eqref{eq:GaussXpInvertible}, the dependence of $\hat{X}_P$ on $Y$ in \eqref{eq::X_P_Gauss:interpolation} is only through $X^*=\Sigma_{XY}\Sigma_{Y}^{-1}Y$.
	
	As mentioned in Sec.~\ref{sec::The MSE--Wasserstein-2 tradeoff}, the optimal estimator is generally not unique. Interestingly, in the Gaussian setting we can explicitly characterize a \emph{set} of optimal estimators. 
	\begin{thm}[A set of optimal estimators in the Gaussian case] \label{thm:Gaussian_not_unique}
		Consider the setting of Theorem~\ref{thm:Gaussian1}. Let $\Sigma_{\hat X_0 Y}\in \R^{n_{x}\times n_{y}}$ satisfy
		\begin{equation}
		\Sigma_{\hat X_0 Y}\Sigma_{Y}^{-1}\Sigma_{YX}=\Sigma_{X}^{\frac{1}{2}}(\Sigma_{X}^{\frac{1}{2}}\Sigma_{X^*}\Sigma_{X}^{\frac{1}{2}})^{\frac{1}{2}}\Sigma_{X}^{-\frac{1}{2}},\label{eq:Thm_Gaussian_General::M_cond1-2-1-1}
		\end{equation}
		and $W_{0}$ be a zero-mean Gaussian noise with covariance
		\begin{equation}
		\Sigma_{W_{0}}=\Sigma_{X}-\Sigma_{\hat X_0 Y}\Sigma_{Y}^{-1}\Sigma_{\hat X_0 Y}^{T}\succeq0
		\label{eq:Thm_Gaussian_General::M_cond2-2-1-1}
		\end{equation}
		that is independent of $X,Y$. Then, for any $P\in[0,G^*]$, an optimal estimator with perception index $P$ can be obtained by
		\begin{equation}
		\hat{X}_P=\left(\left(1-\frac{P}{G^{*}}\right)\Sigma_{\hat X_0 Y}+\frac{P}{G^{*}}\Sigma_{XY}\right)\Sigma_{Y}^{-1}Y+\left(1-\frac{P}{G^{*}}\right)W_{0}.
		\end{equation} The estimator given in \eqref{eq::X_P_Gauss:interpolation}
		is one solution to \eqref{eq:Thm_Gaussian_General::M_cond1-2-1-1}-\eqref{eq:Thm_Gaussian_General::M_cond2-2-1-1},
		but is generally not unique.
	\end{thm}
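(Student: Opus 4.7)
My plan is to first exhibit an explicit perception-zero optimal estimator $\hat X_{0}$ of the prescribed linear-plus-noise form, then invoke the interpolation identity of Theorem~\ref{Thm::extrapol} to generate the whole family $\hat X_{P}$. Define $\hat X_{0} := \Sigma_{\hat X_0 Y}\Sigma_{Y}^{-1}Y + W_{0}$ with $W_{0}$ independent of $(X,Y)$. By the covariance constraint \eqref{eq:Thm_Gaussian_General::M_cond2-2-1-1},
\[
\Sigma_{\hat X_{0}} = \Sigma_{\hat X_0 Y}\Sigma_{Y}^{-1}\Sigma_{\hat X_0 Y}^{T} + \Sigma_{W_{0}} = \Sigma_{X},
\]
so $\hat X_{0}\sim\mathcal{N}(0,\Sigma_{X})=p_{X}$ and hence $W_{2}(p_{X},p_{\hat X_{0}})=0$.

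Next I would show $\hat X_{0}$ attains the minimum in \eqref{eq:MSE::D_P::Alternative} at $P=0$. Since $X^{*}=\Sigma_{XY}\Sigma_{Y}^{-1}Y$ and $W_{0}\perp Y$, the cross-covariance between $\hat X_{0}$ and $X^{*}$ is $\Sigma_{\hat X_0 Y}\Sigma_{Y}^{-1}\Sigma_{YX}$, which by \eqref{eq:Thm_Gaussian_General::M_cond1-2-1-1} equals $\Sigma_{X}^{1/2}(\Sigma_{X}^{1/2}\Sigma_{X^{*}}\Sigma_{X}^{1/2})^{1/2}\Sigma_{X}^{-1/2}$. Expanding the Gaussian MSE gives
\[
\EE[\|\hat X_{0}-X^{*}\|^{2}] = \mathrm{Tr}(\Sigma_{X})+\mathrm{Tr}(\Sigma_{X^{*}})-2\,\mathrm{Tr}\left\{(\Sigma_{X}^{1/2}\Sigma_{X^{*}}\Sigma_{X}^{1/2})^{1/2}\right\} = (G^{*})^{2}.
\]
Since $P^{*}=G^{*}$ in the Gaussian case, Theorem~\ref{Thm:=00005Bthe-Distortion-Perception-funct} gives $D(0)-D^{*}=(P^{*})^{2}=(G^{*})^{2}$, so $\hat X_{0}$ is indeed optimal at $P=0$. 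Applying Theorem~\ref{Thm::extrapol} then yields $\hat X_{P} = (1-P/G^{*})\hat X_{0} + (P/G^{*})X^{*}$; substituting the explicit expressions for $\hat X_{0}$ and $X^{*}$ and collecting the coefficients of $\Sigma_{Y}^{-1}Y$ and of $W_{0}$ gives the displayed formula.

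Finally, to see that \eqref{eq::X_P_Gauss:interpolation} is one concrete solution of the system \eqref{eq:Thm_Gaussian_General::M_cond1-2-1-1}--\eqref{eq:Thm_Gaussian_General::M_cond2-2-1-1}, I would set $P=0$ in \eqref{eq::X_P_Gauss:interpolation} to read off the candidate $\Sigma_{\hat X_0 Y}=M\,\Sigma_{X^{*}}^{\dagger}\Sigma_{XY}$ with $M:=\Sigma_{X}^{1/2}(\Sigma_{X}^{1/2}\Sigma_{X^{*}}\Sigma_{X}^{1/2})^{1/2}\Sigma_{X}^{-1/2}$, and then check the two conditions algebraically. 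Using $\Sigma_{XY}\Sigma_{Y}^{-1}\Sigma_{YX}=\Sigma_{X^{*}}$, condition \eqref{eq:Thm_Gaussian_General::M_cond1-2-1-1} reduces to $M\,\Sigma_{X^{*}}^{\dagger}\Sigma_{X^{*}}=M$, and condition \eqref{eq:Thm_Gaussian_General::M_cond2-2-1-1} follows from $\Sigma_{X^{*}}^{\dagger}\Sigma_{X^{*}}\Sigma_{X^{*}}^{\dagger}=\Sigma_{X^{*}}^{\dagger}$ together with the identity $\Sigma_{X}^{1/2}T^{*}\Sigma_{X}^{1/2}=(\Sigma_{X}^{1/2}\Sigma_{X^{*}}\Sigma_{X}^{1/2})^{1/2}$. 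The one non-routine step, which I expect to be the main obstacle, is the identity $M\,\Sigma_{X^{*}}^{\dagger}\Sigma_{X^{*}}=M$. Since $\Sigma_{X^{*}}^{\dagger}\Sigma_{X^{*}}$ is the orthogonal projection onto $\mathrm{Range}(\Sigma_{X^{*}})$, this identity is equivalent to the inclusion $\Ker(\Sigma_{X^{*}})\subseteq\Ker(M)$. I would establish it by taking $v\in\Ker(\Sigma_{X^{*}})$, noting that $\Sigma_{X}^{1/2}\Sigma_{X^{*}}\Sigma_{X}^{1/2}\cdot\Sigma_{X}^{-1/2}v=\Sigma_{X}^{1/2}\Sigma_{X^{*}}v=0$, and invoking the fact that a PSD matrix and its principal square root have the same kernel to conclude $(\Sigma_{X}^{1/2}\Sigma_{X^{*}}\Sigma_{X}^{1/2})^{1/2}\Sigma_{X}^{-1/2}v=0$, whence $Mv=0$.
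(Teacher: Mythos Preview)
Your proposal is correct and follows essentially the same route as the paper: define $\hat X_{0}=\Sigma_{\hat X_0 Y}\Sigma_{Y}^{-1}Y+W_{0}$, use \eqref{eq:Thm_Gaussian_General::M_cond2-2-1-1} to get $\Sigma_{\hat X_{0}}=\Sigma_{X}$ and \eqref{eq:Thm_Gaussian_General::M_cond1-2-1-1} to get the cross-covariance with $X^{*}$, compute $\EE[\|\hat X_{0}-X^{*}\|^{2}]=(G^{*})^{2}$, and then invoke Theorem~\ref{Thm::extrapol}; your kernel-inclusion argument for $M\Sigma_{X^{*}}^{\dagger}\Sigma_{X^{*}}=M$ is exactly the content of the paper's auxiliary lemma. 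The only place where the paper is slightly more explicit is the PSD verification of \eqref{eq:Thm_Gaussian_General::M_cond2-2-1-1} for the particular choice from \eqref{eq::X_P_Gauss:interpolation}, which it obtains via a Schur-complement argument (showing $\Sigma_{Y}-\Sigma_{Y\hat X_{0}}\Sigma_{X}^{-1}\Sigma_{\hat X_{0}Y}=\Sigma_{Y|X^{*}}\succeq 0$); you may want to flesh out that step along those lines.
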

	
%%%%%%%%%%%%%%%%%%%%%%%%%%%%%%%%%%%%%%%%%%%%%%%%%%%%%%%%%%%%%%%%%%%%%%%%	

\section{A geometric perspective on the distortion-perception tradeoff}
\label{sec::geometric}
	In this section we provide a geometric point of view on our main results. Specifically, we show that the results of Theorems \ref{Thm:=00005Bthe-Distortion-Perception-funct} and \ref{Thm::extrapol} are a consequence
	of a more general geometric property of the space $\mathcal{W}_{2}(\R^{n_{x}})$.
	In the Gaussian case, this is simplified to a geometry of covariance
	matrices.
	
	Recall from \eqref{eq:ObjectiveArbitraryP} that the optimal $p_{\hat{X}}$ is the one closest to $p_{X^*}$ (in terms of Wasserstein distance) among all measures at a distance $P$ from $p_X$. This implies that to determine $p_{\hat{X}}$, we should traverse the geodesic between $p_{X^*}$ and $p_{X}$ until reaching a distance of $P$ from $p_X$.	Furthermore, $p_{\hat{X}X^*}$ should be the optimal plan between $p_{\hat{X}}$ and $p_{X^*}$. Interestingly, geodesics in Wasserstein spaces take a particularly simple form, and their explicit construction also turns out to satisfy the latter requirement. Specifically,
	let $\gamma,\mu$ be measures in $\mathcal{W}_{2}(\R^{d})$, let $\nu\in\Pi(\gamma,\mu)$	be an optimal plan attaining $W_2(\gamma,\mu)$, and let $\pi_{i}$ denote the projection $\pi_{i}:\R^{d}\times \R^{d}\rightarrow \R^{d}$
	such that $\pi_{i}((x_{1},x_{2}))=x_{i},\;i=1,2$. Then, the curve
	\begin{equation}
	\gamma_{t}\triangleq\left[(1-t)\pi_{1}+t\pi_{2}\right]\#\nu,\quad t\in[0,1]\label{eq:Geodesic:nu}
	\end{equation}
	is a constant-speed geodesic from $\gamma$ to $\mu$ in $\mathcal{W}_{2}(\R^{d})$
	\citep{ambrosio2008gradient}, where $\#$ is the push-forward operation\footnote{For measures $\gamma,\mu$ on $\mathcal{X},\mathcal{Y}$, we say that a measurable transform $T:\mathcal{X}\rightarrow\mathcal{Y}$ pushes $\gamma$ forward to $\mu$ (denoted  $T\#\gamma=\mu$) iff $\gamma(T^{-1}(B))=\mu(B)$ for any measurable $B \subseteq \mathcal{Y}.$}. Particularly, 
	\begin{equation}
	W_{2}(\gamma_{t},\gamma_{s})=|t-s|W_{2}(\gamma,\mu),
	\end{equation}
	and it follows that $\ensuremath{W_{2}(\gamma_{t},\gamma)=tW_{2}(\gamma,\mu)}$
	and $\ensuremath{W_{2}(\gamma_{t},\mu)=(1-t)W_{2}(\gamma,\mu)}$. Furthermore, if $\gamma_t,t\in [0,1]$ is a constant-speed geodesic with $\gamma_0=\gamma,\gamma_1=\mu$, then the optimal plans between $\gamma,\gamma_t$ and between $\gamma_t,\mu$ are given by \begin{equation}
	    \left[\pi_1,(1-t)\pi_{1}+t\pi_{2}\right]\#\nu,\quad \left[(1-t)\pi_{1}+t\pi_{2},\pi_2\right]\#\nu,
	\end{equation}
	respectively, where $\nu\in\Pi(\gamma,\mu)$ is some optimal plan.
	Applying \eqref{eq:Geodesic:nu} to $(\hat X_0,X^*)\sim \nu$ with $t=P/P^*$, we obtain \eqref{eq:hat_X::extrapolated}, where we show that the obtained estimator achieves $\EE[\|\hat {X}_P-X^*\|^2]=(1-t)^2 W^2_2(p_X,p_{X^*})$. This explains the result of Theorem \ref{Thm::extrapol}. 

	It is worth mentioning that this geometric interpretation is simplified under some common settings. For example, when $\gamma$ is absolutely continuous (w.r.t.~the Lebesgue measure),
	we have a measurable map $T_{\gamma\rightarrow\mu}$ which is
	the solution to the optimal transport problem with the quadratic cost \citep[Thm 1.6.2, p.16]{panaretos2020invitation}. The geodesic \eqref{eq:Geodesic:nu} then takes the form
	\begin{equation}
	\gamma_{t}=[Id+t(T_{\gamma\rightarrow\mu}-Id)]\#\gamma,\quad t\in[0,1].
	\end{equation}
	Therefore, in our setting, if $\gamma=p_{X^*}$ has a density, then we can obtain $\hat {X}_P$ by the deterministic transformation $[X^*+\left( 1-\frac{P}{P^*} \right) \left(T_{p_{X^*}\rightarrow p_{X}}(X^*)-X^*\right)]$ (see Remark about randomness in Sec.~\ref{sec::The MSE--Wasserstein-2 tradeoff}).
	
	Further simplification arises when $\gamma,\mu$ are centered non-singular Gaussian measures, in which case $T_{\gamma\rightarrow\mu}$
	is the linear and symmetric transformation \eqref{eq::T1->2:def}.
	Then, $\gamma_{t}$ is a Gaussian measure with covariance $\Sigma_{\gamma_{t}}=T_{t}\Sigma_{\gamma}T_{t}$,
	where $T_{t}\triangleq[I+t(T_{\gamma\rightarrow\mu}-I)].$
	Therefore, in the Gaussian case, the shortest path (\ref{eq:Geodesic:nu})
	between distributions is reduced to a trajectory in the geometry of covariance matrices induced by the Gelbrich distance \citep{takatsu2010wasserstein}. If additionally $\Sigma_{\gamma}$ and $\Sigma_{\mu}$ commute, then the Gelbrich distance is further reduced to the $\ell^2$-distance between matrices, as we discuss in App.~\ref{appsec::commute}.

\section{Numerical illustration} \label{sec::numerical}

In this Section we evaluate $12$ super resolution algorithms on the BSD100 dataset\footnote{All codes are freely available and provided by the authors. The BSD100 dataset is free to download for non-commercial research.}
 \citep{martin2001database}.  The evaluated algorithms include EDSR \citep{lim2017enhanced}, ESRGAN \citep{wang2018esrgan}, SinGAN \citep{shaham2019singan}, ZSSR \citep{shocher2018zero}, DIP \citep{ulyanov2018deep}, SRResNet variants which optimize MSE and VGG$_{2,2}$, SRGAN variants which optimize MSE, VGG$_{2,2}$ and VGG$_{5,4}$ in addition to an adversarial loss \citep{ledig2017photo}, ENet \citep{sajjadi2017enhancenet} (``PAT'' and ``E'' variants). Low resolution images were obtained by $ 4\times$ downsampling using a bicubic kernel.

In Figure \ref{fig:DG} we plot each method on the distortion-perception plane. Specifically, we consider natural (and reconstructed) images to be stationary random sources, and use $9 \times 9 $ patches  (totally $1.6\times 10^6$ patches) from the RGB images to empirically estimate the mean and covariance matrix for the ground-truth images, and for the reconstructions produced by each method. We then use the estimated Gelbrich distances \eqref{eq:Gelbrich_dist} between the patch distribution of each method and that of ground-truth images, as a perceptual quality index. Recall this is a lower bound on the Wasserstein distance.

We consider EDSR \cite{lim2017enhanced} to be the best MSE estimator $X^*$ since it achieves the lowest distortion among the evaluated methods. We therefore estimate the lower bound \eqref{eq:Thm2::LwrBound_G_Sx_S*} as 
\[\hat D (P)=D_{\text{EDSR}}+\left[(P_{\text{EDSR}}-P)_+\right]^2,\]
where $D_{\text{EDSR}}$ is the MSE of EDSR, and $P_{\text{EDSR}}$ is the estimated Gelbrich distance between EDSR reconstructions and ground-truth images. Note the unoccupied region under the estimated curve in Figure~\ref{fig:DG}, which is indeed unattainable according to the theory.
	
We also present 11 estimators $\hat X_t$ which we construct by interpolation between EDSR and ESRGAN \cite{wang2018esrgan},
	$ \hat X_t = tX_{\text{EDSR}} + (1-t)X_{\text{ESRGAN}}$.
	 We observe (Figure \ref{fig:DG}) that estimators constructed using these two extreme points are closer to the optimal DP tradeoff than the evaluated methods. Also note that since ESRGAN does not attain $0$-perception index, we are practically able to use negative values $t<0$ to extrapolate better perception-quality estimators $\hat X_{-0.05}$ and $\hat X_{-0.1}$. In Figure \ref{fig:SrganVSintp} we present a visual comparison between SRGAN-VGG$_{2,2}$ \citep{ledig2017photo}
	and our interpolated estimator $\hat X_{0.12}$. Both achieve roughly the same RMSE distortion ($18.09$ for SRGAN, $18.15$ for $\hat X_{0.12}$), but our estimator achieves a lower perception index. Namely, by using interpolation, we manage to achieve improvement in perceptual quality, without degradation in distortion. The improvement in visual quality is also apparent in the figure.	Additional visual comparisons can be found in the Appendix.

	\begin{figure}[h]
	\begin{centering}
			\includegraphics[bb=10bp 0bp 460bp 307bp,clip,scale=0.85,width=0.80\linewidth]{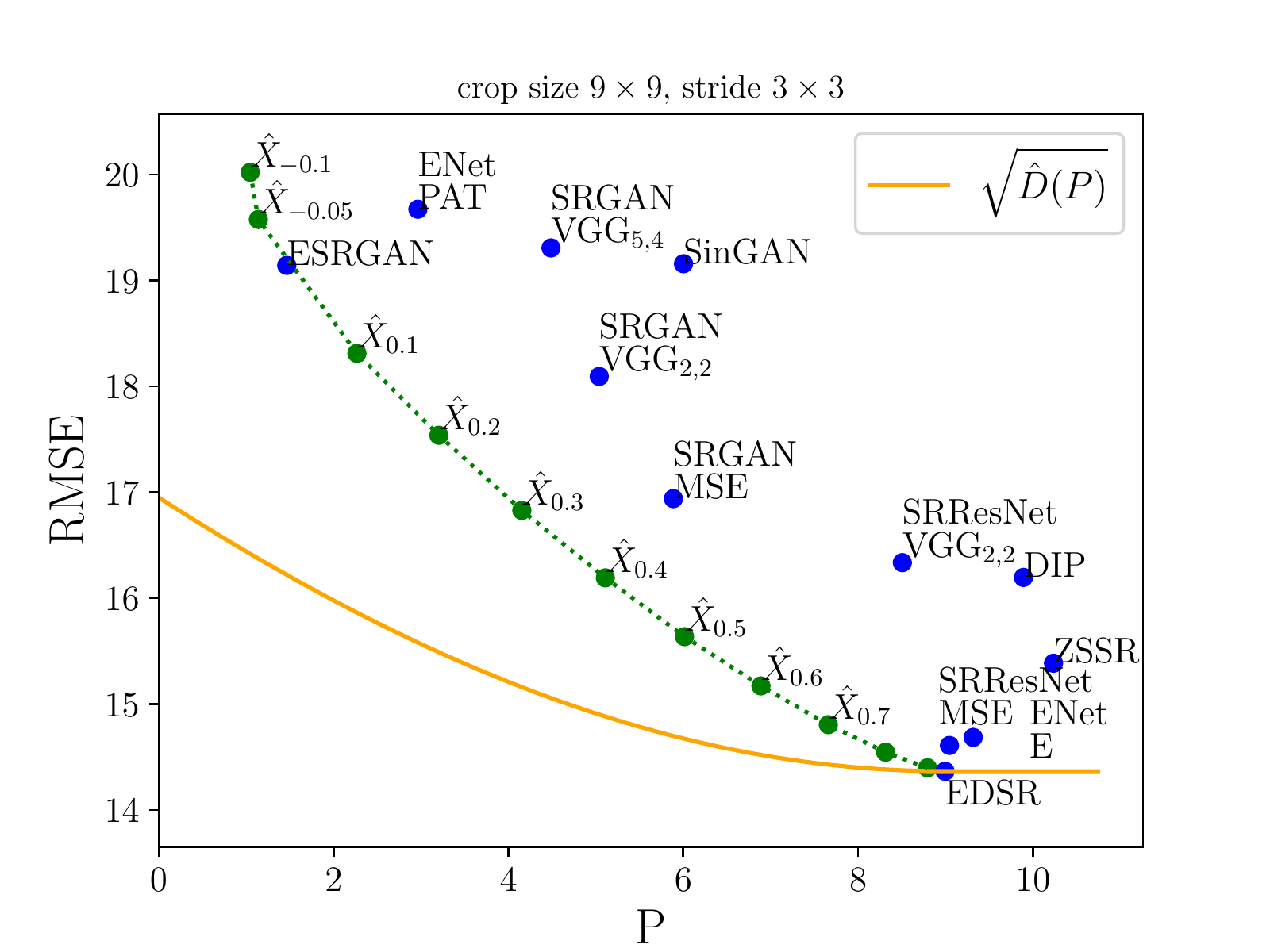}
			\par\end{centering}
	
		\centering{}\caption{\label{fig:DG} \textbf{Evaluation of SR algorithms}. We plot 12 algorithms (Blue) on the Distortion-Perception plane. Here we estimate perception using the Gelbrich distance between empirical means and covariances of the original data and reconstructed data. $\hat D(P)$ (Orange) is the estimated lower bound \eqref{eq:Thm2::LwrBound_G_Sx_S*} where we consider EDSR to be the global minimizer $X^*$. Note the unoccupied region under the estimated curve, which is unattainable.
			We also plot 11 estimators $\hat X_t$ (Green) created by an interpolation between EDSR and ESRGAN, using different relative weights $t$. Note that estimators constructed using these two extreme estimators are closer to the optimal DP curve than the compared methods.}
	\end{figure}

	\begin{figure}[]
	\begin{centering}
			\includegraphics[bb=12bp 200bp 1150bp 522bp,clip,scale=0.99,width=.95\linewidth]{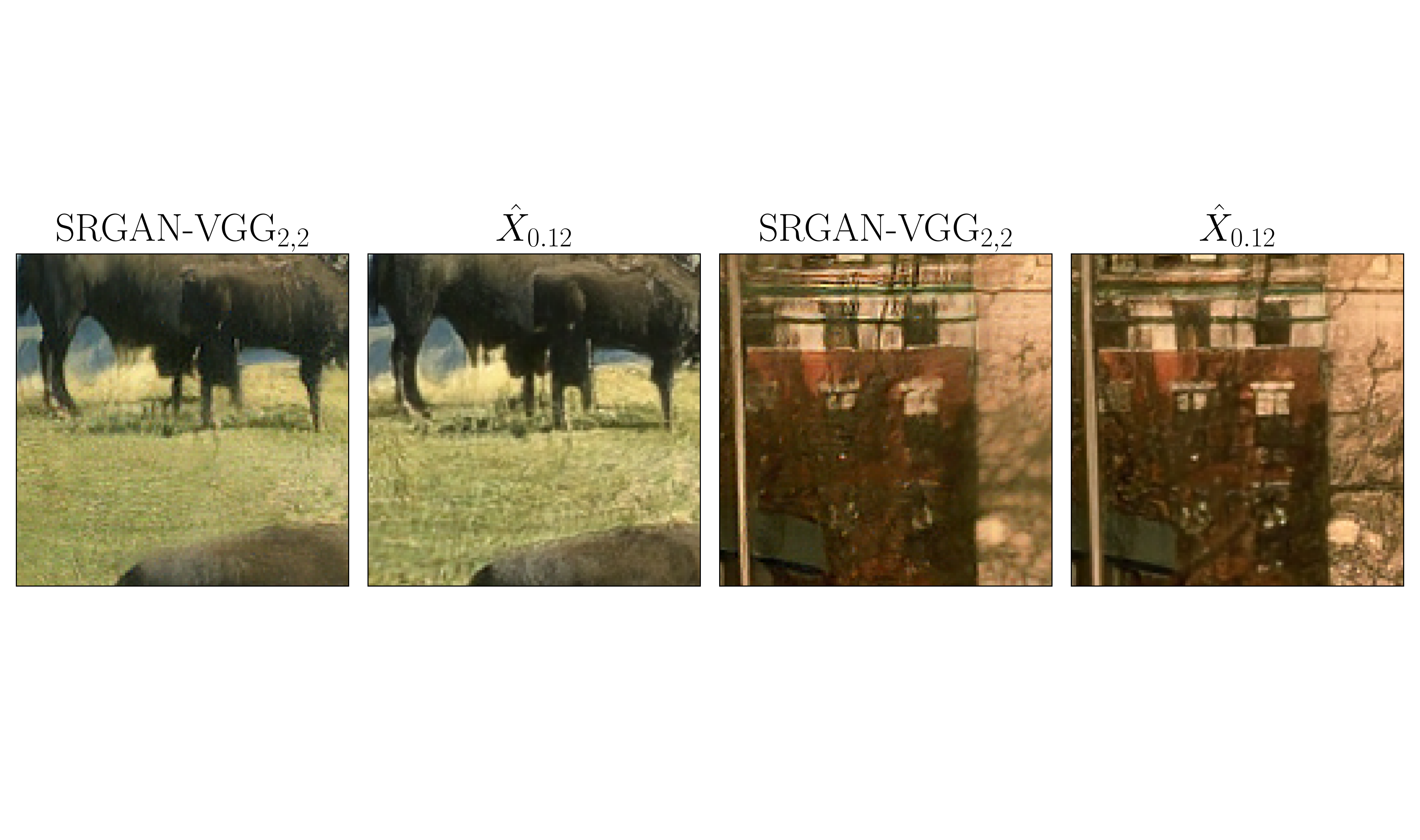}
		\par\end{centering}
	
	\centering{}\caption{ \label{fig:SrganVSintp} 
		\textbf{A visual comparison between estimators with approximately the same MSE}. Left: SRGAN-VGG$_{2,2}$. Right: $\hat X_{0.12}$, an interpolation between EDSR and ESRGAN using $t=0.12$. Observe the improvement in perceptual quality, without any significant degradation in distortion.}
	\end{figure}

\section{Conclusion}
In this paper we provide a full characterization of the distortion-perception tradeoff for the MSE distortion and the Wasserstein-$2$ perception index. We show that optimal estimators are obtained by interpolation between the minimum MSE estimator and an optimal perfect perception quality estimator. In the Gaussian case, we explicitly formulate these estimators. To the best of our knowledge, this is the first work to derive such closed-form expressions. Our work paves the way towards fully understanding the DP tradeoff under more general distortions and perceptual criteria, and bridging between fidelity and visual quality at test-time, without training different models.

\setcitestyle{numbers}
\bibliography{references}

\begin{thebibliography}{32}
\providecommand{\natexlab}[1]{#1}
\providecommand{\url}[1]{\texttt{#1}}
\expandafter\ifx\csname urlstyle\endcsname\relax
  \providecommand{\doi}[1]{doi: #1}\else
  \providecommand{\doi}{doi: \begingroup \urlstyle{rm}\Url}\fi

\bibitem[Abid et~al.(2021)Abid, Hedhli, and Gagn{\'e}]{abid2021generative}
Mohamed~Abderrahmen Abid, Ihsen Hedhli, and Christian Gagn{\'e}.
\newblock A generative model for hallucinating diverse versions of super
  resolution images.
\newblock \emph{arXiv preprint arXiv:2102.06624}, 2021.

\bibitem[Ambrosio et~al.(2008)Ambrosio, Gigli, and
  Savar{\'e}]{ambrosio2008gradient}
Luigi Ambrosio, Nicola Gigli, and Giuseppe Savar{\'e}.
\newblock \emph{Gradient flows: in metric spaces and in the space of
  probability measures}.
\newblock Springer Science \& Business Media, 2008.

\bibitem[Bahat and Michaeli(2020)]{bahat2020explorable}
Yuval Bahat and Tomer Michaeli.
\newblock Explorable super resolution.
\newblock In \emph{Proceedings of the IEEE/CVF Conference on Computer Vision
  and Pattern Recognition}, pages 2716--2725, 2020.

\bibitem[Blau and Michaeli(2018)]{blau2018perception}
Yochai Blau and Tomer Michaeli.
\newblock The perception-distortion tradeoff.
\newblock In \emph{Proceedings of the IEEE Conference on Computer Vision and
  Pattern Recognition}, pages 6228--6237, 2018.

\bibitem[Blau and Michaeli(2019)]{blau2019rethinking}
Yochai Blau and Tomer Michaeli.
\newblock Rethinking lossy compression: The rate-distortion-perception
  tradeoff.
\newblock In \emph{International Conference on Machine Learning}, pages
  675--685. PMLR, 2019.

\bibitem[Deng(2018)]{deng2018enhancing}
Xin Deng.
\newblock Enhancing image quality via style transfer for single image
  super-resolution.
\newblock \emph{IEEE Signal Processing Letters}, 25\penalty0 (4):\penalty0
  571--575, 2018.

\bibitem[Friedman and Weiss(2021)]{friedman2021posterior}
Roy Friedman and Yair Weiss.
\newblock Posterior sampling for image restoration using explicit patch priors.
\newblock \emph{arXiv preprint arXiv:2104.09895}, 2021.

\bibitem[Gelbrich(1990)]{gelbrich1990formula}
Matthias Gelbrich.
\newblock On a formula for the l2 wasserstein metric between measures on
  euclidean and hilbert spaces.
\newblock \emph{Mathematische Nachrichten}, 147\penalty0 (1):\penalty0
  185--203, 1990.

\bibitem[Gray(2006)]{gray2006toeplitz}
Robert~M Gray.
\newblock Toeplitz and circulant matrices: A review.
\newblock 2006.

\bibitem[Kawar et~al.(2021)Kawar, Vaksman, and Elad]{kawar2021stochastic}
Bahjat Kawar, Gregory Vaksman, and Michael Elad.
\newblock Stochastic image denoising by sampling from the posterior
  distribution.
\newblock \emph{arXiv preprint arXiv:2101.09552}, 2021.

\bibitem[Knott and Smith(1984)]{knott1984optimal}
Martin Knott and Cyril~S Smith.
\newblock On the optimal mapping of distributions.
\newblock \emph{Journal of Optimization Theory and Applications}, 43\penalty0
  (1):\penalty0 39--49, 1984.

\bibitem[Ledig et~al.(2017)Ledig, Theis, Husz{\'a}r, Caballero, Cunningham,
  Acosta, Aitken, Tejani, Totz, Wang, et~al.]{ledig2017photo}
Christian Ledig, Lucas Theis, Ferenc Husz{\'a}r, Jose Caballero, Andrew
  Cunningham, Alejandro Acosta, Andrew Aitken, Alykhan Tejani, Johannes Totz,
  Zehan Wang, et~al.
\newblock Photo-realistic single image super-resolution using a generative
  adversarial network.
\newblock In \emph{Proceedings of the IEEE conference on computer vision and
  pattern recognition}, pages 4681--4690, 2017.

\bibitem[Lim et~al.(2017)Lim, Son, Kim, Nah, and Mu~Lee]{lim2017enhanced}
Bee Lim, Sanghyun Son, Heewon Kim, Seungjun Nah, and Kyoung Mu~Lee.
\newblock Enhanced deep residual networks for single image super-resolution.
\newblock In \emph{Proceedings of the IEEE conference on computer vision and
  pattern recognition workshops}, pages 136--144, 2017.

\bibitem[Liu et~al.(2019)Liu, Zhang, and Xiong]{liu2019classification}
Dong Liu, Haochen Zhang, and Zhiwei Xiong.
\newblock On the classification-distortion-perception tradeoff.
\newblock \emph{arXiv preprint arXiv:1904.08816}, 2019.

\bibitem[Lugmayr et~al.(2020)Lugmayr, Danelljan, Van~Gool, and
  Timofte]{lugmayr2020srflow}
Andreas Lugmayr, Martin Danelljan, Luc Van~Gool, and Radu Timofte.
\newblock Srflow: Learning the super-resolution space with normalizing flow.
\newblock In \emph{European Conference on Computer Vision}, pages 715--732.
  Springer, 2020.

\bibitem[Martin et~al.(2001)Martin, Fowlkes, Tal, and
  Malik]{martin2001database}
David Martin, Charless Fowlkes, Doron Tal, and Jitendra Malik.
\newblock A database of human segmented natural images and its application to
  evaluating segmentation algorithms and measuring ecological statistics.
\newblock In \emph{Proceedings Eighth IEEE International Conference on Computer
  Vision. ICCV 2001}, volume~2, pages 416--423. IEEE, 2001.

\bibitem[Navarrete~Michelini et~al.(2018)Navarrete~Michelini, Zhu, and
  Liu]{navarrete2018multi}
Pablo Navarrete~Michelini, Dan Zhu, and Hanwen Liu.
\newblock Multi--scale recursive and perception--distortion controllable image
  super--resolution.
\newblock In \emph{Proceedings of the European Conference on Computer Vision
  (ECCV) Workshops}, pages 0--0, 2018.

\bibitem[Ohayon et~al.(2021)Ohayon, Adrai, Vaksman, Elad, and
  Milanfar]{ohayon2021high}
Guy Ohayon, Theo Adrai, Gregory Vaksman, Michael Elad, and Peyman Milanfar.
\newblock High perceptual quality image denoising with a posterior sampling
  cgan.
\newblock \emph{arXiv preprint arXiv:2103.04192}, 2021.

\bibitem[Panaretos and Zemel(2020)]{panaretos2020invitation}
Victor~M Panaretos and Yoav Zemel.
\newblock \emph{An invitation to statistics in Wasserstein space}.
\newblock Springer Nature, 2020.

\bibitem[Prakash et~al.(2020)Prakash, Krull, and Jug]{prakash2020divnoising}
Mangal Prakash, Alexander Krull, and Florian Jug.
\newblock Divnoising: diversity denoising with fully convolutional variational
  autoencoders.
\newblock \emph{arXiv preprint arXiv:2006.06072}, 2020.

\bibitem[Prakash et~al.(2021)Prakash, Delbracio, Milanfar, and
  Jug]{prakash2021removing}
Mangal Prakash, Mauricio Delbracio, Peyman Milanfar, and Florian Jug.
\newblock Removing pixel noises and spatial artifacts with generative diversity
  denoising methods.
\newblock \emph{arXiv preprint arXiv:2104.01374}, 2021.

\bibitem[Sajjadi et~al.(2017)Sajjadi, Scholkopf, and
  Hirsch]{sajjadi2017enhancenet}
Mehdi~SM Sajjadi, Bernhard Scholkopf, and Michael Hirsch.
\newblock Enhancenet: Single image super-resolution through automated texture
  synthesis.
\newblock In \emph{Proceedings of the IEEE International Conference on Computer
  Vision}, pages 4491--4500, 2017.

\bibitem[Shaham et~al.(2019)Shaham, Dekel, and Michaeli]{shaham2019singan}
Tamar~Rott Shaham, Tali Dekel, and Tomer Michaeli.
\newblock Singan: Learning a generative model from a single natural image.
\newblock In \emph{Proceedings of the IEEE/CVF International Conference on
  Computer Vision}, pages 4570--4580, 2019.

\bibitem[Shocher et~al.(2018)Shocher, Cohen, and Irani]{shocher2018zero}
Assaf Shocher, Nadav Cohen, and Michal Irani.
\newblock “zero-shot” super-resolution using deep internal learning.
\newblock In \emph{Proceedings of the IEEE Conference on Computer Vision and
  Pattern Recognition}, pages 3118--3126, 2018.

\bibitem[Shoshan et~al.(2019)Shoshan, Mechrez, and
  Zelnik-Manor]{shoshan2019dynamic}
Alon Shoshan, Roey Mechrez, and Lihi Zelnik-Manor.
\newblock Dynamic-net: Tuning the objective without re-training for synthesis
  tasks.
\newblock In \emph{Proceedings of the IEEE/CVF International Conference on
  Computer Vision}, pages 3215--3223, 2019.

\bibitem[Takatsu(2010)]{takatsu2010wasserstein}
Asuka Takatsu.
\newblock On wasserstein geometry of gaussian measures.
\newblock In \emph{Probabilistic approach to geometry}, pages 463--472.
  Mathematical Society of Japan, 2010.

\bibitem[Ulyanov et~al.(2018)Ulyanov, Vedaldi, and Lempitsky]{ulyanov2018deep}
Dmitry Ulyanov, Andrea Vedaldi, and Victor Lempitsky.
\newblock Deep image prior.
\newblock In \emph{Proceedings of the IEEE conference on computer vision and
  pattern recognition}, pages 9446--9454, 2018.

\bibitem[Unser(1984)]{unser1984approximation}
Michael Unser.
\newblock On the approximation of the discrete karhunen-loeve transform for
  stationary processes.
\newblock \emph{Signal Processing}, 7\penalty0 (3):\penalty0 231--249, 1984.

\bibitem[Wang et~al.(2018)Wang, Yu, Wu, Gu, Liu, Dong, Qiao, and
  Change~Loy]{wang2018esrgan}
Xintao Wang, Ke~Yu, Shixiang Wu, Jinjin Gu, Yihao Liu, Chao Dong, Yu~Qiao, and
  Chen Change~Loy.
\newblock Esrgan: Enhanced super-resolution generative adversarial networks.
\newblock In \emph{Proceedings of the European Conference on Computer Vision
  (ECCV) Workshops}, pages 0--0, 2018.

\bibitem[Wang et~al.(2019)Wang, Yu, Dong, Tang, and Loy]{wang2019deep}
Xintao Wang, Ke~Yu, Chao Dong, Xiaoou Tang, and Chen~Change Loy.
\newblock Deep network interpolation for continuous imagery effect transition.
\newblock In \emph{Proceedings of the IEEE/CVF Conference on Computer Vision
  and Pattern Recognition}, pages 1692--1701, 2019.

\bibitem[Wang et~al.(2004)Wang, Bovik, Sheikh, and Simoncelli]{wang2004image}
Zhou Wang, Alan~C Bovik, Hamid~R Sheikh, and Eero~P Simoncelli.
\newblock Image quality assessment: from error visibility to structural
  similarity.
\newblock \emph{IEEE transactions on image processing}, 13\penalty0
  (4):\penalty0 600--612, 2004.

\bibitem[Zhang et~al.(2019)Zhang, Wang, and Nehorai]{zhang2019optimal}
Zhen Zhang, Mianzhi Wang, and Arye Nehorai.
\newblock Optimal transport in reproducing kernel hilbert spaces: Theory and
  applications.
\newblock \emph{IEEE transactions on pattern analysis and machine
  intelligence}, 42\penalty0 (7):\penalty0 1741--1754, 2019.

\end{thebibliography}
\bibliographystyle{plainnat}

%%%%%%%%%%%%%%%%%%%%%%%%%%%%%%%%%%%%%%%%%%%%%%%%%%%%%%%%%%%%
%%%%%%%%%%%%%%%%%%%%%%%%%%%%%%%%%%%%%%%%%%%%%%%%%%%%%%%%%%%%

\newpage
	\appendix
	\begin{center}
	 { \Large\bf A Theory of the Distortion-Perception Tradeoff in  
	 
	 Wasserstein Space - Supplementary Material}
	 \end{center}

	In Appendix A we present the distortion-perception tradeoff in general metric spaces. We formulate the problem of finding a perfect perception quality estimator as an optimal transportation problem, and extend some of the background provided in Sec.~\ref{Sec::D-P::optimalTransport}. In Appendix B we provide detailed proofs to the results appearing in the paper. Appendix C examines settings where covariance matrices commute. In Appendix D we discuss the details of the numerical illustrations of Sec.~\ref{sec::numerical} and provide additional visual results.

	\section{Background and extensions}\label{Appsec::Supp}
\subsection{The distortion-perception function} 

{
	In Sec.~\ref{Sec::D-P::optimalTransport} of the main text we presented the setting of Euclidean space for simplicity. For the sake of completeness, we present here a more general setup.
}

	Let $X,Y$ be random variables on separable metric spaces $\mathcal{X\mathrm{,}Y}$,
	with joint probability $p_{X,Y}$ on $\mathcal{X}\times\mathcal{Y}$. Given a distortion
	function $d:\mathcal{X}\times\mathcal{X}\rightarrow \R^{+}\cup\{0\}$, we aim to find an estimator $\hat{X}\in\mathcal{X}$ defined by
	a conditional distribution $p_{\hat{X}|Y}$ (which induces a marginal distribution $p_{\hat{X}}$), minimizing the expectation $\EE [d(X,\hat{X})]$ under the constraint $d_{p}(p_{\hat{X}},p_X)\leq P$. Here, $d_{p}$ is some divergence between probability measures.
	We further assume the Markov relation $X\rightarrow Y\rightarrow\hat{X}$,
	i.e.~$X,\hat{X}$ are independent given $Y$. 
	Similarly
	to \citet{blau2018perception} we define the distortion-perception
	function
	\begin{equation}
	D(P)=\min_{p_{\hat{X}|Y}}\left\{\EE[d(X,\hat{X})] \;:\; d_{p}(p_{\hat{X}},p_X)\leq P\right\}.\label{APP_eq:D_P::General_definition}
	\end{equation}
	
	We can write (\ref{APP_eq:D_P::General_definition}) as 
	\begin{equation}
	D(P)=\min_{p_{\hat{X}|Y}}\left\{J(p_{\hat{X}|Y})\;:\;d_{p}(p_{\hat{X}},p_X)\leq P\right\},
	\end{equation}
	where we defined $J(p_{\hat{X}|Y})\triangleq \EE_{p_{\hat{X}Y}}[d(X,\hat{X})]$. This objective can be written as
	\begin{align}
	\label{APP_eq:J_Objective}
	J(p_{\hat{X}|Y})
	&=\EE_{p_{\hat{X}Y}}\EE[d(X,\hat{X})|Y,\hat{X}].
	\end{align}
	Let us define the cost function 
	\begin{align}
	\rho(\hat{x},y)&\triangleq \EE[d(X,\hat{X})|Y=y,\hat{X}=\hat{x}]\nonumber\\
	&=\EE[d(X,\hat{x})|Y=y],\label{eq:rho::def}
	\end{align}
	where we used the fact that $X$ is independent of $\hat{X}$ given $Y$. Then we have that the objective (\ref{APP_eq:J_Objective}) boils down to $J(p_{\hat{X}|Y})=\EE_{p_{\hat{X}Y}}\rho(\hat{X},Y)$.
	
	The problem of finding a \emph{perfect} perceptual quality estimator can be now written as an optimal transport problem
	\[
	D(P=0)=\min_{p_{\hat{X}|Y}}\EE_{p_{\hat{X}Y}}\rho(\hat{X},Y)\quad\mathrm{s.t.}\;\;p_{\hat{X}}=p_X,p_{Y}=p_Y.
	\]
	
	{
	In the setting where $\mathcal{X\mathrm{,}Y}$ are Euclidean spaces, considering the MSE distortion $d(x,\hat x) = \|x-\hat x\|^2$, we write
	\begin{align*}
		\rho(\hat x,y) & = \EEb{\|X-\hat X\|^2|Y=y,\hat X = \hat x}
		\\&
		= \EEb{\|X-\hat x\|^2 | Y=y}
		\\&
		= \EEb{\|X\|^2|Y=y} - 2\hat {x}^T\EEb{ X |Y= y} + \|\hat x \|^2
		\\&
		= \EEb{\|X-X^*\|^2|Y=y} + \left\{\EEb{\|X^*\|^2|Y=y} - 2\hat {x}^T\EEb{ X | Y=y} + \|\hat x \|^2\right\}
		\end{align*}
		
		and we have
		\begin{align*}
	 J(p_{\hat{X}|Y})=\EE_{p_{\hat X Y}} \rho(\hat X,Y) & = \EE_{p_{\hat X Y}} \EEb{\|X-X^*\|^2|Y} + \EE_{p_{\hat X Y}} \EEb{\|\hat X-X^*\|^2|Y,\hat X}
	\\&
	= D^* + \EE_{p_{\hat X Y}} \left[ \|\hat X-X^*\|^2 \right].
	\end{align*}
}

\subsection{The optimal transportation problem}
\label{appsec::Optimal Transport}
Assume $\mathcal{X\mathrm{,}Y}$ are Radon spaces \citep{ambrosio2008gradient}. Let $\rho:\mathcal{X}\times\mathcal{Y}\rightarrow \R$
be a non-negative Borel cost function, and let $q^{(x)},p^{(y)}$ be probability measures
on $\mathcal{X},\mathcal{Y}$ respectively. The optimal transport problem is then given in the following formulations.

In the \emph{Monge}
formulation, we search for an optimal transformation, often referred to as an \emph{optimal map}, $T:\mathcal{Y}\rightarrow\mathcal{X}$
minimizing 
\begin{equation}
E\rho(T(Y),Y)\,,\,\mathrm{s.t.}\,Y\sim q^{(y)},T(Y)\sim q^{(x)}.\label{APP_eq:OT::Monge_prob}
\end{equation}
Note that the Monge problem seeks for a deterministic map,
and might not have a solution.

	In the \emph{Kantorovich}
	formulation, we wish to find a probability measure $q=q_{XY}$ on $\mathcal{X}\times\mathcal{Y}$,
	minimizing 
	\begin{equation}
	E_{q}\rho(X,Y)\,,\,\mathrm{s.t.}\,q\in\Pi(q^{(x)},p^{(y)}). \label{APP_eq:OT::Kantorovich_problem}
	\end{equation}
	$\Pi$ is the set of probabilities
	on $\mathcal{X}\times\mathcal{Y}$ with marginals $q^{(x)},p^{(y)}$. {A probability minimizing \eqref{APP_eq:OT::Kantorovich_problem} is called an \emph{optimal plan}, and we denote $q\in \Pi_o(q^{(x)},p^{(y)})$.}
	Note that when $\rho(x,y)=d^p(x,y)$ and $d(x,y)$ is a metric, taking $\inf$ over
	(\ref{APP_eq:OT::Kantorovich_problem}) yields the Wasserstein distance
	$W_p^p(q^{(x)},p^{(y)})$ induced by $d(x,y)$.
	
	In the case where $\mathcal{X}=\mathcal{Y}= \R^d$ and $\rho(x,y)=\|x-y\|^2$ is the quadratic cost (and we assume $q^{(x)},p^{(y)}$ have finite first and second moments), there exists an optimal plan minimizing \eqref{APP_eq:OT::Kantorovich_problem}. If $p^{(y)}$ is absolutely continuous (w.r.t Lebesgue measure), this plan is given by an optimal map which is the unique solution to \eqref{APP_eq:OT::Monge_prob} \citep[p.5,16]{panaretos2020invitation}.

\subsection{Optimal maps between Gaussian measures} \label{appsec::Optimal Maps between Gaussian Measures}
	When $\mu_1=\mathcal{N}(m_{1},\Sigma_{1})$ and $\mu_2=\mathcal{N}(m_{2},\Sigma_{2})$ are Gaussian distributions on $\R^{d}$, we have that
\begin{equation}
W_{2}^{2}(\mu_{1},\mu_{2})=\Vert m_{1}-m_{2}\Vert_{2}^{2}+\mathrm{Tr}\left\{\Sigma_{1}+\Sigma_{2}-2\left(\Sigma_{1}^{\frac{1}{2}}\Sigma_{2}\Sigma_{1}^{\frac{1}{2}}\right)^{\frac{1}{2}}\right\}.\label{APP_eq:W_2::Gaussians}
\end{equation}

	If $\Sigma_1$ and $\Sigma_2$ are non-singular, then the distribution attaining the optimum in \eqref{eq:Wp_def} corresponds to
	\begin{equation}\label{APP_eq:optimal_dist_for_gauss}
	U\sim \mathcal{N}(m_1,\Sigma_1),\;\; V=m_{2}+T_{1\rightarrow 2}(U-m_{1}),
	\end{equation}
	where
	\begin{equation}\label{APP_eq::T1->2:def}
	T_{1\rightarrow 2}=\Sigma_{1}^{-\frac{1}{2}}\left(\Sigma_{1}^{\frac{1}{2}}\Sigma_{2}\Sigma_{1}^{\frac{1}{2}}\right)^{\frac{1}{2}}\Sigma_{1}^{-\frac{1}{2}}
	\end{equation}
	is the optimal transformation pushing forward from $\mathcal{N}(0,\Sigma_{1})$ to $\mathcal{N}(0,\Sigma_{2})$ \citep{knott1984optimal}. This transformation satisfies $\Sigma_{2}=T_{1\rightarrow 2}\Sigma_{1}T_{1\rightarrow 2}.$ 
	
	When distributions are singular, we have the following.
	\begin{lem}
		\label{Lemma::=00005BZhang,Wang.-Nehorai,-2020, Thm 3}\citep[Theorem 3]{zhang2019optimal} Let $\mu$ and $\nu$
		be two centered Gaussian measures defined on $\R^{n}$. Let $P_{\mu}$
		be the projection matrix onto $\mathrm{Im}\{\Sigma_{\mu}\}$. Then the optimal
		transport map $T_{\mu\rightarrow P_{\mu}\#\nu}$ from $\mu$ to $P_{\mu}\#\nu$
		is linear and self-adjoint, and can be written as
		\[
		T_{\mu \rightarrow P_{\mu}\#\nu}=(\Sigma_{\mu}^{1/2})^{\dagger}(\Sigma_{\mu}^{1/2}\Sigma_{\nu}\Sigma_{\mu}^{1/2})^{1/2}(\Sigma_{\mu}^{1/2})^{\dagger}.
		\]
		In the case $\mathrm{Im}\{\Sigma_{\nu}\}\subseteq \mathrm{Im}\{\Sigma_{\mu}\}$ we have
		$P_{\mu}\#\nu=\nu$, hence $T_{\mu\rightarrow \nu}=T_{\mu \rightarrow P_{\mu}\#\nu}$
		is the optimal transport map from $\mu$ to $\nu$, even where measures
		are singular. 
	\end{lem}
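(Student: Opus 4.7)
The plan is to verify the formula constructively: show the explicit $T$ in the statement is a symmetric positive semidefinite linear map, compute that it pushes $\mu$ forward to $P_\mu\#\nu$, and then conclude optimality via the Knott--Smith / Brenier cyclical monotonicity criterion. This avoids any need for absolute continuity of $\mu$ on all of $\R^n$.

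First I would set up notation: write $V=\mathrm{Im}\{\Sigma_\mu\}$, so that $\mathrm{supp}(\mu)=V$ and $\mu$ has a density with respect to Lebesgue measure on $V$. Note $P_\mu\#\nu$ is also centered Gaussian with covariance $P_\mu\Sigma_\nu P_\mu$, supported in $V$. The candidate
\begin{equation*}
T=(\Sigma_{\mu}^{1/2})^{\dagger}(\Sigma_{\mu}^{1/2}\Sigma_{\nu}\Sigma_{\mu}^{1/2})^{1/2}(\Sigma_{\mu}^{1/2})^{\dagger}
\end{equation*}
is manifestly symmetric, of the form $A M A$ with $A=(\Sigma_\mu^{1/2})^\dagger$ and $M=(\Sigma_\mu^{1/2}\Sigma_\nu\Sigma_\mu^{1/2})^{1/2}$ both symmetric and $M \succeq 0$, so $T \succeq 0$.

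Next I would verify $T\#\mu=P_\mu\#\nu$. Since $T$ is linear and $\mu$ is Gaussian, $T\#\mu$ is centered Gaussian with covariance $T\Sigma_\mu T$, so it is enough to show $T\Sigma_\mu T=P_\mu\Sigma_\nu P_\mu$. The key identities are $(\Sigma_\mu^{1/2})^\dagger\Sigma_\mu^{1/2}=P_\mu$ and $(\Sigma_\mu^{1/2})^\dagger\Sigma_\mu(\Sigma_\mu^{1/2})^\dagger=P_\mu$, together with the fact that $M$ has image contained in $V$ (since $\Sigma_\mu^{1/2}\Sigma_\nu\Sigma_\mu^{1/2}$ does), so $P_\mu M=M=MP_\mu$. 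Expanding $T\Sigma_\mu T$ and collapsing the middle factor $(\Sigma_\mu^{1/2})^\dagger\Sigma_\mu(\Sigma_\mu^{1/2})^\dagger=P_\mu$ gives
\begin{equation*}
T\Sigma_\mu T=(\Sigma_\mu^{1/2})^\dagger M P_\mu M(\Sigma_\mu^{1/2})^\dagger=(\Sigma_\mu^{1/2})^\dagger M^2 (\Sigma_\mu^{1/2})^\dagger=(\Sigma_\mu^{1/2})^\dagger\Sigma_\mu^{1/2}\Sigma_\nu\Sigma_\mu^{1/2}(\Sigma_\mu^{1/2})^\dagger=P_\mu\Sigma_\nu P_\mu,
\end{equation*}
as required.

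For optimality, since $T$ is symmetric and positive semidefinite, $Tx=\nabla\phi(x)$ for the convex function $\phi(x)=\tfrac{1}{2}x^\top T x$. Hence the coupling $(X,TX)$ with $X\sim\mu$ is supported on the graph of $\nabla\phi$, which is cyclically monotone. By the Knott--Smith optimality criterion (or equivalently Brenier's theorem applied on the subspace $V$, where $\mu$ is absolutely continuous), this coupling is optimal for the quadratic cost between $\mu$ and $T\#\mu=P_\mu\#\nu$, so $T$ is the optimal transport map. Finally, if $\mathrm{Im}\{\Sigma_\nu\}\subseteq V$, then $P_\mu\Sigma_\nu=\Sigma_\nu$ and $\Sigma_\nu P_\mu=\Sigma_\nu$, so the centered Gaussian $P_\mu\#\nu$ has covariance $P_\mu\Sigma_\nu P_\mu=\Sigma_\nu$ and equals $\nu$, giving $T_{\mu\to\nu}=T_{\mu\to P_\mu\#\nu}$.

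The main obstacle is the singular case: Brenier's theorem as usually stated requires the source measure to be absolutely continuous w.r.t.\ Lebesgue measure on the ambient $\R^n$, which fails whenever $\Sigma_\mu$ is rank-deficient. The cleanest route is to bypass this by invoking cyclical monotonicity of the support of $(X,TX)$, which characterizes optimal couplings with no absolute continuity hypothesis; the extra care needed is only in handling the pseudoinverses through the identity $(\Sigma_\mu^{1/2})^\dagger\Sigma_\mu^{1/2}=P_\mu$ and in verifying that $M$ preserves $V$ so that the ``projection'' cancellations in the covariance computation go through.
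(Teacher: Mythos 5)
Your proof is correct, but note that the paper itself does not prove this lemma at all: it is imported verbatim, with a citation, from Zhang, Wang and Nehorai (2019, Theorem 3), so there is no in-paper argument to compare against; what you have supplied is a self-contained derivation. Your route is sound: $T=(\Sigma_\mu^{1/2})^\dagger(\Sigma_\mu^{1/2}\Sigma_\nu\Sigma_\mu^{1/2})^{1/2}(\Sigma_\mu^{1/2})^\dagger$ is symmetric PSD; the identities $(\Sigma_\mu^{1/2})^\dagger\Sigma_\mu(\Sigma_\mu^{1/2})^\dagger=P_\mu$ and $P_\mu M=M=MP_\mu$ (the latter valid because $\mathrm{Im}\{M\}=\mathrm{Im}\{M^2\}\subseteq\mathrm{Im}\{\Sigma_\mu\}$ for the symmetric PSD root $M$) give $T\Sigma_\mu T=P_\mu\Sigma_\nu P_\mu$, so $T\#\mu=P_\mu\#\nu$; and optimality follows from the Knott--Smith criterion because the coupling $(X,TX)$ is concentrated on the graph of the gradient of the convex quadratic $\phi(x)=\tfrac{1}{2}x^\top Tx$ (Gaussians have finite second moments, so the criterion applies). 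Using cyclical monotonicity instead of Brenier's theorem is exactly the right move for singular $\Sigma_\mu$, and this is essentially how such formulas are established in the cited literature. The one point worth making explicit is the definite article in the statement: Knott--Smith only yields that $T$ is \emph{an} optimal map, whereas ``the'' optimal map ($\mu$-a.e.\ uniqueness, which is what makes the lemma's formula canonical) requires the remark you make only parenthetically, namely Brenier/McCann uniqueness applied on the subspace $V=\mathrm{Im}\{\Sigma_\mu\}$, where $\mu$ is absolutely continuous and the target $P_\mu\#\nu$ is also supported. With that sentence promoted from an aside to an actual step, your argument fully establishes the lemma, including the final observation that $\mathrm{Im}\{\Sigma_\nu\}\subseteq\mathrm{Im}\{\Sigma_\mu\}$ forces $P_\mu\Sigma_\nu P_\mu=\Sigma_\nu$ and hence $P_\mu\#\nu=\nu$.
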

	
	%==============================================================================================
	\section{Proof of main results}
	\label{APPsec::proofs}
    
    {
    In this Section we provide proofs of the main results of this paper. In lemmas \ref{lem::orthogonality} and \ref{lem::DP w2equivalent} we present some alternative representations for $D(P)$. In Lemma \ref{lem::D-triangleinq-lowbound} we obtain a lower bound on $D(P)$. We then prove Theorem \ref{Thm::extrapol} (via a more general result given by Lemma \ref{lem:eps-interpolation}), where the lower bound of Lemma \ref{lem::D-triangleinq-lowbound} is attained. Equipped with Theorem \ref{Thm::extrapol}, we prove Theorem \ref{Thm:=00005Bthe-Distortion-Perception-funct} which is the main result of our paper. }
    	
	\subsection{Relations between $D(P)$ and $X^*$}
	\label{APP::optimalestimators}
	In this section we relate the distortion-perception function $D(P)$ given in \eqref{eq:MSE::D_P::Definition} to the estimator $X^*= \EE \left[ X|Y \right]$. Recall that  $D^*=\EEd{X}{X^*}$ and $P^*=W_2(p_X,p_{X^*}).$
	
	\begin{lem}
	\label{lem::orthogonality}
		If $\hat{X}$ is independent of $X$ given $Y$, then its MSE can be decomposed as $\EE[\|X-\hat{X}\|^2]=\EE[\|X-X^*\|^2+\EE[\|X^*-\hat{X}\|^2]$ and hence
		\begin{equation}
		D(P)=D^*+\min_{p_{\hat{X}|Y}}\left\{\EE_ {p_{\hat{X}Y}}\left[\|\hat{X}-X^{*}\|^{2} \right] \;:\; W_{2}(p_{\hat{X}},p_X) \leq P \right\}.
		\end{equation}
	\end{lem}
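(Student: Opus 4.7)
\bigskip
\noindent\emph{Proof proposal.} The plan is to establish the Pythagorean-type orthogonality decomposition and then substitute it into the definition of $D(P)$. I would begin by writing
\[
X - \hat X = (X - X^*) + (X^* - \hat X),
\]
so that
\[
\|X-\hat X\|^2 = \|X - X^*\|^2 + \|X^* - \hat X\|^2 + 2 (X-X^*)^T(X^*-\hat X).
\]
Taking expectations, the claim reduces to showing that the cross term $\EE[(X-X^*)^T(X^*-\hat X)]$ vanishes.

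Next, I would use the tower property $\EE[\,\cdot\,] = \EE[\EE[\,\cdot\,|Y]]$ together with the two structural facts at our disposal. First, $X^* = \EE[X|Y]$ is $Y$-measurable, so conditionally on $Y$ it is a constant vector, and one can pull it outside any conditional expectation. Second, the Markov assumption $X \to Y \to \hat X$ means $X$ and $\hat X$ are conditionally independent given $Y$. Hence
\[
\EE\bigl[(X-X^*)^T(X^*-\hat X)\,\big|\,Y\bigr] = \EE\bigl[X-X^*\,\big|\,Y\bigr]^T \EE\bigl[X^*-\hat X\,\big|\,Y\bigr],
\]
and the first factor equals $\EE[X|Y] - X^* = 0$. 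Taking an outer expectation gives the orthogonality, and therefore
\[
\EE[\|X-\hat X\|^2] = \EE[\|X-X^*\|^2] + \EE[\|X^*-\hat X\|^2] = D^* + \EE[\|X^*-\hat X\|^2].
\]

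Finally, since $D^*$ does not depend on the choice of $p_{\hat X|Y}$, it can be pulled out of the minimization in \eqref{eq:MSE::D_P::Definition}, leaving exactly
\[
D(P)= D^* + \min_{p_{\hat{X}|Y}}\bigl\{\EE[\|\hat{X}-X^{*}\|^{2}] \;:\; W_{2}(p_{\hat{X}},p_X) \leq P \bigr\},
\]
which is the stated reformulation. I do not anticipate a real obstacle here: the argument is essentially the familiar orthogonality principle of MMSE estimation, with the only subtle point being the careful use of conditional independence of $X$ and $\hat X$ given $Y$ to handle the cross term componentwise. Mild care may also be needed to justify interchanging expectation and the inner product, but this is immediate under the standing finite second moment assumption.
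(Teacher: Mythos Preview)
Your proposal is correct and follows essentially the same route as the paper: expand the square, use the tower property together with conditional independence of $X$ and $\hat X$ given $Y$ to kill the cross term via $\EE[X-X^*\mid Y]=0$, and then pull the constant $D^*$ out of the minimization.
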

	\begin{proof}
		For any estimator we can write the MSE
		\begin{equation}\label{eq:app-split-error}
		\EE\left[\|X-\hat{X}\|^{2}\right]=\EE\left[\\|X-X^{*}\|^{2}\right]+\EE\left[\|\hat{X}-X^{*}\|^{2}\right]-2\EE\left[(X-X^{*})^{T}(\hat{X}-X^{*})\right].
		\end{equation}
		Since in our case $\hat{X}$ is independent of $X$ given $Y$, we show that the third term vanishes.
		\begin{align*}
		\EE\left[(X-X^{*})^{T}(\hat{X}-X^{*})\right] &= \EE\left[\EE(X-X^{*})^{T}(\hat{X}-X^{*})|Y\right]\\
				&=\EE\Big[\underbrace{\EE\left[(X-X^{*})^{T}|Y\right]}_{=0}\left[\EE(\hat{X}-X^{*})|Y\right] \Big] =0.
		\end{align*}
		Since $X^{*}$is a deterministic function of $Y$, $D^*=\EE\left[\|X-X^{*}\|^{2}\right]$
		is a property of the problem, and does not depend on the choice of
		$p_{\hat{X}|Y}$, which, in view of \eqref{eq:app-split-error} completes the proof.
		\end{proof}
Next, we express $D(P)$ in terms of the Wasserstein distance between $p_{\hat X}$ and $p_{X^*}$.	
\begin{lem}[Eq.~(14)]
	\label{lem::DP w2equivalent}
	\begin{equation}
	D(P)=D^*+\min_{p_{\hat{X}}}\left \{ W^2_2(p_{\hat{X}},p_{X^{*}}) \;:\; W_{2}(p_{\hat{X}},p_X)\leq P \right \}.
	\end{equation}
\end{lem}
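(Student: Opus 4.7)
The plan is to deduce Lemma 3 from Lemma 2 by showing that the inner minimization in Lemma 2, which is over conditionals $p_{\hat X|Y}$ (subject to the Markov constraint $\hat X \perp X \mid Y$) and whose objective is $\EE[\|\hat X - X^*\|^2]$, is in fact equal to the minimization over unconditional marginals $p_{\hat X}$ with objective $W_2^2(p_{\hat X},p_{X^*})$. The argument is a two-sided inequality: the ``$\geq$'' direction is immediate from the definition of the Wasserstein distance, and the ``$\leq$'' direction requires constructing, from any feasible $p_{\hat X}$, a conditional $p_{\hat X|Y}$ that (i) induces the given marginal, (ii) satisfies the Markov constraint, and (iii) realizes the distortion $W_2^2(p_{\hat X},p_{X^*})$.

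For the lower bound, I would observe that for any admissible $p_{\hat X|Y}$, the induced joint $p_{\hat X X^*}$ (obtained by marginalizing out $Y$ from $p_{\hat X Y}$, using that $X^*$ is a deterministic function of $Y$) lies in $\Pi(p_{\hat X},p_{X^*})$. Hence
\begin{equation*}
\EE[\|\hat X - X^*\|^2] \;\geq\; W_2^2(p_{\hat X},p_{X^*}),
\end{equation*}
and taking the minimum over feasible $p_{\hat X|Y}$ yields the $\geq$ direction of Lemma 3, since the constraint $W_2(p_{\hat X},p_X)\leq P$ depends only on the marginal $p_{\hat X}$.

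For the matching upper bound, fix any $p_{\hat X}$ with $W_2(p_{\hat X},p_X)\leq P$ and let $\pi^*\in \Pi(p_{\hat X},p_{X^*})$ be an optimal plan attaining $W_2^2(p_{\hat X},p_{X^*})$ (existence of such a plan for measures on $\R^{n_x}$ with finite second moments is recalled in App.~\ref{appsec::Optimal Transport}). Disintegrating $\pi^*$ with respect to its second marginal gives a conditional kernel $p_{\hat X|X^*}$ such that $\pi^* = p_{\hat X|X^*}\otimes p_{X^*}$. I would then \emph{define} the estimator by
\begin{equation*}
p_{\hat X|Y}(\cdot \mid y) \;\triangleq\; p_{\hat X|X^*}(\cdot \mid X^*(y)),
\end{equation*}
using that $X^*$ is a deterministic function of $Y$. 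The induced full joint is $p_{\hat X X Y}(\hat x,x,y)=p_{XY}(x,y)\,p_{\hat X|X^*}(\hat x\mid X^*(y))$, which by construction satisfies $\hat X \perp X \mid Y$. Marginalizing over $Y$ yields $p_{\hat X X^*}=\pi^*$, so $\hat X$ has the prescribed marginal $p_{\hat X}$ (hence the perception constraint remains satisfied) and
\begin{equation*}
\EE[\|\hat X - X^*\|^2] \;=\; \EE_{\pi^*}[\|\hat X-X^*\|^2] \;=\; W_2^2(p_{\hat X},p_{X^*}).
\end{equation*}
Combining this with Lemma~\ref{lem::orthogonality} gives the $\leq$ direction, completing the proof.

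The only real subtlety is the upper-bound construction — specifically, verifying that disintegrating an optimal plan and pulling it back through $X^*=f(Y)$ simultaneously preserves the marginal $p_{\hat X}$, attains the Wasserstein value, and enforces conditional independence $\hat X\perp X\mid Y$. The key enabling fact is that $X^*$ is a deterministic function of $Y$, which makes the gluing unambiguous and ensures that the ``extra'' randomness in $\hat X$ beyond $X^*$ is independent of $X$ given $Y$.
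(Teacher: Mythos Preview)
Your proof is correct and follows essentially the same two-sided inequality argument as the paper: the lower bound via $p_{\hat X X^*}\in\Pi(p_{\hat X},p_{X^*})$ and the upper bound via pulling back an optimal plan through $X^*$. The only difference is cosmetic---you spell out the disintegration-and-pullback construction of $p_{\hat X|Y}$ explicitly, whereas the paper defers this step to the ``Remark about uniqueness'' in Sec.~\ref{sec::The MSE--Wasserstein-2 tradeoff}.
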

\begin{proof}
	Denote $W_2^2(\mathcal{B}_P,p_{X^{*}})=\min_{p_{\hat{X}}:W_{2}(p_{\hat{X}},p_X)\leq P}W^2_2(p_{\hat{X}},p_{X^{*}})$, where $\mathcal{B}_P$ is the ball of radius $P$ around $p_X$ in Wasserstein space.
	
	From Lemma \ref{lem::orthogonality} we have
	\begin{equation}
	D(P)=D^*+\min_{p_{\hat{X}|Y}:W_{2}(p_{\hat{X}},p_X)\leq P}\EE_ {p_{\hat{X}Y}}\left[\|\hat{X}-X^{*}\|^{2} \right].
	\end{equation}
	For every $p_{\hat X |Y}$ whose marginal attains $W_{2}(p_{\hat{X}},p_X)\leq P$ we have,
	\begin{align*}
	\EE_{p_{\hat{X}Y}} \left[\|\hat{X}-X^{*}\|^{2} \right]  & \geq \inf_{q\in\Pi(p_{\hat{X}},p_{X^{*}})} \EE_q\left[\|\hat{X}-X^{*}\|^{2} \right] \\
	&= W_2^2(p_{\hat{X}},p_{X^{*}}) \\
	& \geq \min_{p_{\hat{X}}:W_{2}(p_{\hat{X}},p_X)\leq P}W^2_2(p_{\hat{X}},p_{X^{*}}),
	\end{align*} 
	which leads to $D(P) \geq D^* + W_2^2(\mathcal{B}_P,p_{X^{*}})$.
	
	Conversely, given $p_{\hat{X}}$ such that $W_{2}(p_{\hat{X}},p_X)\leq P$, we have an optimal plan $p_{\hat X{X^*}}$ achieving $W_{2}(p_{\hat{X}},p_{X^*})$. Once we determine the optimal plan $p_{\hat X {X^*}}$ with marginal $p_{\hat X}$, we have an estimator $\hat X$ given by $p_{\hat X| Y}$ achieving $\EE_{p_{\hat{X} Y}}\left[\|\hat{X}-X^{*}\|^{2} \right] = W^2_{2}(p_{\hat{X}},p_{X^*})$ (for the connection between the optimal plan $p_{\hat X X^*}$ and the choice of a consistent $p_{\hat X|Y}$, see Remark about uniqueness in Sec.~\ref{sec::The MSE--Wasserstein-2 tradeoff}). We then have 
	\begin{align*}
	\min_{p_{\hat{X}|Y}:W_{2}(p_{\hat{X}},p_X)\leq P}\EE_ {p_{\hat{X}Y}}\left[\|\hat{X}-X^{*}\|^{2} \right] & \leq \EE_{p_{\hat{X}Y}} \left[\|\hat{X}-X^{*}\|^{2} \right] = W^2_{2}(p_{\hat{X}},p_{X^*}). 
	\end{align*}
	Taking the minimum over $p_{\hat{X}}$ yields $D(P)\leq D^*+ W_2^2(\mathcal{B}_P,p_{X^{*}})$. Combining the upper and lower bounds, we obtain the desired result.
\end{proof}

For the proof of Theorem~3, we first prove the following
\begin{lem} \label{lem::D-triangleinq-lowbound}
	$D(P)\geq D^*+[(P^*-P)_+]^{2}$.
\end{lem}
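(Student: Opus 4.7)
The plan is to start from the reformulation of $D(P)$ provided by Lemma~\ref{lem::DP w2equivalent}, which writes
\[
D(P)=D^*+\min_{p_{\hat{X}}}\left\{ W^2_2(p_{\hat{X}},p_{X^{*}})\;:\; W_{2}(p_{\hat{X}},p_X)\leq P\right\},
\]
and then lower bound the inner minimum using only the metric structure of $\mathcal{W}_2(\R^{n_x})$.

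Specifically, I would invoke the triangle inequality for the Wasserstein-2 distance: for any candidate $p_{\hat{X}}$,
\[
P^*=W_2(p_X,p_{X^*})\;\leq\;W_2(p_X,p_{\hat{X}})+W_2(p_{\hat{X}},p_{X^*})\;\leq\;P+W_2(p_{\hat{X}},p_{X^*}),
\]
where the last inequality uses the feasibility constraint $W_2(p_{\hat{X}},p_X)\leq P$. Rearranging gives $W_2(p_{\hat{X}},p_{X^*})\geq P^*-P$, and since $W_2$ is nonnegative this upgrades to $W_2(p_{\hat{X}},p_{X^*})\geq (P^*-P)_+$.

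Squaring (which preserves the inequality since both sides are nonnegative) yields $W_2^2(p_{\hat{X}},p_{X^*})\geq [(P^*-P)_+]^2$ for every feasible $p_{\hat{X}}$. Taking the minimum over feasible $p_{\hat{X}}$ and adding $D^*$ gives $D(P)\geq D^*+[(P^*-P)_+]^2$, as desired. No step here is genuinely delicate: the only thing to keep in mind is the use of the positive part to cover the regime $P\geq P^*$, where the triangle-inequality bound becomes vacuous and the lower bound reduces to $D(P)\geq D^*$, which is automatic since $D^*$ is the unconstrained MMSE. The real work of the paper lies in showing this bound is tight, which is accomplished by the explicit interpolation construction in Theorem~\ref{Thm::extrapol}; this lemma only supplies the matching lower bound via the metric triangle inequality.
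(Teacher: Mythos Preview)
Your proof is correct and follows essentially the same route as the paper: both arguments hinge on the triangle inequality for $W_2$ to obtain $W_2(p_{\hat X},p_{X^*})\geq (P^*-P)_+$ and then square. The only cosmetic difference is that you start from the equality in Lemma~\ref{lem::DP w2equivalent}, whereas the paper starts from the orthogonality decomposition of Lemma~\ref{lem::orthogonality} together with the trivial bound $\EE[\|\hat X-X^*\|^2]\geq W_2^2(p_{\hat X},p_{X^*})$; either starting point yields the same one-line triangle-inequality argument.
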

\begin{proof}
	For every estimator satisfying $W_{2}(p_{\hat{X}},p_X)\leq P$,
	we have from the triangle inequality 
	\begin{equation}\label{eq:app-P*-P}
	P^*=W_{2}(p_X,p_{X^{*}})\leq W_{2}(p_{\hat{X}},p_{X^{*}})+W_{2}(p_{\hat{X}},p_X)\leq W_{2}(p_{\hat{X}},p_{X^{*}})+P,
	\end{equation}
	yielding 
	\begin{align*}
	\EE\left[\|X-\hat{X}\|^{2}\right] &= \EEd{X}{X^*} + \EEd{\hat{X}}{X^*} \\
		&\geq D^*+W_{2}^{2}(p_{\hat{X}},p_{X^{*}})\\
		&\geq D^*+(P^*-P)_+^{2},
	\end{align*}
	where the last inequality follows from \eqref{eq:app-P*-P}. 
	Hence $D(P)=\min_{p_{\hat{X}|Y}:W_{2}(p_{\hat{X}},p_X)\leq P}\EE_{p_{\hat{X}Y}}\left[\|X-\hat{X}\|^{2}\right]\geq D^*+[(P^*-P)_+]^{2}$.
\end{proof}

\subsubsection{Proof of Theorem \ref{Thm::extrapol}}
\label{APPsec:: Proof of Theorem 3}

	\begin{thm*}\it{\bf{\ref{Thm::extrapol}}}. Let $\hat{X}_0$ be an estimator achieving perception index~$0$ and MSE $D(0)$. Then for any $P\in[0,P^*]$,
		the estimator
		\begin{equation}
		\hat{X}_P=\left(1-\frac{P}{P^*}\right)\hat{X}_{0}+\frac{P}{P^*}X^{*}\label{App_eq:hat_X::extrapolated}
		\end{equation}
		is optimal for perception index $P$, namely, it achieves perception index $P$ and distortion $D(P)$.
	\end{thm*}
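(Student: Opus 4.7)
The plan is to identify $\hat{X}_P$ with a point on a Wasserstein geodesic between $p_X$ and $p_{X^*}$, and to combine a direct second-moment computation with the lower bound already established in Lemma~\ref{lem::D-triangleinq-lowbound}. The crucial input is Theorem~\ref{thm:perfect_perception}: the joint law $\nu \triangleq p_{\hat{X}_0 X^*}$ is an optimal transport plan between $p_X$ and $p_{X^*}$, so in particular
\begin{equation*}
\EE[\|\hat{X}_0 - X^*\|^2] = W_2^2(p_X, p_{X^*}) = (P^*)^2.
\end{equation*}
Writing $t \triangleq P/P^* \in [0,1]$ and $\hat{X}_P = (1-t)\hat{X}_0 + tX^*$, I would first observe admissibility: since $\hat{X}_0$ is independent of $X$ given $Y$ (it is itself a valid candidate estimator) and $X^*$ is a deterministic function of $Y$, the Markov relation $X \to Y \to \hat{X}_P$ is preserved, so $\hat{X}_P$ is feasible in \eqref{eq:MSE::D_P::Definition}.

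Next I would verify the perception constraint $W_2(p_{\hat{X}_P}, p_X) \leq P$. Since $\hat{X}_0 \sim p_X$, the pair $(\hat{X}_0, \hat{X}_P)$ is a coupling of $p_X$ and $p_{\hat{X}_P}$, hence
\begin{equation*}
W_2^2(p_{\hat{X}_P}, p_X) \leq \EE[\|\hat{X}_P - \hat{X}_0\|^2] = t^2\,\EE[\|X^* - \hat{X}_0\|^2] = t^2 (P^*)^2 = P^2.
\end{equation*}
(As foreshadowed in Section~\ref{sec::geometric}, the distribution $p_{\hat{X}_P} = [(1-t)\pi_1 + t\pi_2]\#\nu$ is exactly the time-$t$ point of the constant-speed Wasserstein geodesic from $p_X$ to $p_{X^*}$, so equality in fact holds.)

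For the distortion, Lemma~\ref{lem::orthogonality} yields
\begin{equation*}
\EE[\|X - \hat{X}_P\|^2] = D^* + \EE[\|\hat{X}_P - X^*\|^2],
\end{equation*}
and since $\hat{X}_P - X^* = (1-t)(\hat{X}_0 - X^*)$, linearity together with the identity above gives
\begin{equation*}
\EE[\|X - \hat{X}_P\|^2] = D^* + (1-t)^2 (P^*)^2 = D^* + (P^* - P)^2.
\end{equation*}
Lemma~\ref{lem::D-triangleinq-lowbound} supplies the matching lower bound $D(P) \geq D^* + (P^* - P)^2$ on $[0, P^*]$, and $\hat{X}_P$ is feasible, so $\hat{X}_P$ attains $D(P)$.

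The hard part is really concentrated in Step~1: invoking Theorem~\ref{thm:perfect_perception} to guarantee that $(\hat{X}_0, X^*)$ is distributed according to an optimal plan, which is what forces $\EE[\|\hat{X}_0 - X^*\|^2] = (P^*)^2$. Once this identity is in hand, linearity of the interpolation does all the remaining work, and matching the lower bound from Lemma~\ref{lem::D-triangleinq-lowbound} is automatic. The only subtlety to watch is preserving the Markov relation under the interpolation, which is immediate because the randomness used to build $\hat{X}_0$ is independent of $X$ given $Y$ and $X^*$ is a function of $Y$.
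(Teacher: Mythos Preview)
Your proof is correct and follows essentially the same route as the paper: use the optimal-plan identity $\EE[\|\hat{X}_0 - X^*\|^2] = (P^*)^2$, bound the perception index via the coupling $(\hat{X}_0,\hat{X}_P)$, compute the distortion by linearity together with Lemma~\ref{lem::orthogonality}, and match the lower bound of Lemma~\ref{lem::D-triangleinq-lowbound}. The only cosmetic difference is that the paper packages the argument inside the more general $\varepsilon$-perturbed Lemma~\ref{lem:eps-interpolation} and specializes to $\varepsilon_D=\varepsilon_P=0$, whereas you prove the clean case directly.
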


Let us prove a stronger result, from which Theorem \ref{Thm::extrapol} will follow.
\begin{lem} \label{lem:eps-interpolation}
	Let $\hat{X_{\varepsilon}}$ be an estimator (independent of $X$
	given $Y$) achieving $W_{2}(p_X,p_{\hat{X}_{\varepsilon}})\leq\varepsilon_{P}$
	and $\EE\left[\|\hat{X_{\varepsilon}}-X^{*}\|^{2}\right]\leq(1+\varepsilon_{D})^{2}W_{2}^{2}(p_X,p_{X^{*}})$
	for some $\varepsilon_{D},\varepsilon_{P}\geq0$. Given $0\leq P\leq P^*=W_{2}(p_X,p_{X^{*}})$,
	consider the estimator
	\begin{equation}
	\hat{X}_P=\left(1-\frac{P}{ P^*}\right)\hat{X_{\varepsilon}}+\frac{P}{ P^*}X^{*}.\label{eq:hat_X::extrapolated-1-1}
	\end{equation}
	Then $\hat{X}_P$ achieves $\EE[\|X-\hat{X}_P\|^{2}]\leq D^*+(1+\varepsilon_{D})^{2}(P^*-P)^{2}$
	with perception index $\varepsilon_{P}+(1+\varepsilon_{D})P$. When $\varepsilon_{D},\varepsilon_{P}=0$, namely $\hat{X}_{\varepsilon}$ is
	an optimal perfect perceptual quality estimator, $\hat{X}_P$ is
	an optimal estimator under perception constraint $P$, which proves Theorem~3.
\end{lem}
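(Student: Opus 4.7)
The plan is to verify the MSE bound and the perception bound separately, and then deduce the optimality assertion for $\varepsilon_D=\varepsilon_P=0$ by comparing with the lower bound of Lemma~\ref{lem::D-triangleinq-lowbound}.

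For the MSE bound, I would first check that $\hat X_P$ inherits the conditional independence $\hat X_P \perp X \mid Y$ from $\hat X_\varepsilon$: since $X^*=\EE[X\mid Y]$ is a deterministic function of $Y$, any function of $(\hat X_\varepsilon,X^*)$ is independent of $X$ given $Y$ as soon as $\hat X_\varepsilon$ is. This makes Lemma~\ref{lem::orthogonality} applicable, so $\EE[\|X-\hat X_P\|^2]=D^*+\EE[\|\hat X_P-X^*\|^2]$. The interpolation formula then yields the clean identity $\hat X_P-X^*=(1-P/P^*)(\hat X_\varepsilon-X^*)$, so pulling the scalar out of the squared norm and invoking the hypothesis $\EE[\|\hat X_\varepsilon-X^*\|^2]\le(1+\varepsilon_D)^2(P^*)^2$ gives the claimed bound $D^*+(1+\varepsilon_D)^2(P^*-P)^2$ immediately.

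For the perception bound, the natural route is the triangle inequality in Wasserstein space, $W_2(p_X,p_{\hat X_P})\le W_2(p_X,p_{\hat X_\varepsilon})+W_2(p_{\hat X_\varepsilon},p_{\hat X_P})$. The first term is at most $\varepsilon_P$ by assumption. For the second, I would bound $W_2(p_{\hat X_\varepsilon},p_{\hat X_P})$ by using as a (not necessarily optimal) transport plan the joint law of $(\hat X_\varepsilon,\hat X_P)$ inherited from the construction. This gives $W_2^2(p_{\hat X_\varepsilon},p_{\hat X_P})\le\EE[\|\hat X_P-\hat X_\varepsilon\|^2]=(P/P^*)^2\EE[\|\hat X_\varepsilon-X^*\|^2]\le(1+\varepsilon_D)^2 P^2$, so taking square roots and summing yields $W_2(p_X,p_{\hat X_P})\le\varepsilon_P+(1+\varepsilon_D)P$.

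Combining the two bounds and setting $\varepsilon_D=\varepsilon_P=0$, the estimator $\hat X_P$ achieves perception index at most $P$ and MSE at most $D^*+(P^*-P)^2$, matching the lower bound $D^*+[(P^*-P)_+]^2$ of Lemma~\ref{lem::D-triangleinq-lowbound} on $[0,P^*]$; hence $\hat X_P$ attains $D(P)$ and Theorem~\ref{Thm::extrapol} follows. The argument is largely mechanical once Lemma~\ref{lem::orthogonality} is in hand; the only conceptually delicate step is recognizing that the self-coupling $(\hat X_\varepsilon,\hat X_P)$ forced by the linear interpolation is precisely the constant-speed Wasserstein geodesic picked out in Section~\ref{sec::geometric}, which is exactly what makes the triangle inequality tight in the limit and explains why simple averaging of the two extreme estimators is optimal.
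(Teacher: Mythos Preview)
Your proposal is correct and follows essentially the same route as the paper: the MSE bound via Lemma~\ref{lem::orthogonality} and the scalar identity $\hat X_P-X^*=(1-P/P^*)(\hat X_\varepsilon-X^*)$, the perception bound via the triangle inequality together with the coupling $(\hat X_\varepsilon,\hat X_P)$, and the appeal to Lemma~\ref{lem::D-triangleinq-lowbound} for optimality when $\varepsilon_D=\varepsilon_P=0$. Your explicit check that $\hat X_P$ inherits conditional independence from $\hat X_\varepsilon$ is a welcome clarification that the paper leaves implicit.
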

\begin{proof}
	$W_{2}^{2}(p_{\hat{X}_{\varepsilon}},p_{\hat{X}_P})\leq \EE\left[\|\hat{X_{\varepsilon}}-\hat{X}_P\|^{2}\right]$,
	and using the triangle inequality
	\begin{align*}
		W_{2}(p_X,p_{\hat{X}_P}) &\leq  W_{2}(p_X,p_{\hat{X}_{\varepsilon}})+W_{2}(p_{\hat{X}_{\varepsilon}},p_{\hat{X}_P})\\
		&\leq\varepsilon_{P}+\sqrt{\EE\left[\|\hat{X_{\varepsilon}}-\hat{X}_P\|^{2}\right]}\\
		&=\varepsilon_{P}+\sqrt{\frac{P^{2}}{W_{2}^{2}(p_X,p_{X^{*}})}\EE\left[\|\hat{X_{\varepsilon}}-X^{*}\|^{2}\right]}\\
		&\leq\varepsilon_{P}+P(1+\varepsilon_{D}), 
	\end{align*}
	where the equality is based on \eqref{eq:hat_X::extrapolated-1-1}.
	A direct calculation of the distortion yields
	\begin{align*}
	\EE\left[\|X^{*}-\hat{X}_P\|^{2}\right]&=\left(1-\frac{P}{W_{2}(p_X,p_{X^{*}})}\right)^{2}
	     \EE\left[\|X^{*}-\hat{X_{\varepsilon}}\|^{2}\right]\\
	     &\leq(1+\varepsilon_{D})^{2}(W_{2}(p_X,p_{X^{*}})-P)^{2},\\
		\EE\left[\|X-\hat{X}_P\|^{2}\right]
		&=D^*+\EE\left[\|X^{*}-\hat{X}_P\|^{2}\right]\\
		&\leq D^*+(1+\varepsilon_{D})^{2}(W_{2}(p_X,p_{X^{*}})-P)^{2}.
	\end{align*}
	When $\varepsilon_{D},\varepsilon_{P}=0$ we have $W_{2}(p_X,p_{\hat{X}_P})\leq P$
	and $\EE\left[\|X-\hat{X}_P\|^{2}\right]\leq D^*+(W_{2}(p_X,p_{X^{*}})-P)^{2}$.
	From Lemma \ref{lem::D-triangleinq-lowbound}, the latter inequality is achieved with equality. Note that since here $\EEd{\hat X_\varepsilon}{X^*}=W^2_2(p_X,p_{X^*})$, the distributions of $\{ \hat X_P, \ P \in \left[ 0,W_2(p_X,p_{X^*}) \right] \}$ form a constant-speed geodesic, hence $W_2(p_X,p_{\hat{X}_P}) = P$.
\end{proof}

	\begin{cor}
		\label{cor::hatx deterministic if X* has density}  When ${X^*}$
		has a density, $\hat X_0$ (hence $\hat{X}_P$) can be obtained via a deterministic transformation of $Y$.
	\end{cor}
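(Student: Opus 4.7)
The plan is to combine Theorem \ref{thm:perfect_perception} with a standard result from optimal transport, namely that the Monge problem with quadratic cost admits a deterministic solution as soon as one of the marginals is absolutely continuous. This is exactly the setting alluded to in Appendix \ref{appsec::Optimal Transport}: if $p_{X^*}$ has a density, then the optimal Kantorovich plan between $p_{X^*}$ and $p_X$ with cost $\|x-y\|^2$ is unique and is concentrated on the graph of a measurable map $T_{p_{X^*}\to p_X}:\R^{n_x}\to\R^{n_x}$.

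First I would invoke Theorem \ref{thm:perfect_perception}, which asserts that $p_{\hat{X}_0 X^*}$ is an optimal plan between $p_X$ and $p_{X^*}$. Under the hypothesis that $X^*$ has a density, the cited existence/uniqueness result for the quadratic Monge problem gives a measurable $T$ such that $(T(X^*),X^*)\sim p_{\hat{X}_0 X^*}$. In particular, in any realization of the joint law, $\hat{X}_0 = T(X^*)$ almost surely, so $\hat{X}_0$ is a deterministic function of $X^*$. Since $X^*=\EEb{X|Y}$ is itself a deterministic (measurable) function of $Y$, composing gives $\hat{X}_0 = T(\EEb{X|Y})$, a deterministic function of $Y$.

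To conclude the second claim, I would apply Theorem \ref{Thm::extrapol}: the estimator
\begin{equation*}
\hat{X}_P = \left(1-\tfrac{P}{P^*}\right)\hat{X}_0 + \tfrac{P}{P^*} X^*
\end{equation*}
is optimal at perception level $P$. Because both summands are deterministic functions of $Y$ (as just argued for $\hat{X}_0$, and by definition for $X^*$), $\hat{X}_P$ is itself a deterministic function of $Y$, completing the proof.

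The only nontrivial ingredient is the appeal to Brenier's theorem to produce the deterministic map $T$; everything else is just composition of measurable functions. The mildly subtle point is that Theorem \ref{thm:perfect_perception} only pins down the joint law $p_{\hat{X}_0 X^*}$, not the full joint $p_{\hat{X}_0 Y X^*}$, so one must additionally argue (as is done in the Uniqueness remark of Section \ref{sec::The MSE--Wasserstein-2 tradeoff}, citing Ambrosio--Gigli--Savaré Lemma 5.3.2) that because $X^*$ is a deterministic function of $Y$, choosing $\hat{X}_0 = T(X^*)$ yields a \emph{bona fide} conditional $p_{\hat{X}_0|Y}$ consistent with the required marginals; this is automatic here since $\hat{X}_0$ is already expressed as a deterministic function of $Y$.
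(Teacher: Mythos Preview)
Your proposal is correct and follows essentially the same route as the paper: invoke Brenier's theorem (the existence of a deterministic optimal transport map $T_{p_{X^*}\to p_X}$ when $p_{X^*}$ is absolutely continuous), set $\hat{X}_0 = T(X^*)$, and then interpolate via Theorem~\ref{Thm::extrapol}. Your additional remarks on the uniqueness of the optimal plan and on the consistency of $p_{\hat{X}_0|Y}$ are more careful than the paper's own proof, but they do not change the underlying argument.
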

\begin{proof}
	Since the distribution of $X^*$ is absolutely continuous, we have an optimal map $T_{p_{X^*}\rightarrow p_X }$ between the distributions of $X^*$ and $X$ (see discussion in App.~\ref{appsec::Optimal Transport}). Namely, we have that $\hat X_0 = T_{p_{X^*}\rightarrow p_X }(X^*)$ is an optimal estimator with perception index $0$. Thus, according to \eqref{eq:hat_X::extrapolated} $\hat X_P =\left(1-\frac{P}{P^*}\right)T_{p_{X^*}\rightarrow p_X }(X^*) + \frac{P}{P^*}X^* $ are optimal estimators, which in this case are given by a deterministic function of $Y$.
\end{proof}

	\subsection{Proof of theorem \ref{Thm:=00005Bthe-Distortion-Perception-funct}}
	
	With Theorem \ref{Thm::extrapol} and Lemma \ref{lem:eps-interpolation} in hand, we are now ready to prove our main result.
	
	\begin{thm*} \it{\bf{\ref{Thm:=00005Bthe-Distortion-Perception-funct}}}. The DP function \eqref{eq:MSE::D_P::Definition} is given by
	\begin{equation}
	D(P)=
	D^*+\left[(P^*-P)_{+}\right]^{2}.
	\label{APP_eq:Thm:DP_function::Main_result}
	\end{equation}
	Furthermore, an estimator achieving perception index $P$ and distortion $D(P)$ can always be constructed by applying a (possibly stochastic) transformation to $X^*$.
\end{thm*}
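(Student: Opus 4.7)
The plan is to establish the formula by combining the lower bound already proved in Lemma \ref{lem::D-triangleinq-lowbound} with a matching upper bound obtained by explicitly producing an optimal estimator via Theorem \ref{Thm::extrapol}. Since Lemma \ref{lem::D-triangleinq-lowbound} gives $D(P)\geq D^* + [(P^*-P)_+]^2$ unconditionally, only the reverse inequality remains, and it suffices to split into the regimes $P\in[0,P^*]$ and $P\geq P^*$.

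First I would handle the easy regime $P\geq P^*$. Here the lower bound reads $D(P)\geq D^*$, and the estimator $\hat X = X^*$ itself satisfies $W_2(p_X,p_{X^*})=P^*\leq P$ with distortion $\EEb{\|X-X^*\|^2}=D^*$, matching the bound. So $D(P)=D^*$ for $P\geq P^*$, consistent with the $(\cdot)_+$ notation in \eqref{APP_eq:Thm:DP_function::Main_result}.

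Next I would address the nontrivial regime $P\in[0,P^*]$ by exhibiting an estimator realizing $D^*+(P^*-P)^2$. Since $p_X$ and $p_{X^*}$ have finite second moments, the Kantorovich problem defining $W_2(p_X,p_{X^*})$ admits an optimal plan (see App.~\ref{appsec::Optimal Transport}), and by Theorem \ref{thm:perfect_perception} this plan gives rise to a perfect perceptual quality estimator $\hat X_0$ satisfying $p_{\hat X_0}=p_X$ and $\EEb{\|\hat X_0-X^*\|^2}=W_2^2(p_X,p_{X^*})=(P^*)^2$. Plugging this $\hat X_0$ into Theorem \ref{Thm::extrapol}, the interpolated estimator $\hat X_P=(1-P/P^*)\hat X_0+(P/P^*)X^*$ achieves perception index $P$ and distortion $D^*+(P^*-P)^2$ (as computed in Lemma \ref{lem:eps-interpolation} with $\varepsilon_D=\varepsilon_P=0$). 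Combined with the lower bound, this yields \eqref{APP_eq:Thm:DP_function::Main_result}.

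For the final claim, I would note that $\hat X_0$ is constructed by sampling from the conditional $p_{\hat X_0 \mid X^*}$ induced by the optimal plan $p_{\hat X_0 X^*}$, so it is by definition a (possibly stochastic) transformation of $X^*$. The estimator $\hat X_P$ is then a deterministic function of $(\hat X_0,X^*)$, hence also a (possibly stochastic) transformation of $X^*$, settling the second assertion of the theorem. The main conceptual work has already been carried out in Theorems \ref{thm:perfect_perception} and \ref{Thm::extrapol} together with Lemma \ref{lem::D-triangleinq-lowbound}; the only subtlety to verify is that existence of the optimal coupling defining $\hat X_0$ is guaranteed under the standing finite-second-moment assumption, which is standard.
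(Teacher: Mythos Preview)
Your proposal is correct and follows essentially the same route as the paper: invoke Lemma~\ref{lem::D-triangleinq-lowbound} for the lower bound, handle $P\geq P^*$ trivially via $X^*$, and for $P<P^*$ use the existence of an optimal plan to build $\hat X_0$ and then apply Theorem~\ref{Thm::extrapol} (Lemma~\ref{lem:eps-interpolation} with $\varepsilon_D=\varepsilon_P=0$) to realize the matching upper bound. Your treatment of the ``stochastic transformation of $X^*$'' clause is slightly more explicit than the paper's, but the underlying idea is the same.
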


\begin{proof}
When $P\geq P^*$ the result is trivial since $D(P)=D^*$. Let us focus on  $P<P^*$. Since $X,X^*\in \R^{n_x}$, we have an optimal plan $p_{\hat X_0 X^*}$ between their distributions, attaining $P^*$ \citep{ambrosio2008gradient,panaretos2020invitation}. We then have an optimal estimator $\hat X_0$ with perception index $0$, which is given by this joint distribution hence achieving $\EEd{\hat X_0}{X^*}=(P^*)^2$ (for the connection between $p_{\hat X_0 X^*}$ and the choice of $p_{\hat X_0|Y}$, see Remark about uniqueness in Sec.~\ref{sec::The MSE--Wasserstein-2 tradeoff}). For any perception $P<P^*$, consider $\hat X_P$ given by \eqref{App_eq:hat_X::extrapolated}. We have $W_2(p_X,p_{\hat X_P})=P$, and (see Theorem \ref{Thm::extrapol}'s proof)
\[\EE\left[\|X-\hat{X}_P\|^{2}\right]\leq D^*+(W_{2}(p_X,p_{X^{*}})-P)^{2},\]
hence $D(P)\leq	D^*+\left[(P^*-P)_{+}\right]^{2}$. On the other hand, we have (Lemma \ref{lem::D-triangleinq-lowbound}) $D(P)\geq 	D^*+\left[(P^*-P)_{+}\right]^{2}$, which completes the proof.
\end{proof}

	\subsection{The Gaussian setting}
\label{appsec::GaussianSetting}	
In this Section we prove Theorems \ref{thm:Gaussian1} and \ref{thm:Gaussian_not_unique}. We begin by proving Theorem \ref{thm:Gaussian_not_unique}, and then show that Theorem \ref{thm:Gaussian1} follows as a special case.
Recall that 
\begin{equation}
(G^*)^2 = \tr{\Sigma_X + \Sigma_{X^*} -2\left(\Sigma_X^{1/2}\Sigma_{X^*}\Sigma_X^{1/2}\right)^{1/2}} \end{equation} 
and
\begin{equation} T^{*}=\Sigma_{X}^{-1/2}(\Sigma_{X}^{1/2}\Sigma_{X^*}\Sigma_{X}^{1/2})^{1/2}\Sigma_{X}^{-1/2}. 
\end{equation}

\begin{thm*} \it{\bf{\ref{thm:Gaussian_not_unique}.}}
	Consider the setting of Theorem~\ref{thm:Gaussian1} in the main text. Let $\Sigma_{\hat X_0 Y}\in \R^{n_{x}\times n_{y}}$ satisfy
		\begin{equation}
		\Sigma_{\hat X_0 Y}\Sigma_{Y}^{-1}\Sigma_{YX}=\Sigma_{X}^{\frac{1}{2}}(\Sigma_{X}^{\frac{1}{2}}\Sigma_{X^*}\Sigma_{X}^{\frac{1}{2}})^{\frac{1}{2}}\Sigma_{X}^{-\frac{1}{2}},\label{APP_eq:Thm_Gaussian_General::M_cond1-2-1-1}
		\end{equation}
		and $W_{0}$ be a zero-mean Gaussian noise with covariance
		\begin{equation}
		\Sigma_{W_{0}}=\Sigma_{X}-\Sigma_{\hat X_0 Y}\Sigma_{Y}^{-1}\Sigma_{\hat X_0 Y}^{T}\succeq0
		\label{APP_eq:Thm_Gaussian_General::M_cond2-2-1-1}
		\end{equation}
		that is independent of $Y,X$. Then, for any $P\in[0,G^*]$, an optimal estimator with perception index $P$ can be obtained by
		\begin{equation}
		\label{APP_eq::X_P_Gauss:interpolation_not_unique}
		\hat{X}_P=\left(\left(1-\frac{P}{G^{*}}\right)\Sigma_{\hat X_0 Y}+\frac{P}{G^{*}}\Sigma_{XY}\right)\Sigma_{Y}^{-1}Y+\left(1-\frac{P}{G^{*}}\right)W_{0}.
		\end{equation} The estimator given in \eqref{APP_eq::X_P_Gauss:interpolation}
		is one solution to \eqref{APP_eq:Thm_Gaussian_General::M_cond1-2-1-1}-\eqref{APP_eq:Thm_Gaussian_General::M_cond2-2-1-1},
		but it is generally not unique.\end{thm*}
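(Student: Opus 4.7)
The plan is to reduce the claim to Theorem~\ref{Thm::extrapol} by exhibiting $\hat X_0\triangleq \Sigma_{\hat X_0 Y}\Sigma_Y^{-1}Y + W_0$ as an optimal estimator at $P=0$, and then observing that \eqref{APP_eq::X_P_Gauss:interpolation_not_unique} is exactly the interpolation formula \eqref{App_eq:hat_X::extrapolated} applied to this $\hat X_0$. Throughout I use that $P^*=G^*$ in the Gaussian setting, since for Gaussians the Wasserstein-2 distance coincides with the Gelbrich distance.

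First I would verify that $p_{\hat X_0}=p_X$. Because $\hat X_0$ is a zero-mean Gaussian (being a linear function of $Y$ plus independent Gaussian noise), only its covariance needs checking: $\Sigma_{\hat X_0 Y}\Sigma_Y^{-1}\Sigma_{\hat X_0 Y}^T + \Sigma_{W_0}$ collapses to $\Sigma_X$ by \eqref{APP_eq:Thm_Gaussian_General::M_cond2-2-1-1}. Next, by Lemma~\ref{lem::orthogonality} it suffices to show that $\EE[\|\hat X_0 - X^*\|^2]=(G^*)^2$, which then gives MSE $D^*+(G^*)^2 = D(0)$. Expanding $\hat X_0 - X^* = (\Sigma_{\hat X_0 Y}-\Sigma_{XY})\Sigma_Y^{-1}Y + W_0$ and taking the trace of its covariance yields $\tr{\Sigma_{\hat X_0 Y}\Sigma_Y^{-1}\Sigma_{\hat X_0 Y}^T} - 2\tr{\Sigma_{\hat X_0 Y}\Sigma_Y^{-1}\Sigma_{YX}} + \tr{\Sigma_{XY}\Sigma_Y^{-1}\Sigma_{YX}} + \tr{\Sigma_{W_0}}$. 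The first and last terms sum to $\tr{\Sigma_X}$ by \eqref{APP_eq:Thm_Gaussian_General::M_cond2-2-1-1}; the cross term reduces, via \eqref{APP_eq:Thm_Gaussian_General::M_cond1-2-1-1} and the cyclic trace property, to $\tr{(\Sigma_X^{1/2}\Sigma_{X^*}\Sigma_X^{1/2})^{1/2}}$; and the remaining term equals $\tr{\Sigma_{X^*}}$ because $\Sigma_{XY}\Sigma_Y^{-1}\Sigma_{YX}=\Sigma_{X^*}$. The total is precisely $(G^*)^2$.

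Having established $\hat X_0$ as an optimal zero-perception estimator, Theorem~\ref{Thm::extrapol} (with $P^*=G^*$) applied to $\hat X_0$ and $X^*=\Sigma_{XY}\Sigma_Y^{-1}Y$ immediately produces \eqref{APP_eq::X_P_Gauss:interpolation_not_unique}: linearity groups the $Y$-dependent summands into $\bigl((1-P/G^*)\Sigma_{\hat X_0 Y}+(P/G^*)\Sigma_{XY}\bigr)\Sigma_Y^{-1}Y$ and leaves the noise contribution $(1-P/G^*)W_0$. Non-uniqueness is transparent: whenever the linear map $\Sigma_Y^{-1}\Sigma_{YX}\in\R^{n_y\times n_x}$ is not surjective (which is automatic, e.g., when $n_y<n_x$), equation \eqref{APP_eq:Thm_Gaussian_General::M_cond1-2-1-1} admits an affine family of solutions $\Sigma_{\hat X_0 Y}$, each paired with its own psd $\Sigma_{W_0}$ via \eqref{APP_eq:Thm_Gaussian_General::M_cond2-2-1-1}.

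It remains to check that the particular choice from Theorem~\ref{thm:Gaussian1}, $\Sigma_{\hat X_0 Y}=\Sigma_X^{1/2}(\Sigma_X^{1/2}\Sigma_{X^*}\Sigma_X^{1/2})^{1/2}\Sigma_X^{-1/2}\Sigma_{X^*}^\dagger\Sigma_{XY}$, is admissible. Substituting into the left-hand side of \eqref{APP_eq:Thm_Gaussian_General::M_cond1-2-1-1} and using $\Sigma_{XY}\Sigma_Y^{-1}\Sigma_{YX}=\Sigma_{X^*}$ leaves a trailing factor $\Sigma_{X^*}^\dagger\Sigma_{X^*}$, the orthogonal projection onto $\mathrm{Im}(\Sigma_{X^*})$; cancelling it requires the inclusion $\mathrm{Im}\bigl(\Sigma_X^{-1/2}(\Sigma_X^{1/2}\Sigma_{X^*}\Sigma_X^{1/2})^{1/2}\bigr)\subseteq \mathrm{Im}(\Sigma_{X^*})$, which follows from the image identity for the psd square root together with the characterization of $T^*$ as an optimal transport map (see \citep[Thm.~3]{zhang2019optimal}). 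This pseudoinverse bookkeeping, and the accompanying verification that the resulting $\Sigma_{W_0}$ is psd, are the main technical obstacles; they are the precise reason a non-degenerate noise term $W_0$ is needed in the singular regime.
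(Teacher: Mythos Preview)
Your proposal is correct and follows essentially the same route as the paper: define $\hat X_0=\Sigma_{\hat X_0 Y}\Sigma_Y^{-1}Y+W_0$, verify via the covariance identities that $\hat X_0\sim\mathcal N(0,\Sigma_X)$ and $\EE[\|\hat X_0-X^*\|^2]=(G^*)^2$, then invoke Theorem~\ref{Thm::extrapol}. The only cosmetic differences are that the paper computes the MSE via the cross-covariance $\EE[X^*\hat X_0^T]$ rather than by expanding the difference, and handles the pseudoinverse cancellation $\Sigma_X T^*\Sigma_{X^*}^\dagger\Sigma_{X^*}=\Sigma_X T^*$ through its own kernel lemma (Lemma~\ref{Lemma:: ker sigma* in ker Tp*}) together with a Schur-complement argument for $\Sigma_{W_0}\succeq 0$, in place of the optimal-transport image inclusion you cite.
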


\begin{proof} (Theorem \ref{thm:Gaussian_not_unique})
	Let $\hat X_0 \triangleq \Sigma_{\hat X_0 Y} \Sigma_{Y}^{-1}Y + W_0$ where $\Sigma_{\hat X_0 Y}$ satisfies \eqref{APP_eq:Thm_Gaussian_General::M_cond1-2-1-1}-\eqref{APP_eq:Thm_Gaussian_General::M_cond2-2-1-1}. It is easy to see that $\hat X_0 \sim \mathcal{N}(0,\Sigma_{X})$ and it is jointly Gaussian with $(X,Y,X^*)$. We have by \eqref{APP_eq:Thm_Gaussian_General::M_cond1-2-1-1}
	\begin{equation}\label{APP:x0x*corr}
	    	\EEb{X^*\hat X_0^T}=\Sigma_{XY}\Sigma_{Y}^{-1}\Sigma_{Y \hat X_0} = \Sigma_{X}^{-1/2}(\Sigma_{X}^{1/2}\Sigma_{X^*}\Sigma_{X}^{1/2})^{1/2}\Sigma_{X}^{1/2}, \end{equation}
	hence using \eqref{APP_eq:Thm_Gaussian_General::M_cond2-2-1-1},
	\begin{align*}
	\EEb{\|\hat X_0-X^*\|^2}&=\tr{\Sigma_X + \Sigma_{X^*}-2\EEb{X^*\hat X_0^T}}\\
	&=\tr{\Sigma_X + \Sigma_{X^*}-2\Sigma_{X}^{-1/2}(\Sigma_{X}^{1/2}\Sigma_{X^*}\Sigma_{X}^{1/2})^{1/2}\Sigma_{X}^{1/2}}\\
	&=\tr{\Sigma_X + \Sigma_{X^*}-2(\Sigma_{X}^{1/2}\Sigma_{X^*}\Sigma_{X}^{1/2})^{1/2}}\\
	&=G^2(\Sigma_{X},\Sigma_{X^*})\\
	&=(G^*)^2.
	\end{align*}
	Summarizing, $\hat X_0$ is an optimal perfect perception quality estimator. Note that \eqref{APP_eq::X_P_Gauss:interpolation_not_unique} can be written as
	\[ \hat X_P = \left(1- \frac{P}{G^*} \right)\hat X_0 + \frac{P}{G^*}X^*,\] and by Theorem \ref{Thm::extrapol} we have that it is an optimal estimator.
	\end{proof}

	Before proceeding to the proof of Theorem \ref{thm:Gaussian1}, let us introduce some auxiliary facts.

	\begin{lem}
		\label{Lemma:: ker sigma* in ker Tp*} Let $\Sigma,\Sigma_{X^*}\in \R^{n\times n}$
		be (symmetric) PSD matrices, and $\Sigma_{X}\in \R^{n\times n}$ is
		PD. Denote $T^*=\Sigma_{X}^{-\frac{1}{2}}\left(\Sigma_{X}^{\frac{1}{2}}\Sigma_{X^*}\Sigma_{X}^{\frac{1}{2}}\right)^{\frac{1}{2}}\Sigma_{X}^{-\frac{1}{2}}$. Then: 
	\end{lem}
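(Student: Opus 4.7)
Since the statement ends with ``Then:'' in the excerpt, the enumerated claims themselves are not visible, but the label ``ker sigma* in ker Tp*'' makes clear that the central assertion is the kernel inclusion $\Ker(\Sigma_{X^*}) \subseteq \Ker(T^*)$ (indeed, since $\Sigma_X$ is PD, a symmetric argument will in fact give equality). My plan is to prove this by conjugating the problem through $\Sigma_X^{1/2}$ and then invoking the standard fact that a PSD quadratic form vanishing on a vector implies the matrix annihilates that vector.

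Concretely, the first step is to pick $v \in \Ker(\Sigma_{X^*})$ and set $w \triangleq \Sigma_X^{-1/2} v$, which is well-defined because $\Sigma_X$ is PD. Writing $M \triangleq \Sigma_X^{1/2}\Sigma_{X^*}\Sigma_X^{1/2}$, the definition of $T^*$ gives $T^* v = \Sigma_X^{-1/2} M^{1/2} w$, and since $\Sigma_X^{-1/2}$ is nonsingular it suffices to show $M^{1/2} w = 0$. The key computation is then the direct identity $w^T M w = v^T \Sigma_X^{-1/2} \Sigma_X^{1/2} \Sigma_{X^*} \Sigma_X^{1/2} \Sigma_X^{-1/2} v = v^T \Sigma_{X^*} v = 0$. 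Because $M$ is PSD, this forces $M w = 0$; and because $M$ and $M^{1/2}$ share the same kernel (both being PSD with the same spectral projection onto $\Ker(M)^{\perp}$), we conclude $M^{1/2} w = 0$, so $T^* v = 0$, as desired.

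There is no real obstacle here; the only subtlety worth flagging is tracking that $\Sigma_X^{1/2}$ is invertible so that kernels transport cleanly through the conjugation $\Sigma_X^{1/2}(\cdot)\Sigma_X^{1/2}$, and that squaring a PSD matrix preserves its kernel. If the lemma contains additional parts involving the auxiliary PSD matrix $\Sigma$ appearing in the hypotheses, they presumably establish companion facts such as $\mathrm{Im}\{T^*\} = \mathrm{Im}\{\Sigma_{X^*}\}$ and pseudo-inverse identities of the form $\Sigma_X^{1/2} T^* \Sigma_{X^*}^{\dagger} T^* \Sigma_X^{1/2} \preceq I$, which are needed downstream to verify $\Sigma_W \succeq 0$ in Theorem~\ref{thm:Gaussian1} and to reduce \eqref{eq::X_P_Gauss:interpolation} to \eqref{eq:GaussXpInvertible} when $\Sigma_{X^*}$ is invertible. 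Each of these would follow from the same conjugation maneuver combined with the pseudo-inverse identity $M^{1/2} M^{\dagger} M^{1/2} = P_{\mathrm{Im}\{M\}}$ for PSD $M$, applied after reducing to $M = \Sigma_X^{1/2}\Sigma_{X^*}\Sigma_X^{1/2}$.
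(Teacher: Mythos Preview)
Your argument for the kernel inclusion is correct and is essentially the paper's own proof: both set $w=\Sigma_X^{-1/2}v$, show $w\in\Ker(M)$ for $M=\Sigma_X^{1/2}\Sigma_{X^*}\Sigma_X^{1/2}$, and then use $\Ker(M)=\Ker(M^{1/2})$ to conclude $T^*v=0$. The only (cosmetic) difference is that you reach $Mw=0$ via the quadratic form $w^TMw=v^T\Sigma_{X^*}v=0$, whereas the paper simply computes $Mw=\Sigma_X^{1/2}\Sigma_{X^*}v=0$ directly; your detour is valid but unnecessary.

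Your guess about the remaining enumerated items is slightly off, though. The auxiliary PSD matrix $\Sigma$ in the hypotheses is there only for the generic fact $\Ker(\Sigma)=\Ker(\Sigma^{1/2})$, which is Part~1 of the lemma (and which you already invoke implicitly). Part~2 then combines the kernel inclusion with the pseudo-inverse identity $\Sigma_X T^*\Sigma_{X^*}^{\dagger}\Sigma_{X^*}=\Sigma_X T^*$, obtained by observing that $I-\Sigma_{X^*}^{\dagger}\Sigma_{X^*}$ is the orthogonal projector onto $\Ker(\Sigma_{X^*})\subseteq\Ker(\Sigma_X T^*)$. This is the identity actually used downstream in the proof of Theorem~\ref{thm:Gaussian1}, not the inequality $\Sigma_X^{1/2}T^*\Sigma_{X^*}^{\dagger}T^*\Sigma_X^{1/2}\preceq I$ that you conjectured.
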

	\begin{enumerate}
		\item $\Ker\{\Sigma\}=\Ker\{\Sigma^{\frac{1}{2}}\}$.
		\item $\Ker\{\Sigma_{*}\}\subseteq \Ker\{\Sigma_{X}^{\frac{1}{2}}(\Sigma_{X}^{\frac{1}{2}}\Sigma_{X^*}\Sigma_{X}^{\frac{1}{2}})^{\frac{1}{2}}\Sigma_{X}^{-\frac{1}{2}}\}=\Ker\{\Sigma_{X}T^{*}\}$, and
		we have $\Sigma_{X}T^{*}\Sigma_{X^*}^{\dagger}\Sigma_{X^*}=\Sigma_{X}T^{*}$.
	\end{enumerate}
	\begin{proof}
		(1) Let $\Sigma$ be PSD. Since it is real and symmetric it is diagonalizable,
		$\Sigma=UDU^{T}$ and $\Sigma^{1/2}=UD^{1/2}U^{T}$ where $D$ is
		a diagonal matrix with non-negative entries which are the eigenvalues
		of $\Sigma$. We have $\Ker\{D\}=\Ker\{D^{1/2}\}=\{v\in \R^{n}:v_{i}=0\,\forall i:D_{i,i}\neq0\}$
		and since $U$ is full-rank, $\Ker\{\Sigma\}=\Ker\{\Sigma^{1/2}\}=U\Ker\{D\}$.
		
		(2) Assume $\Sigma_{X^*}v=0$. We have $(\Sigma_{X}^{1/2}\Sigma_{X^*}\Sigma_{X}^{1/2})\Sigma_{X}^{-1/2}v=0$,
		implying that $\Sigma_{X}^{-1/2}v\in \Ker\{(\Sigma_{X}^{1/2}\Sigma_{X^*}\Sigma_{X}^{1/2})\}=\Ker\{(\Sigma_{X}^{1/2}\Sigma_{X^*}\Sigma_{X}^{1/2})^{1/2}\}.$
		The equality is true since $\Sigma_{X}^{1/2}\Sigma_{X^*}\Sigma_{X}^{1/2}=\Sigma_{X}^{1/2}\Sigma_{X^*}^{1/2}(\Sigma_{X}^{1/2}\Sigma_{X^*}^{1/2})^{T}$
		is PSD, and we use (1). To conclude, we have 
		\[
		\Sigma_{X}T^{*}v=\Sigma_{X}^{1/2}(\Sigma_{X}^{1/2}\Sigma_{X^*}\Sigma_{X}^{1/2})^{1/2}\Sigma_{X}^{-1/2}v=0\Longrightarrow \Ker\{\Sigma_{X^*}\}\subseteq \Ker\{\Sigma_{X}T^{*}\}.
		\]
		Recall now that $(I-\Sigma_{X^*}^{\dagger}\Sigma_{X^*})$ is a projection
		onto $\Ker\{\Sigma_{X^*}\}$. We have $\Sigma_{X}T^{*}(I-\Sigma_{X^*}^{\dagger}\Sigma_{X^*})=0$,
		yielding $\Sigma_{X}T^{*}\Sigma_{X^*}^{\dagger}\Sigma_{X^*}=\Sigma_{X}T^{*}$.
	\end{proof}
	
	The following Lemma is a reminder of Schur's Complement
	and its properties.
	\begin{lem}
		{[}Schur's complement{]}. Let $\Sigma=\left[\begin{array}{cc}
		A & B\\
		B^{T} & C
		\end{array}\right]$ be a symmetric matrix where $A$ is PD. Then $\nicefrac{\Sigma}{A}\triangleq C-B^{T}A^{-1}B$
		is the Schur complement of $\Sigma$, and we have that $\Sigma$ is
		PSD iff $\nicefrac{\Sigma}{A}$ is PSD.
	\end{lem}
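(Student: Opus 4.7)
The approach is the standard one via a block LDU decomposition. Since $A$ is positive definite, it is invertible, so $A^{-1}$ exists and the Schur complement $\Sigma/A = C - B^{T}A^{-1}B$ is well defined. The first step would be to verify directly the factorization
\[
\Sigma \;=\; L\,\begin{bmatrix} A & 0 \\ 0 & \Sigma/A\end{bmatrix}\,L^{T}, \qquad L \;\triangleq\; \begin{bmatrix} I & 0 \\ B^{T}A^{-1} & I\end{bmatrix},
\]
which is a short block-matrix multiplication: the $(1,1)$ block gives $A$, the $(1,2)$ and $(2,1)$ blocks give $B$ and $B^{T}$ respectively, and the $(2,2)$ block yields $B^{T}A^{-1}B + (C - B^{T}A^{-1}B) = C$.

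Next, $L$ is unit lower triangular, hence invertible, so the linear map $v \mapsto L^{T}v$ is a bijection on $\R^{n}$. Consequently, for every $v$,
\[
v^{T}\Sigma v \;=\; (L^{T}v)^{T}\,\begin{bmatrix} A & 0 \\ 0 & \Sigma/A\end{bmatrix}\,(L^{T}v),
\]
so $\Sigma \succeq 0$ iff the block-diagonal matrix in the middle is PSD, iff both $A \succeq 0$ and $\Sigma/A \succeq 0$. Since $A \succ 0$ by hypothesis, the first condition is automatic and the equivalence collapses to the claim: $\Sigma \succeq 0$ iff $\Sigma/A \succeq 0$.

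There is no real obstacle here; the entire argument is a mechanical consequence of the LDU factorization, which is the only non-trivial algebraic identity involved and is verified by direct expansion. The hypothesis $A \succ 0$ enters crucially already in writing down the factorization (so that $A^{-1}$, and hence the Schur complement itself, exists), and also at the very end, where it allows discarding the condition ``$A \succeq 0$'' and retaining only the condition on $\Sigma/A$. This lemma is a purely mechanical linear-algebra reminder, in contrast to the considerably deeper Gaussian estimation results it will be used to establish.
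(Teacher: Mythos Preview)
Your argument via the block LDU factorization is correct and is the standard way to establish this equivalence. Note, however, that the paper does not actually prove this lemma: it is stated only as a ``reminder of Schur's Complement and its properties'' and is used without proof in the subsequent argument for Theorem~\ref{thm:Gaussian1}. So there is nothing to compare against; your write-up simply supplies the (short, routine) justification the paper omits.
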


    We are now ready to prove Theorem~4.
	\begin{thm*} \it{\bf{\ref{thm:Gaussian1}.}}
	Assume $X$ and $Y$ are zero-mean jointly Gaussian random vectors with $\Sigma_X,\Sigma_{Y} \succ0$.
	Then for any $P\in[0,G^{*}]$, an estimator with perception index $P$ and MSE $D(P)$ can be constructed as
	\begin{equation} \label{APP_eq::X_P_Gauss:interpolation}
	\hat{X}_P=\left(\left(1-\frac{P}{G^{*}}\right)\Sigma_{X}^{\frac{1}{2}}\left(\Sigma_{X}^{\frac{1}{2}}\Sigma_{X^*}\Sigma_{X}^{\frac{1}{2}}\right)^{\frac{1}{2}}\Sigma_{X}^{-\frac{1}{2}}\Sigma_{X^*}^{\dagger}+\frac{P}{G^{*}}I\right)\Sigma_{XY}\Sigma_{Y}^{-1}Y+\left(1-\frac{P}{G^{*}}\right)W,
	\end{equation}
	where $W$ is a zero-mean Gaussian noise with covariance $\Sigma_{W}=\Sigma_{X}^{1/2}(I-\Sigma_{X}^{1/2}T^{*}\Sigma_{X^*}^{\dagger}T^{*}\Sigma_{X}^{1/2})\Sigma_{X}^{1/2}$, which is independent of $Y,X$.
\end{thm*}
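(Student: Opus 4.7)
My plan is to derive Theorem~\ref{thm:Gaussian1} as a special case of Theorem~\ref{thm:Gaussian_not_unique}, which has already been proved. The key observation is that the two formulas for $\hat X_P$ have the same structure, and it suffices to read off the implicit choice of $\Sigma_{\hat X_0 Y}$ used in \eqref{APP_eq::X_P_Gauss:interpolation} and verify that it satisfies conditions \eqref{APP_eq:Thm_Gaussian_General::M_cond1-2-1-1}--\eqref{APP_eq:Thm_Gaussian_General::M_cond2-2-1-1}. Matching the two formulas term by term, the natural candidate is
\begin{equation*}
\Sigma_{\hat X_0 Y} \;=\; \Sigma_X T^*\Sigma_{X^*}^{\dagger}\Sigma_{XY},
\end{equation*}
so that the coefficient multiplying $\Sigma_Y^{-1}Y$ in Theorem~\ref{thm:Gaussian_not_unique} coincides with the coefficient multiplying $Y$ in \eqref{APP_eq::X_P_Gauss:interpolation}, while the noise $W$ plays the role of $W_0$.

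First I would check \eqref{APP_eq:Thm_Gaussian_General::M_cond1-2-1-1}. Using $\Sigma_{XY}\Sigma_Y^{-1}\Sigma_{YX}=\Sigma_{X^*}$ from \eqref{eq:SigmaXstar}, this reduces to $\Sigma_X T^*\Sigma_{X^*}^{\dagger}\Sigma_{X^*}=\Sigma_X T^*$, which is precisely part~(2) of Lemma~\ref{Lemma:: ker sigma* in ker Tp*} (the fact that $\Ker\{\Sigma_{X^*}\}\subseteq\Ker\{\Sigma_X T^*\}$). Second, I would compute
\begin{equation*}
\Sigma_{\hat X_0 Y}\Sigma_Y^{-1}\Sigma_{\hat X_0 Y}^T
\;=\;\Sigma_X T^*\Sigma_{X^*}^{\dagger}\Sigma_{X^*}\Sigma_{X^*}^{\dagger}T^*\Sigma_X
\;=\;\Sigma_X T^*\Sigma_{X^*}^{\dagger}T^*\Sigma_X,
\end{equation*}
using $\Sigma_{X^*}^{\dagger}\Sigma_{X^*}\Sigma_{X^*}^{\dagger}=\Sigma_{X^*}^{\dagger}$. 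Then $\Sigma_{W_0}=\Sigma_X-\Sigma_X T^*\Sigma_{X^*}^{\dagger}T^*\Sigma_X$, which after factoring $\Sigma_X^{1/2}$ on each side equals the $\Sigma_W$ stated in Theorem~\ref{thm:Gaussian1}. So $W_0$ and $W$ have the same law, and the formula \eqref{APP_eq::X_P_Gauss:interpolation_not_unique} in Theorem~\ref{thm:Gaussian_not_unique} becomes exactly \eqref{APP_eq::X_P_Gauss:interpolation}.

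The main (and only real) obstacle is verifying that this $\Sigma_{W_0}$ is positive semidefinite, which \eqref{APP_eq:Thm_Gaussian_General::M_cond2-2-1-1} demands. Equivalently, I must show $\Sigma_X^{1/2}T^*\Sigma_{X^*}^{\dagger}T^*\Sigma_X^{1/2}\preceq I$. The clean way to see this is to recognize that this matrix is an orthogonal projection: it is symmetric (since $T^*$ and $\Sigma_{X^*}^{\dagger}$ are symmetric), and using the pushforward identity $T^*\Sigma_X T^*=\Sigma_{X^*}$ (which follows from $\Sigma_X^{1/2}T^*\Sigma_X^{1/2}=(\Sigma_X^{1/2}\Sigma_{X^*}\Sigma_X^{1/2})^{1/2}$) together with $\Sigma_{X^*}^{\dagger}\Sigma_{X^*}\Sigma_{X^*}^{\dagger}=\Sigma_{X^*}^{\dagger}$, it squares to itself:
\begin{equation*}
\bigl(\Sigma_X^{1/2}T^*\Sigma_{X^*}^{\dagger}T^*\Sigma_X^{1/2}\bigr)^2
\;=\;\Sigma_X^{1/2}T^*\Sigma_{X^*}^{\dagger}(T^*\Sigma_X T^*)\Sigma_{X^*}^{\dagger}T^*\Sigma_X^{1/2}
\;=\;\Sigma_X^{1/2}T^*\Sigma_{X^*}^{\dagger}T^*\Sigma_X^{1/2}.
\end{equation*}
Any orthogonal projection is bounded above by $I$, giving $\Sigma_W\succeq 0$ as required.

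Once these three checks are in place, Theorem~\ref{thm:Gaussian_not_unique} immediately yields that the estimator \eqref{APP_eq::X_P_Gauss:interpolation} achieves perception index $P$ and MSE $D(P)$, completing the proof. As a small bonus, when $\Sigma_{X^*}\succ 0$ one has $\Sigma_{X^*}^{\dagger}=\Sigma_{X^*}^{-1}$ and $T^*\Sigma_X T^*=\Sigma_{X^*}$ gives $\Sigma_X T^*\Sigma_{X^*}^{-1}T^*\Sigma_X=\Sigma_X$, so $\Sigma_W=0$ and \eqref{APP_eq::X_P_Gauss:interpolation} collapses to the deterministic formula \eqref{eq:GaussXpInvertible}, as claimed in the discussion following Theorem~\ref{thm:Gaussian1}.
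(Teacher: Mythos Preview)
Your proposal is correct and follows the same overall strategy as the paper: identify $\Sigma_{\hat X_0 Y}=\Sigma_X T^*\Sigma_{X^*}^{\dagger}\Sigma_{XY}$, verify \eqref{APP_eq:Thm_Gaussian_General::M_cond1-2-1-1} via Lemma~\ref{Lemma:: ker sigma* in ker Tp*}, compute $\Sigma_{\hat X_0 Y}\Sigma_Y^{-1}\Sigma_{\hat X_0 Y}^T$, and check that the resulting $\Sigma_{W_0}$ matches $\Sigma_W$ and is PSD. The one genuine difference is in the PSD step. The paper proves $\Sigma_W\succeq 0$ by a Schur-complement swap: it first computes $\Sigma_Y-\Sigma_{Y\hat X_0}\Sigma_X^{-1}\Sigma_{\hat X_0 Y}=\Sigma_Y-\Sigma_{YX}\Sigma_{X^*}^{\dagger}\Sigma_{XY}=\Sigma_{Y|X^*}\succeq 0$, deduces that the block matrix $\begin{bmatrix}\Sigma_X&\Sigma_{\hat X_0 Y}\\ \Sigma_{Y\hat X_0}&\Sigma_Y\end{bmatrix}$ is PSD, and then reads off the other Schur complement $\Sigma_W=\Sigma_X-\Sigma_{\hat X_0 Y}\Sigma_Y^{-1}\Sigma_{\hat X_0 Y}^T\succeq 0$. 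Your argument is more direct: you show that $\Sigma_X^{1/2}T^*\Sigma_{X^*}^{\dagger}T^*\Sigma_X^{1/2}$ is a symmetric idempotent (using $T^*\Sigma_X T^*=\Sigma_{X^*}$ and $\Sigma_{X^*}^{\dagger}\Sigma_{X^*}\Sigma_{X^*}^{\dagger}=\Sigma_{X^*}^{\dagger}$), hence an orthogonal projection bounded by $I$. Your route is shorter and avoids the Schur-complement lemma entirely; the paper's route has the minor advantage of exposing the probabilistic interpretation $\Sigma_{Y|X^*}$, but both are equally valid.
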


	\begin{proof} 
	We observe that \eqref{APP_eq::X_P_Gauss:interpolation} is a special case of \eqref{APP_eq::X_P_Gauss:interpolation_not_unique}, where $\Sigma_{\hat X_0 Y}=\Sigma_{ Y \hat   X_0}^T = \Sigma_{X}^{\frac{1}{2}}\left(\Sigma_{X}^{\frac{1}{2}}\Sigma_{X^*}\Sigma_{X}^{\frac{1}{2}}\right)^{\frac{1}{2}}\Sigma_{X}^{-\frac{1}{2}}\Sigma_{X^*}^{\dagger}\Sigma_{XY}$. We now show that $\Sigma_{\hat X_0 Y}$ has the desired properties \eqref{APP_eq:Thm_Gaussian_General::M_cond1-2-1-1}-\eqref{APP_eq:Thm_Gaussian_General::M_cond2-2-1-1}. By substitution,
	\begin{align*}
	\Sigma_{\hat X_0 Y}\Sigma_{Y}^{-1}\Sigma_{YX} & = \Sigma_{X}^{\frac{1}{2}}\left(\Sigma_{X}^{\frac{1}{2}}\Sigma_{X^*}\Sigma_{X}^{\frac{1}{2}}\right)^{\frac{1}{2}}\Sigma_{X}^{-\frac{1}{2}}\Sigma_{X^*}^{\dagger}\left( \Sigma_{XY} \Sigma_{Y}^{-1}\Sigma_{YX} \right)
	\\ & = \Sigma_{X}^{\frac{1}{2}}\left(\Sigma_{X}^{\frac{1}{2}}\Sigma_{X^*}\Sigma_{X}^{\frac{1}{2}}\right)^{\frac{1}{2}}\Sigma_{X}^{-\frac{1}{2}}\Sigma_{X^*}^{\dagger}\Sigma_{X^*} \\ & =\Sigma_{X}^{\frac{1}{2}}\left(\Sigma_{X}^{\frac{1}{2}}\Sigma_{X^*}\Sigma_{X}^{\frac{1}{2}}\right)^{\frac{1}{2}}\Sigma_{X}^{-\frac{1}{2}}.		
	\end{align*}
	The last equality is due to Lemma \ref{Lemma:: ker sigma* in ker Tp*}.
	
	Recall $\Sigma_{X^*}^{\dagger}\Sigma_{X^*}\Sigma_{X^*}^{\dagger}=\Sigma_{X^*}^{\dagger}$, and we denote $T^*=\Sigma_{X}^{-\frac{1}{2}}\left(\Sigma_{X}^{\frac{1}{2}}\Sigma_{X^*}\Sigma_{X}^{\frac{1}{2}}\right)^{\frac{1}{2}}\Sigma_{X}^{-\frac{1}{2}}$. We now have
	\begin{align*}
	\Sigma_{Y\hat X_0}\Sigma_{X}^{-1}\Sigma_{ \hat X_0 Y} & = \Sigma_{YX}\Sigma_{X^*}^{\dagger}T^*\Sigma_{X}\Sigma_X^{-1} \Sigma_{X} T^*\Sigma_{X^*}^{\dagger} \Sigma_{XY} 
	\\ & = \Sigma_{YX}\Sigma_{X^*}^{\dagger}\Sigma_{X}^{-\frac{1}{2}} (\Sigma_{X}^{\frac{1}{2}} \Sigma_{X^*} \Sigma_{X}^{\frac{1}{2}})\Sigma_{X}^{-\frac{1}{2}} \Sigma_{X^*}^{\dagger} \Sigma_{XY} 
	\\ &= 
	 \Sigma_{YX}\Sigma_{X^*}^{\dagger} \Sigma_{X^*}  \Sigma_{X^*}^{\dagger} \Sigma_{XY}
	\\ &=
	\Sigma_{YX}\Sigma_{X^*}^{\dagger}\Sigma_{XY},
	\end{align*}
	hence
	\begin{equation} \label{APP_eq:: Sw schur}
	\Sigma_Y -\Sigma_{Y\hat X_0}\Sigma_{X}^{-1}\Sigma_{ \hat X_0 Y} =  \Sigma_Y -\Sigma_{YX}\Sigma_{X^*}^{\dagger}\Sigma_{XY} = \Sigma_{Y|X^*} \succeq 0.
	\end{equation} 
	Since $\Sigma_X, \Sigma_Y \succ 0$, \eqref{APP_eq:: Sw schur} is Schur's complement of $\begin{bmatrix} \Sigma_X &\Sigma_{ \hat X_0 Y}\\ \Sigma_{Y \hat X_0 }&\Sigma_Y \end{bmatrix} \succeq 0 $, yielding
	\begin{equation} \label{APP_eq:: Sw PSD}
	\Sigma_{W}=\Sigma_{X}-\Sigma_{\hat X_0 Y}\Sigma_{Y}^{-1}\Sigma_{\hat X_0 Y}^{T}\succeq 0.
	\end{equation} 
\end{proof}

	\begin{cor}[Non-singular special case]
		\label{cor::sigma* invertible}  In the case where $\Sigma_{X^*}$
			is invertible, $\Sigma_{\hat X_0 Y}=\Sigma_{X}T^{*}\Sigma_{X^*}^{-1}\Sigma_{XY}$ in the proof of Theorem \ref{thm:Gaussian1},
			and it is easy to see that the noise covariance is $\Sigma_{W}=0$. In this case $\Sigma_{\hat X_0 Y}$ is the unique solution to \eqref{APP_eq:Thm_Gaussian_General::M_cond1-2-1-1}-\eqref{APP_eq:Thm_Gaussian_General::M_cond2-2-1-1}. This means that $\hat{X_0}$ (hence $\hat{X_P}$) is a deterministic function of $Y$.
	\end{cor}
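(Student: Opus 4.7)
The plan is to verify in order the three assertions of the corollary: the closed form for $\Sigma_{\hat X_0 Y}$ under invertibility of $\Sigma_{X^*}$, the identity $\Sigma_W=0$, and the resulting uniqueness of $\Sigma_{\hat X_0 Y}$ among solutions of \eqref{APP_eq:Thm_Gaussian_General::M_cond1-2-1-1}--\eqref{APP_eq:Thm_Gaussian_General::M_cond2-2-1-1}. All the algebraic tools needed are already collected in the proof of Theorem~\ref{thm:Gaussian1}; here one simply specializes them using $\Sigma_{X^*}^{\dagger}=\Sigma_{X^*}^{-1}$.

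First I would replace $\Sigma_{X^*}^{\dagger}$ by $\Sigma_{X^*}^{-1}$ in the expression $\Sigma_{\hat X_0 Y}=\Sigma_X^{1/2}(\Sigma_X^{1/2}\Sigma_{X^*}\Sigma_X^{1/2})^{1/2}\Sigma_X^{-1/2}\Sigma_{X^*}^{\dagger}\Sigma_{XY}$ used in the proof of Theorem~\ref{thm:Gaussian1}, and recognize $\Sigma_X^{1/2}(\Sigma_X^{1/2}\Sigma_{X^*}\Sigma_X^{1/2})^{1/2}\Sigma_X^{-1/2}=\Sigma_X T^{*}$, which gives the announced formula $\Sigma_{\hat X_0 Y}=\Sigma_X T^{*}\Sigma_{X^*}^{-1}\Sigma_{XY}$.

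Next, to show $\Sigma_W=0$, I would plug this formula into $\Sigma_W=\Sigma_X-\Sigma_{\hat X_0 Y}\Sigma_Y^{-1}\Sigma_{\hat X_0 Y}^T$ and use $\Sigma_{XY}\Sigma_Y^{-1}\Sigma_{YX}=\Sigma_{X^*}$ (from \eqref{eq:SigmaXstar}) to collapse the middle factor, reducing the product to $\Sigma_X T^{*}\Sigma_{X^*}^{-1} T^{*}\Sigma_X$. The key algebraic step is the identity $T^{*}\Sigma_{X^*}^{-1}T^{*}=\Sigma_X^{-1}$, obtained by inverting $T^{*}\Sigma_X T^{*}=\Sigma_{X^*}$ (which uses the fact that $\Sigma_{X^*}\succ 0$ makes $T^{*}$ invertible). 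With this the product collapses to $\Sigma_X$, so $\Sigma_W=0$.

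For uniqueness, since $\Sigma_W=0$ forces the Gaussian noise $W_0$ of Theorem~\ref{thm:Gaussian_not_unique} to vanish almost surely, any $\hat X_0$ of the prescribed form is the linear function $MY$ with $M=\Sigma_{\hat X_0 Y}\Sigma_Y^{-1}$, is jointly Gaussian with $X^{*}$ and $\mathcal{N}(0,\Sigma_X)$-distributed, and by the proof of Theorem~\ref{thm:Gaussian_not_unique} attains $\EEd{\hat X_0}{X^{*}}=(G^{*})^{2}$. Because both $p_X$ and $p_{X^{*}}$ are non-singular Gaussians on $\R^{n_x}$, the Wasserstein-$2$ optimal plan between them is unique and is the Monge map $T^{*-1}$; hence $\hat X_0=T^{*-1}X^{*}=T^{*-1}\Sigma_{XY}\Sigma_Y^{-1}Y$ almost surely, which, combined with $\Sigma_Y\succ 0$, pins down $M$ and in turn $\Sigma_{\hat X_0 Y}=T^{*-1}\Sigma_{XY}$. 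Consistency with step one follows from $\Sigma_X T^{*}\Sigma_{X^*}^{-1}=T^{*-1}$, a direct consequence of $T^{*}\Sigma_X T^{*}=\Sigma_{X^*}$. Once $\hat X_0$ is a deterministic linear function of $Y$, the interpolation formula \eqref{APP_eq::X_P_Gauss:interpolation_not_unique} immediately makes $\hat X_P$ a deterministic linear function of $Y$ for every $P\in[0,G^{*}]$.

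The step I expect to be trickiest is the uniqueness argument: the constraint $\Sigma_{\hat X_0 Y}\Sigma_Y^{-1}\Sigma_{YX}=\Sigma_X T^{*}$ alone is an $n_x\times n_x$ condition on an $n_x\times n_y$ unknown and is therefore underdetermined when $n_y>n_x$; uniqueness is recovered only by combining it with $\Sigma_W=0$, and the cleanest route is to translate both constraints into a statement about the joint law $(\hat X_0,X^{*})$ and invoke uniqueness of the Monge map in the non-singular Gaussian case. The remaining computations are routine bookkeeping with $T^{*}$ and $\Sigma_{X^*}$.
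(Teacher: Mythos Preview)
Your derivation of the explicit form of $\Sigma_{\hat X_0 Y}$ and of $\Sigma_W=0$ is correct and is essentially the paper's computation, just organized around the identity $T^{*}\Sigma_{X^*}^{-1}T^{*}=\Sigma_X^{-1}$ (equivalently $T^{*}\Sigma_X T^{*}=\Sigma_{X^*}$) rather than expanding $T^{*}$ in full; the algebra is the same.

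For uniqueness, you take a genuinely different route from the paper. The paper argues purely linearly: writing any other solution as $M=M_P-M_\Delta$, condition~\eqref{APP_eq:Thm_Gaussian_General::M_cond1-2-1-1} gives $M_\Delta\Sigma_Y^{-1}\Sigma_{YX}=0$, which kills the cross terms in the expansion of $\Sigma_X-M\Sigma_Y^{-1}M^T$; together with $\Sigma_X-M_P\Sigma_Y^{-1}M_P^T=0$ this forces $M_\Delta\Sigma_Y^{-1}M_\Delta^T\preceq 0$ from~\eqref{APP_eq:Thm_Gaussian_General::M_cond2-2-1-1}, hence $M_\Delta=0$. Your argument instead observes that any solution produces an $\hat X_0$ that is $\mathcal{N}(0,\Sigma_X)$, jointly Gaussian with $X^{*}$, and attains the optimal cost $(G^{*})^2$; since both $p_X$ and $p_{X^{*}}$ are non-singular, Brenier's theorem forces the unique optimal coupling to be the Monge map, so $\hat X_0=T^{*-1}X^{*}$ a.s., which then pins down $\Sigma_{\hat X_0 Y}$ and forces $W_0=0$. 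Both arguments are valid; the paper's is self-contained linear algebra, while yours is more conceptual and ties the uniqueness directly to the uniqueness of the optimal transport plan.

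One phrasing issue to fix: your uniqueness paragraph opens with ``since $\Sigma_W=0$ forces the Gaussian noise $W_0$\ldots\ to vanish almost surely, any $\hat X_0$ of the prescribed form is the linear function $MY$.'' This is circular as written: you proved $\Sigma_W=0$ only for the specific choice $M_P=\Sigma_X T^{*}\Sigma_{X^*}^{-1}\Sigma_{XY}$, not for an arbitrary solution of~\eqref{APP_eq:Thm_Gaussian_General::M_cond1-2-1-1}--\eqref{APP_eq:Thm_Gaussian_General::M_cond2-2-1-1}, so you cannot assume a priori that a competing solution has zero noise. Fortunately your argument does not actually need this: the Monge-uniqueness step works for $\hat X_0=\Sigma_{\hat X_0 Y}\Sigma_Y^{-1}Y+W_0$ with possibly nonzero $W_0$, and it is the conclusion $\hat X_0=T^{*-1}X^{*}$ a.s.\ (a measurable function of $Y$) that \emph{a posteriori} forces $W_0=0$ and determines $\Sigma_{\hat X_0 Y}$. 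Simply drop or rephrase that opening clause and the proof is clean.
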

	\begin{proof}
	   		We first show $\Sigma_{W}=0$. Let $M_P=\Sigma_{\hat X_0 Y}=\Sigma_{X}T^{*}\Sigma_{X^*}^{-1}\Sigma_{XY}$, then
		\begin{align*}
		\Sigma_{W} &=\Sigma_{X}-M_{P}\Sigma_{Y}^{-1}M_{P}^{T}
\\ &
		= \Sigma_{X}-\Sigma_{X}T^{*}\Sigma_{X^*}^{-1}\Sigma_{XY}\Sigma_{Y}^{-1}\Sigma_{YX}\Sigma_{X^*}^{-1}T^{*}\Sigma_{X}
	\\ & 
		=\Sigma_{X}-\Sigma_{X}\Sigma_{X}^{-1/2}(\Sigma_{X}^{1/2}\Sigma_{X^*}\Sigma_{X}^{1/2})^{1/2}\underbrace{(\Sigma_{X}^{-1/2}\Sigma_{X^*}^{-1}\Sigma_{X}^{-1/2})}_{=(\Sigma_{X}^{1/2}\Sigma_{X^*}\Sigma_{X}^{1/2})^{-1}}(\Sigma_{X}^{1/2}\Sigma_{X^*}\Sigma_{X}^{1/2})^{1/2}\Sigma_{X}^{-1/2}\Sigma_{X}
		 \\& =\Sigma_{X}-\Sigma_{X}\Sigma_{X}^{-1/2}\Sigma_{X}^{-1/2}\Sigma_{X}=0.
		 \end{align*}
		Now, assume $M$ is a solution to \eqref{APP_eq:Thm_Gaussian_General::M_cond1-2-1-1}-\eqref{APP_eq:Thm_Gaussian_General::M_cond2-2-1-1},
		then $M_{\Delta}=M_{P}-M$ satisfies $M_{\Delta}\Sigma_{Y}^{-1}\Sigma_{YX}=0$
		and
		\begin{flalign*}
		&\Sigma_{X}-M\Sigma_{Y}^{-1}M^{T} =& \\ &\Sigma_{X}-[M_{P}\Sigma_{Y}^{-1}M_{P}^{T}+M_{\Delta}\Sigma_{Y}^{-1}M_{\Delta}^{T}-M_{\Delta}\Sigma_{Y}^{-1}M_{P}^{T}-M_{P}\Sigma_{Y}^{-1}M_{\Delta}^{T}]\succeq0.&
		\end{flalign*}
		But, $M_{\Delta}\Sigma_{Y}^{-1}M_{P}^{T}=(M_{\Delta}\Sigma_{Y}^{-1}\Sigma_{YX})\Sigma_{X^*}^{-1}T^{*}\Sigma_{X}=0$
		and $\Sigma_{X}-M_{P}\Sigma_{Y}^{-1}M_{P}^{T}=0$, yielding $M_{\Delta}\Sigma_{Y}^{-1}M_{\Delta}^{T}\preceq0$.
		Since $M_{\Delta}\Sigma_{Y}^{-1}M_{\Delta}^{T}$
		is PSD and $\Sigma_{Y}^{-1}$ is PD, we conclude that $M_{\Delta}=0$.
	\end{proof}

	\section{Settings with commuting covariances}
\label{appsec::commute}
In many practical problems, covariance matrices may have the commutative relation $\Sigma_{X}\Sigma_{X^*}=\Sigma_{X^*}\Sigma_{X}$. This is the case, for example, of circulant or large Toeplitz matrices \citep{gray2006toeplitz}. For natural images this is a reasonable assumption since shift-invariance induces diagonalization by the Fourier basis \citep{unser1984approximation}. 

In the Gaussian settings of Sec.~\ref{Sec::GaussianSetting}, where $\Sigma_{X},\Sigma_{X^*}$ commute it is easy to see that the Gelbrich distance between them can be written as \[G^*=G((\mu_X,\Sigma_{X}),(\mu_{X^*},\Sigma_{X^*}))=\|\Sigma_{X}^{1/2}-\Sigma_{X^*}^{1/2} \|_F.\] $\|A\|_F = \sqrt{\tr{A^TA}}$ is the Frobenius norm. This is due to the fact that $\Sigma_{X}^{1/2},\Sigma_{X^*}^{1/2}$ also commute. In order to achieve $\EEb{\| \hat X_0 - X^* \|^2}=(G^*)^2$, an optimal perfect perception quality estimator has to satisfy  \eqref{APP:x0x*corr} which now takes the form 
\[\EEb{X^* \hat X_0^T} = \Sigma_{X}^{1/2}\Sigma_{X^*}^{1/2}.\] 
It is easy to see that estimators obtained by $\hat X_0 ,X^*$ using \eqref{eq:hat_X::extrapolated} are Gaussian with zero mean and covariance $\Sigma_P$, given by
\begin{equation}
\Sigma_P^{\frac{1}{2}}=\left(1-\frac{P}{G^*}\right)\Sigma_X^{\frac{1}{2}} +\frac{P}{G^*}\Sigma_{X^*}^{\frac{1}{2}}.
\end{equation}
Pay attention that since the roots commute, $\Sigma_P$ commmutes with $\Sigma_{X},\Sigma_{X^*}$, and
		\[\|\Sigma_X^{\frac{1}{2}} - \Sigma_P^{\frac{1}{2}}\|_F = P, \quad \|\Sigma_P^{\frac{1}{2}} - \Sigma_{X^*}^{\frac{1}{2}}\|_F = G^*-P. \]
		
		This further reduces the geometry of the problem to the $l^{2}$-distance between commuting matrices.

\section{Numerical illustration}
\subsection{Simulation details}

For each algorithm, we acquire $100$ RGB images which are reconstructions of BSD100 images. We extract $9 \times 9$ patches from the RGB images, and then estimate:
\[ m_\text{Alg} = \frac{1}{N_\text{patches}} \sum_i p_i, \quad \Sigma_\text{Alg} =\frac{1}{N_\text{patches}-1} (p_i -m_\text{Alg})(p_i-m_\text{Alg})^T, \]
where $p_i$ is the $i$-th patch (a $243$-row vector, $N_\text{patches}=1,632,800$).
We compute using \eqref{eq:Gelbrich_dist}
\[ \mathrm{MSE}_\text{Alg} = \frac{1}{243 \times N_\text{patches}} \sum_i \|p^{\text{Alg}}_i - p^{\text{BSD100}}_i\|^2, \quad P_\text{Alg} = \sqrt{\frac{1}{243}} G\left((m_\text{BSD100},\Sigma_\text{BSD100}) ,(m_\text{Alg},\Sigma_\text{Alg}) \right).\]

The estimators $\hat {X}_t$ are constructed using per-pixel interpolation between EDSR and ESRGAN 
	\[ \hat X_t = tX_{\text{EDSR}} + (1-t)X_{\text{ESRGAN}}.\]

\subsection{Visual illustration}

Here we present a visual comparison between SR methods and our constructed estimators, achieving roughly the same MSE but with a lower perception index. We also present EDSR, ESRGAN, the low-resolution input, and the ground-truth BSD100 images.

\begin{figure}
		\begin{centering}
			\includegraphics[bb=250bp 23bp 850bp 750bp,clip,width = 0.95\linewidth]{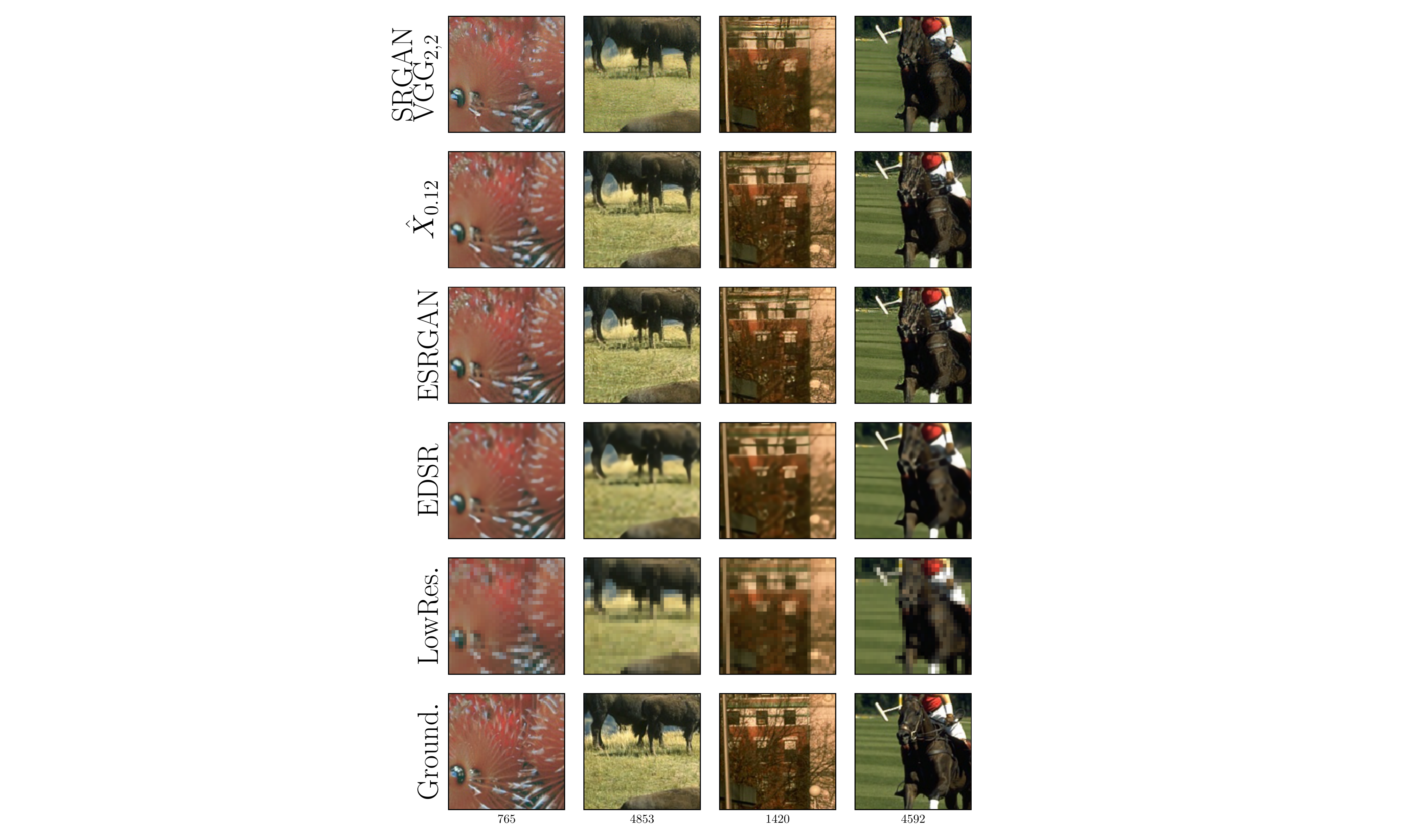}
			\par\end{centering}
	\centering{}\caption{\label{appfig:SrganVS12_full} A visual comparison between  SRGAN-VGG$_{2,2}$ (RMSE: $18.09$, P: $5.03$), and $\hat X_{0.12}$ ($18.15$, $2.48$). }
	\end{figure}
	
\begin{figure}
		\begin{centering}
			\includegraphics[bb=250bp 23bp 850bp 750bp,clip,width = 0.95\linewidth]{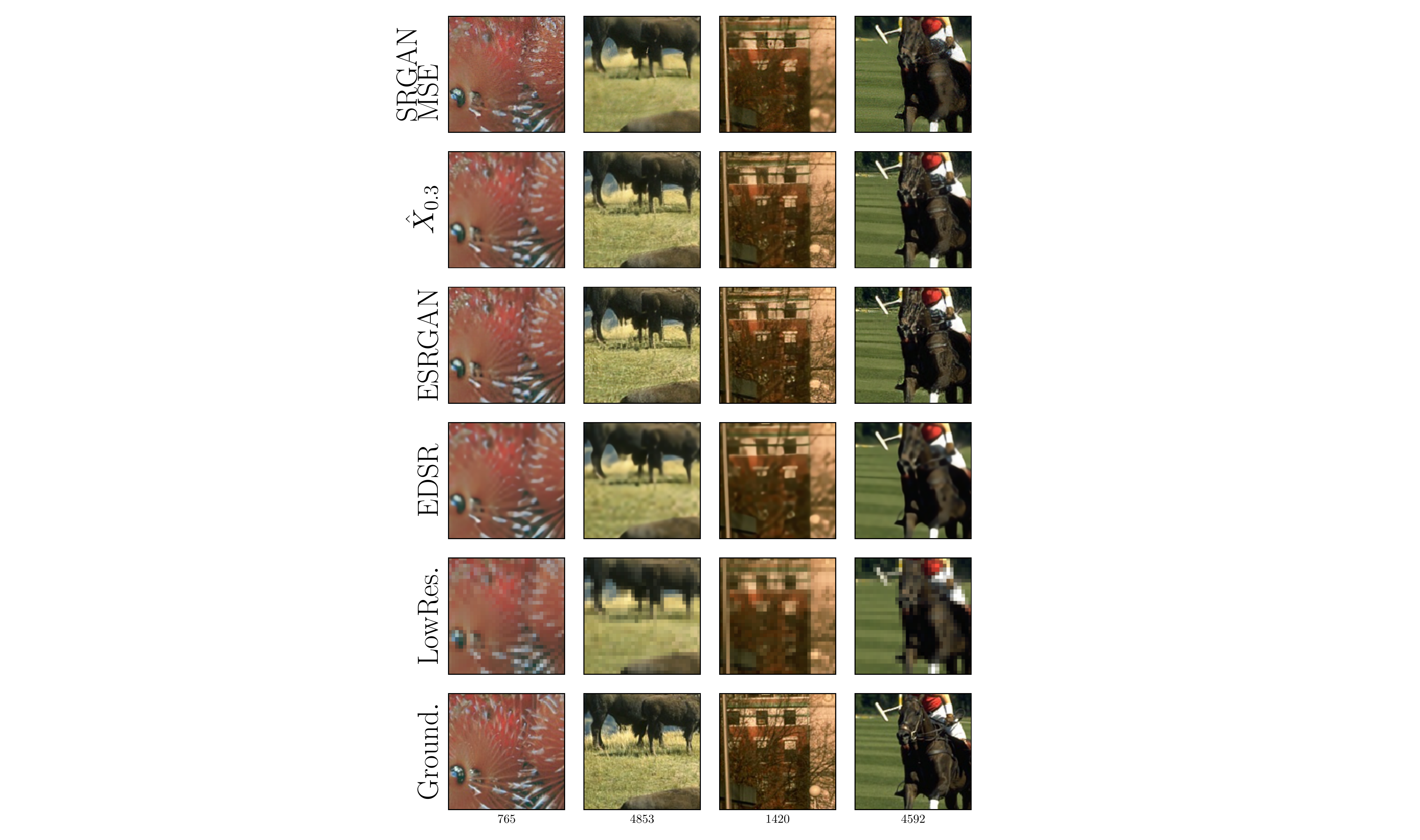}
			\par\end{centering}
	\centering{}\caption{\label{appfig:SrganVS31_full} A visual comparison between  SRGAN-MSE (RMSE: $16.94$, P: $5.88$), and $\hat X_{0.3}$ ($16.82$, $4.15$). }
	\end{figure}

%\bibliography{references}
%\bibliographystyle{plainnat}

\end{document}